\documentclass[reprint,amsfonts, amssymb, amsmath,  showkeys,prx, superscriptaddress, twocolumn,longbibliography,nofootinbib,dvipsnames]{revtex4-2}

\usepackage{enumerate}
\usepackage{amsmath}
\usepackage{graphicx}
\usepackage{comment}

\usepackage{tikz} 
\usepackage{etoolbox}

\usepackage{dcolumn}
\usepackage{bm,bbm}
\usepackage{physics}
\usepackage{amsmath}
\usepackage{amssymb}
\usepackage{amsthm}
\usepackage{amsfonts}
\usepackage{MnSymbol}
\usepackage[colorlinks=true,citecolor=blue,linkcolor=magenta]{hyperref}
\usepackage{tikz}

\graphicspath{{Images/}}

\usepackage[export]{adjustbox}

\usepackage{listings}
\newcommand{\code}[1]{\texttt{#1}}

\usepackage{algorithm}
\usepackage{algorithmicx}
\usepackage{algpseudocode}

\newtheorem{theorem}{Theorem}
 
\newtheorem{lemma}{Lemma}

\newtheorem{proposition}{Proposition}

\newtheorem{example}{Example}
\newtheorem{result}{Result}

\definecolor{keywordblue}{RGB}{0,0,180}
\lstdefinelanguage{Pseudocode}{
  keywords={for, remove, insert, apply, truncate, overlap
  },
  sensitive=false,
  comment=[l]{\#},
  keywordstyle=\color{keywordblue}\bfseries,
  commentstyle=\color{gray}\ttfamily\itshape
}

\newcommand{\Ebb}{\mathbb{E}}

\newcommand{\EC}{\mathcal{E}}

\newcommand{\FC}{\mathcal{F}}

\newcommand{\PC}{\mathcal{P}}

\newcommand{\RC}{\mathcal{R}}

\newcommand{\Var}{{\rm Var}}
\newcommand{\MSE}{{\rm MSE}}
\newcommand{\STD}{{\rm STD}}
\newcommand{\Covar}{{\rm Covar}}

\renewcommand{\geq}{\geqslant}
\renewcommand{\leq}{\leqslant}

\renewcommand{\vec}[1]{\boldsymbol{#1}}

\newtheorem{protocol}{Protocol}

\newcommand{\thv}{\vec{\theta}}

\graphicspath{{Images/}}

\newcommand{\be}{\begin{equation}}
\newcommand{\ee}{\end{equation}}

\renewcommand{\vec}[1]{\boldsymbol{#1}}

\usepackage{mathtools}

\usepackage{amssymb}

\newcommand{\minrelcoeff}{\ensuremath{\epsilon_{\text{rel}}}}

\newcommand{\maxsize}{\ensuremath{M}}

\newcommand{\ppjl}{\code{\mbox{PauliPropagation.jl}}}
\newcommand{\ppjllink}{\href{https://github.com/MSRudolph/PauliPropagation.jl}{\mbox{\ppjl}}}

\usepackage[normalem]{ulem}

\usepackage{cancel}

\newcommand{\sign}[1]{\text{sign}(#1)}

\makeatletter
\let\sign\undefined
\def\sign{\operatorname{sgn}\@ifnextchar\bgroup{\sign@with}{\relax}}
\def\sign@with#1{\left(#1\right)}
\makeatother

\makeatletter
\renewcommand\onecolumngrid{
\do@columngrid{one}{\@ne}
\def\set@footnotewidth{\onecolumngrid}
\def\footnoterule{\kern-6pt\hrule width 1.5in\kern6pt}
}
\renewcommand\twocolumngrid{
        \def\footnoterule{
        \dimen@\skip\footins\divide\dimen@\thr@@
        \kern-\dimen@\hrule width.5in\kern\dimen@}
        \do@columngrid{mlt}{\tw@}
}
\makeatother

\usepackage[framemethod=TikZ]{mdframed}
\usepackage{tcolorbox}
\usepackage{xcolor} 

\usepackage{color, colortbl}

\definecolor{lightgray}{gray}{0.9}

\definecolor{theorycolour}{RGB}{148, 156, 168} 
\definecolor{softwarecolour}{RGB}{50, 136, 140}
\definecolor{datacolour}{RGB}{101, 143, 226} 
\definecolor{conventioncolour}{RGB}{194, 97, 186}
\definecolor{glossarycolour}{RGB}{255, 205, 105}

\newtcolorbox[auto counter]{pbbox}[2][]{fonttitle=\bfseries,
title=Theory Box~\thetcbcounter: #2,#1,colframe=gray}

\newtcolorbox[use counter from=pbbox]{theorybox}[2][]{
floatplacement=t,float,
colback=theorycolour!5!white,colframe=theorycolour!75!black,title= Theory Box~\thetcbcounter: #2,#1}

\newtcolorbox[auto counter]{pabox}[2][]{fonttitle=\bfseries,
title=Software Box~\thetcbcounter: #2,#1,colframe=gray}

\newtcolorbox[use counter from=pabox]{softwarebox}[2][]{
floatplacement=t,float,
colback=softwarecolour!5!white,colframe=softwarecolour!75!black,title= Software Box~\thetcbcounter: #2,#1}

\newtcolorbox[auto counter]{glossarybox}[2][]{
floatplacement=t,float,
colback=glossarycolour!5!white,colframe=glossarycolour!90!black,title= Glossary Box~\thetcbcounter: #2,#1}

\newtcolorbox[use counter from=pbbox]{theoryboxlarge}[2][]{
  float*=t, % Use starred float option to span full width in two-column layouts
  colback=theorycolour!5!white,
  colframe=theorycolour!75!black,
  title= Theory Box~\thetcbcounter: #2,#1,
  width=\textwidth, % force the content to use the full text width
}

\newtcolorbox[auto counter]{pcbox}[2][]{fonttitle=\bfseries,
title=Software Box~\thetcbcounter: #2,#1,colframe=gray}

\newtcolorbox[use counter from=pcbox]{datastructbox}[2][]{
floatplacement=t,
float, 
colback=datacolour!5!white,colframe=datacolour!75!black,title= Under the Hood Box~\thetcbcounter: #2,#1}

\newtcolorbox[use counter from=pcbox]{datastructboxfull}[2][]{
width=\textwidth, 
colback=datacolour!5!white,colframe=datacolour!75!black,title= Under the Hood Box~\thetcbcounter: #2,#1}

\newtcolorbox[auto counter]{conventionbox}[2][]{
floatplacement=t,float,
colback=conventioncolour!5!white,colframe=conventioncolour!85!black,title= Convention Box~\thetcbcounter: #2,#1}

\usepackage[normalem]{ulem}

\begin{document}

\title{Pushing beyond the memory barrier in propagation algorithms with randomization}

\date{\today}

\author{Manuel S. Rudolph}
% \email{manuel.rudolph@epfl.ch}

\affiliation{Institute of Physics, Ecole Polytechnique F\'{e}d\'{e}rale de Lausanne (EPFL),   Lausanne, Switzerland}
\affiliation{Centre for Quantum Science and Engineering, Ecole Polytechnique F\'{e}d\'{e}rale de Lausanne (EPFL),   Lausanne, Switzerland}

\author{Ricard Puig}
\affiliation{Institute of Physics, Ecole Polytechnique F\'{e}d\'{e}rale de Lausanne (EPFL),   Lausanne, Switzerland}
\affiliation{Centre for Quantum Science and Engineering, Ecole Polytechnique F\'{e}d\'{e}rale de Lausanne (EPFL),   Lausanne, Switzerland}

\author{Slimane Thabet}
\affiliation{Institute of Physics, Ecole Polytechnique F\'{e}d\'{e}rale de Lausanne (EPFL),   Lausanne, Switzerland}
\affiliation{Centre for Quantum Science and Engineering, Ecole Polytechnique F\'{e}d\'{e}rale de Lausanne (EPFL),   Lausanne, Switzerland}

\author{Zo\"{e} Holmes}
\affiliation{Institute of Physics, Ecole Polytechnique F\'{e}d\'{e}rale de Lausanne (EPFL),   Lausanne, Switzerland}
\affiliation{Centre for Quantum Science and Engineering, Ecole Polytechnique F\'{e}d\'{e}rale de Lausanne (EPFL),   Lausanne, Switzerland}

\begin{abstract}
Performant classical tools for simulating quantum circuits are
important for benchmarking and supporting upcoming quantum devices. Pauli
propagation offers promising capabilities, but deep circuits usually cause a rapid
growth in the number of Pauli strings and can quickly exhaust the available memory. 
In this work, we introduce a randomized protocol that combines a specialized resampling schedule, merging, and a controlled amount of bias to keep the propagating Pauli sum within any prescribed memory budget. Additional computational resources can then be spent on embarrassingly parallel repetitions that systematically improve the estimate.
Building
on the open-source library \ppjl{}, we show that 1) increasing memory of each simulation reduces the error systematically more than outside averaging, 2) resampling just enough to apply the next gate delays variance accumulation and gives better error-per-time
trade-offs than more aggressive resampling, and 3) a small, controlled bias is crucial in cutting down variance and pushing far beyond unbiased randomization. The resulting framework can reproduce deterministic calculations while requiring several orders of magnitude less memory.
\end{abstract}

\maketitle

\paragraph*{\textbf{Introduction.}}

\begin{figure}[t]
    \centering
    \includegraphics[width=\columnwidth]{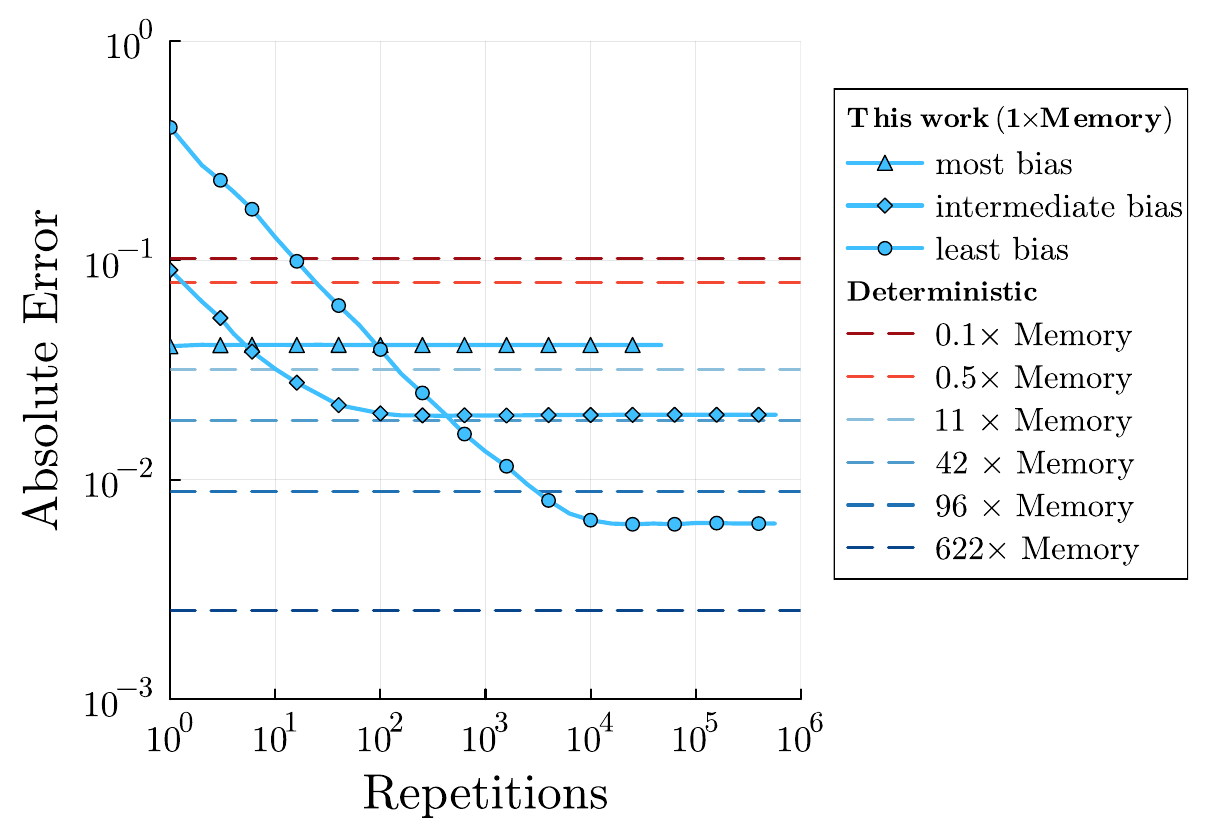}
    \caption{\textbf{Randomized propagation reproduces deterministic calculations that use up to two orders of magnitude more memory.}
    The
    horizontal dashed lines show deterministic calculations and report their
    final memory usage relative to the memory used in our randomized simulation. A carefully chosen bias allows the
    randomized calculation to reach the accuracy of deterministic calculations
    requiring one or two orders of magnitude more memory.}
    \vspace{-4mm}
    \label{fig:bias_assisted_propagation}
\end{figure}

Classical simulation is both a competitor to and an enabler of quantum computing. It provides a baseline for benchmarking devices and determining when quantum hardware offers genuine advantages~\cite{boixo2018characterizing, arute2019quantum, zhong2020quantum, liu2021closing, wu2021strong, kim2023evidence,tindall2023efficient,beguvsic2023fast,anand2023classical,rudolph2023classical, morvan2023phase, bermejo2024quantum, bluvstein2024logical, maslov2024fast, king2025beyond, mauron2025challenging, rudolph2025simulating, tindall2025dynamics, alam2025fermionic, google2025observation, alam2025programmable, haghshenas2026digital, chang2026cost, hartnett2026fast, alam2026onset, barron2026observable,andersen2026precision,andersen2025thermalization}. It also allows algorithms to be developed without spending every iteration on scarce hardware, and contributes directly to compilation~\cite{ravi2022cafqa, rudolph2022synergy, mc2023classically,tepaske2023optimal,causer2023scalable,gibbs2024deep, gibbs2026low, chai2025resource, danna2025circuit, miller2025simulation, chakraborty2026scalable, mansuroglu2025preparation, shrikhande2025rapid}, error-mitigation~\cite{fuller2025improved,suchsland2026quantum,majumder2026quantum,czarnik2020error,strikis2020learning} and error-correction~\cite{trout2018simulating, gidney2021stim, bravyi2024high, haenel2026tsim} pipelines. Performant simulators are thus part of the machinery needed to make quantum hardware useful.

Classical resources, however, are not equally negotiable. A calculation that needs another day, another thousand node-hours, or another line on the computing bill may be painful but remains possible. A calculation that needs twice the RAM of the largest available node simply does not fit. Continually increasing the memory of each simulation introduces communication overheads that may eat up a significant amount of the hoped benefits~\cite{broers2025scalable}. This distinction is particularly sharp for modern GPUs, whose prodigious throughput is often paired with substantially less memory than a large CPU node. Time and money are (potentially) negotiable. Memory is not. 

Pauli propagation~\cite{rall2019simulation,rudolph2025pauli,fontana2023classical,rudolph2023classical,beguvsic2023simulating,beguvsic2023fast,shao2023simulating,angrisani2024classically,lerch2024efficient,beguvsic2024real,angrisani2025simulating,fuller2025improved,teng2025leveraging,rudolph2026thermal,broers2024exclusive,broers2025scalable, angrisani2026dissipation} encounters this wall particularly sharply. The method
typically evolves the Pauli basis expansion of observables by applying a circuit gate by gate. Non-Clifford gates often cause individual Pauli strings to branch, while truncation and merging of duplicate Pauli strings attempt to keep their number manageable~\cite{rudolph2025pauli,beguvsic2024real,aharonov2023polynomial,fontana2023classical,nemkov2023fourier,beguvsic2023simulating,schuster2024polynomial,angrisani2024classically,lerch2024efficient,miller2025simulation,teng2025leveraging}. Once the Pauli sum exhausts the available RAM, faster code often means only that the calculation reaches the same limit sooner.

In this work, we present a framework that uses memory as a hyperparameter rather than a barrier. Given a hard cap on the number of stored Pauli strings, our proposal is to perform state-of-the-art deterministic propagation while the evolving Pauli sum fits, resample stochastically whenever the next update would exceed the cap, and continue propagating. This turns the Pauli sum into an estimator that can continually remain within the chosen memory budget. The predicted expectation value can be an average over independent repetitions, which can be distributed across an arbitrary number of CPUs and GPUs. Additional compute thus buys statistical precision without requiring larger per-simulation memory. Furthermore, accelerating Pauli propagation now has a legitimate purpose for simulating challenging circuits because it allows collecting more outcomes in the same time. 

Fig.~\ref{fig:bias_assisted_propagation} illustrates the practical advantage of our approach. In this case, every randomized simulation is limited to
$10^5$ stored Pauli strings, and with a carefully chosen truncation threshold, it is able to approach
deterministic propagation results whose final Pauli sums contain tens to hundreds of times
more Pauli strings. Fully unbiased propagation would respect the same memory limits, but has drastically higher variance and would thus require vastly more repetitions to reach comparable errors.

Realizing this improvement over deterministic propagation is not a matter of ``just sampling''.
Naive sampling strategies merely exchange a memory problem for a (potentially worse) variance problem. For example, randomly sampling individual Pauli paths uses constant memory, but the variance grows catastrophically with circuit depth. Beyond this, there are numerous different sampling strategies one might be tempted to apply, each with several key hyperparameters. We put forward our efforts to disentangle their effects and arrive at an optimized scheme.
Crucially, we numerically observe that if \textit{any} one of our components is dropped, the resulting scheme deteriorates substantially, and we provide phenomenological theoretical explanations for this. Our protocol is implemented in the open-source package \ppjllink{}~\cite{rudolph2025pauli} (version 0.9 or higher), and is capable of gate application and resampling applied to $10^9$ Pauli strings in approximately one second (see Fig.~\ref{fig:timing}).

Beyond the exactly verifiable testbed, we showcase the capabilities and freedom of imagination in the randomized Pauli propagation framework by simulating the recent quantum experiments in Ref.~\cite{alam2026onset}. Our simulations estimating marginal collision entropies in an $8\times8$ disordered Heisenberg model hint at the true solution in the ergodic regime sitting in-between results generated by the 1D and 2D tensor network variants. We thus add further evidence that the error mitigation of the quantum results over-corrected for errors. \\

\paragraph*{\textbf{Pauli propagation.}}
Most Pauli-propagation implementations work in the Heisenberg picture, evolving
an observable represented by a Pauli sum
\begin{equation}
    O=\sum_{i=1}^{N_O}c_iP_i,
    \label{eq:pauli_sum}
\end{equation}
where $N_O$ is the number of nonzero-coefficient Pauli strings in $O$. A Pauli rotation
generated by an $n$-qubit Pauli string $G$ acts on a Pauli string $P$ as
\begin{align}
    R_G(\theta)[P]
    &:= e^{i\theta G/2} P e^{-i\theta G/2} \nonumber\\
    &=
    \begin{cases}
        P, & [P,G]=0,\\
        \cos(\theta)P + \sin(\theta)P', & [P,G]\neq 0,
    \end{cases}
    \label{eq:pauli_rotation_action}
\end{align}
where $P'=\frac{i}{2}[G,P]$. The second case maps one incoming Pauli string to
two outgoing Pauli strings. Repeated branching can therefore require exponential (in the number of Pauli rotations)
classical resources to store the resulting Pauli strings and coefficients.

While we focus on Pauli propagation, much of the framework below requires only
a sparse expansion in any given basis with efficiently computable branching rules. It
therefore carries over directly to Majorana propagation~\cite{miller2025simulation,d2025majorana}, has close analogues in
stabilizer-propagation methods~\cite{aaronson2004improved,bravyi2019simulation}, and should extend to future propagation
algorithms formulated in other bases.

Propagation algorithms can be viewed as tree-search problems and are naturally organized
along two independent axes: depth-first vs breadth-first and deterministic vs. randomized.
\textit{Deterministic depth-first} propagation independently visits every Pauli path in the propagation tree, which uses arbitrarily little memory but visits nodes redundantly and thus exhibits an undiminished exponential scaling. This has been explored in Refs.~\cite{rudolph2023classical,beguvsic2023simulating}.
Early variants of Pauli propagation~\cite{rall2019simulation} and the related
extended stabilizer simulation~\cite{bravyi2016improved} instead occupy
the \textit{randomized depth-first} quadrant, sampling the pseudo-probability distributions of gate transfer matrices. The resulting unbiased estimators use little memory and are embarrassingly parallel (and so can be efficiently accelerated on GPUs). Their variance, however, grows rapidly with circuit
depth and limits their practical reach. 

The current state-of-the-art largely
occupies the \textit{deterministic breadth-first} quadrant, using gate-by-gate evolution
with merging of identical Pauli strings, together with controlled
truncation~\cite{rudolph2025pauli,beguvsic2024real}. This approach has outgrown the capabilities of depth-first traversal  due to the significantly slowed-down exponential scaling in the runtime. However, it moved the bottleneck to memory. Even one CPU
thread can fill a cluster node in minutes or hours, rendering acceleration borderline useless for large-scale simulations due to all but the highest-end GPUs having substantially less memory than large CPU nodes.

This work establishes the remaining quadrant: \textit{randomized breadth-first} propagation. We inherit the speed-ups and memory compaction of merging Pauli paths, while using stochastic resampling to keep the number of
stored Pauli strings within a fixed memory limit. Independent runs can then be
averaged with little communication beyond their final outputs, allowing for fast parallelisation. \\

\paragraph*{\textbf{Monte Carlo Pauli propagation.}}
\begin{figure*}
    \centering
    \includegraphics[width=0.9\linewidth]{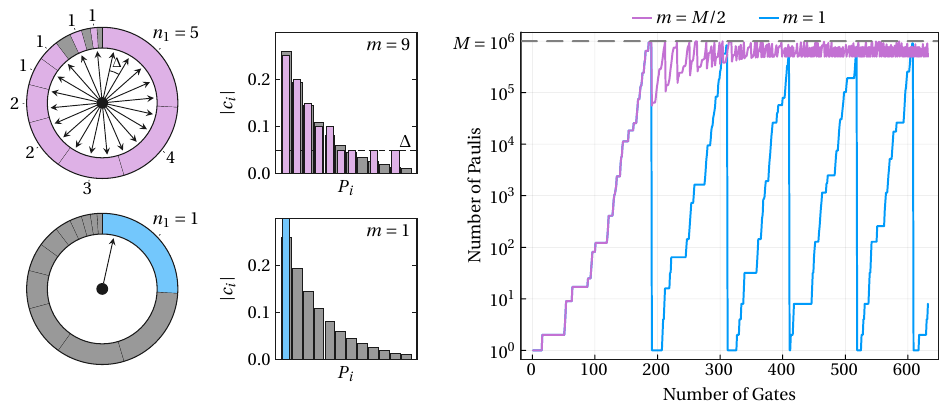}
    \caption{\textbf{Schematic depiction of comb resampling and Monte Carlo Pauli propagation.} Given a Pauli sum $O=\sum_{i=1}^{N_O}c_iP_i$ with arbitrary coefficients $c_i$, their magnitudes $|c_i|$ divided by the coefficient 1-norm $W=\sum_{i=1}^{N_O}|c_i|$ define a distribution that can be sampled to reduce the number of Pauli strings that are stored in memory from $N_O$ to approximately $m$. \textit{Comb resampling} distributes $W/\Delta$ \textit{teeth} with spacing $\Delta$ over the distribution with a random offset in $[0,\Delta)$, and the number of teeth landing within each coefficient's magnitude interval determines the number of times it is drawn with coefficient $\text{sign}(c_i)\Delta$. During propagation, once we surpass $M$ strings, we can resample to small or large sizes $m$, which results in resampling more rarely or more often, respectively. We show that resampling to large $m$ performs best.}
    \label{fig:resampling-schematic}
\end{figure*}

The central challenge for randomized propagation algorithms is to introduce sampling without merely replacing the memory barrier with prohibitive variance.
There are several randomized strategies that one could employ to this end. 
These broadly fall into two categories: \textit{path sampling} strategies which start sampling paths according to the randomized depth-first sampling strategy outlined above and breadth first \textit{resampling} strategies whereby sampling is applied to reduce the number of terms in a stored Pauli sum (see Methods Sec.~\ref{sec:sampling_strategies} for more details).  

Our numerical benchmarking and the theoretical analysis show that successful randomized propagation requires more than just picking a sampling rule. The sampling strategy, the strength of the compression (in the case of resampling strategies), merging, and coefficient truncation are strongly coupled, and apparently natural choices can produce excessive estimator variance and consequently large statistical errors at fixed memory and compute. Disentangling these effects
leads us to the following guiding principle:
\begin{center}
    \textit{Continually make the most of the available memory.}
\end{center}
Concretely, we propose the following protocol. 
\begin{protocol}[Monte Carlo Pauli propagation]
\label{prot:mcpp}
Once close to the memory limit, perform the following steps for
every subsequent gate:
\begin{enumerate}
    \renewcommand{\labelenumi}{(\roman{enumi})}
    \item Apply the gate deterministically.
    \item Merge duplicate Pauli strings.
    \item Truncate according to the chosen coefficient thresholds (introducing a controlled bias).
    \item If the Pauli sum exceeds the memory limit $M$, resample to the largest
    number of unique Pauli strings that leaves sufficient room to apply the
    next gate. There are several resampling schemes that we could use~\cite{murray2016parallel}, but we find calibrated or semi-deterministic variants of comb resampling~\cite{liu1998sequential,kitagawa1996monte} the most effective.
\end{enumerate}  
Independent passes, starting from the very beginning or the first randomized event, can then be averaged until the remaining statistical error
is sufficiently small. 
\end{protocol}

In Fig.~\ref{fig:resampling-schematic} we provide a schematic depiction of resampling to a low ($m=1$) or high ($m=M/2$) target number of unique Pauli strings in step (iv), and how the number of Pauli strings in one pass through a circuit may look accordingly. 
Importantly, the addition of the randomized step (iv) naturally extends steps (i) to (iii) on which propagation algorithms are currently built. It can therefore be combined with any existing (or future) truncation rule and merging scheme, and implemented on top of current Pauli~\cite{rudolph2025pauli,beguvsic2024real,luo2026ppvm,belagali2026propaq} or Majorana~\cite{miller2026monoprop,d2025majorana} propagation packages. 

\begin{figure*}[t]
    \centering
    \includegraphics[width=\textwidth]{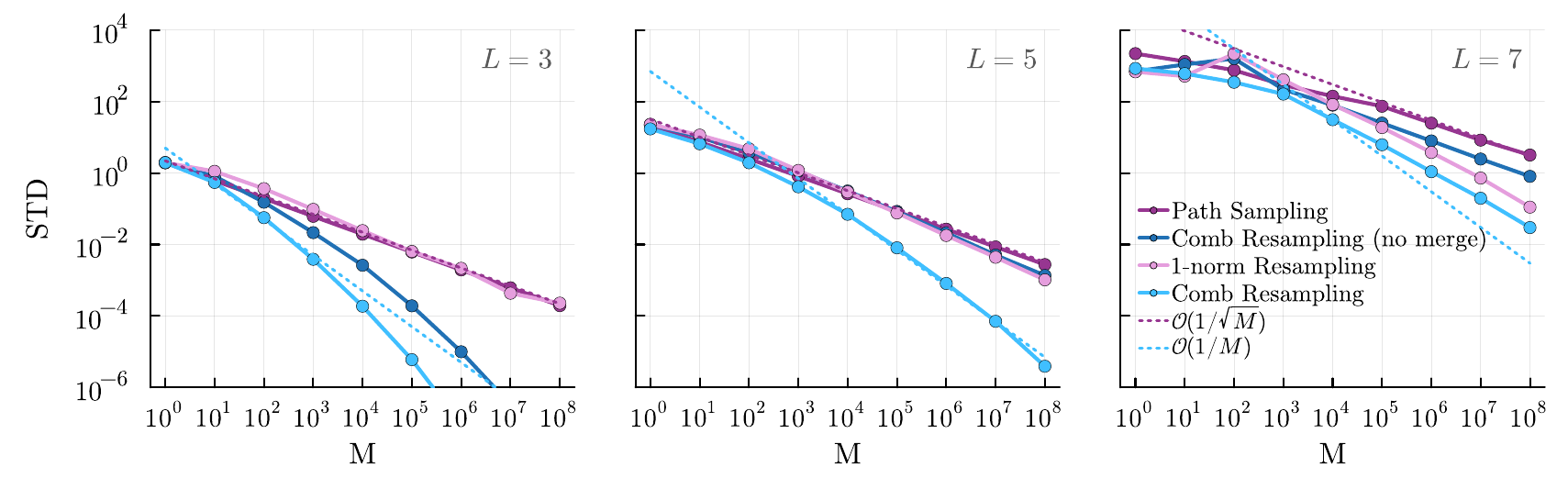}
    \caption{Estimator standard deviation as a function of the memory budget
    $M$ after $L=3,5,$ and $7$ circuit layers. At $L=3$ and $L=5$, comb
    resampling with merging shows an apparent $\mathcal{O}(M^{-1})$ scaling,
    while path sampling, one-norm resampling, and comb resampling without
    merging remain closer to the usual $\mathcal{O}(M^{-1/2})$ behaviour. By
    $L=7$ the engineered method still performs best, but its scaling advantage
    has begun to narrow as the coefficient distribution becomes more
    scrambled.}
    \label{fig:memory_utilization}
\end{figure*}

We next give a compact account of the numerical benchmarks and engineering choices that led to Protocol~\ref{prot:mcpp}. Our testbed is the expectation value computation Tr$[\rho U^\dagger OU]$, where $U$ represents the Trotterization of dynamics generated by the $6\times5$ transverse-field Ising Hamiltonian
\begin{equation}\label{eq:Ising}
    H = J\sum_{\langle i,j\rangle}Z_iZ_j + h\sum_i X_i.
\end{equation}
Here we choose $J=1, h=1.5$ and a Trotter time step $dt=0.1$. We propagate the single-qubit $Z$ observable in the middle of the lattice and use the all-zero state. At this size, we can compare to exact state vector simulation and distinguish errors induced by variance or bias.  
 
In Figure~\ref{fig:memory_utilization}, we compare path sampling with several unbiased resampling strategies, both with and without merging. What we call 1-norm resampling is to draw $m$ independent samples \textit{with replacement} from the pseudo-distribution defined by the Pauli coefficients. This is perhaps the most natural attempt at resampling from $M$ Pauli strings to $m$; however, it does not yield $m$ \textit{unique} Pauli strings. \textit{Comb resampling}, on the other hand, attempts precisely that with dependent yet unbiased sampling. We estimate the statistical uncertainties of the resulting estimators at different numbers of Trotter steps $L$. 

Figure~\ref{fig:memory_utilization} shows that comb resampling and merging can convert additional memory into a much larger reduction in statistical error than the other methods. The dashed $\mathcal{O}(M^{-1/2})$ line gives the usual Monte Carlo scaling expected when a memory budget $M$ supports only $\mathcal{O}(M)$ effectively independent samples. Path sampling, ordinary one-norm resampling, and comb resampling without merging remain close to this behaviour. At $L=3$ and $L=5$, however, comb resampling with merging performs substantially better, with an apparent $\mathcal{O}(M^{-1})$ decrease in estimator standard deviation over the range shown.

This improvement can be understood by distinguishing between
the number of samples drawn, $m_s$, and the number of unique Pauli strings
stored, $m_p$. 
Ordinary one-norm sampling,
in which each Pauli string is drawn with probability
$p_i=|c_i|/\|\vec{c}\|_1$, draws $m_s$ samples. In the absence of merging, these are drawn until the memory is filled and so $m_s \sim M$, giving a statistical error
that decreases as $1/\sqrt{m_s}\sim1/\sqrt{M}$. However, duplicates mean that
the resulting Pauli sum generally contains fewer than $m_s$ unique strings. As a
simple model of comb sampling with merging (and other related resampling schemes~\cite{murray2016parallel}), suppose instead that we continue
sampling until the memory is filled with exactly $m_p \sim  M$ unique strings. The
required number of samples, $m_s$, is then a random variable. For a peaked distribution such as
$p_i\propto i^{-2}$, many samples contribute to strings already in memory, and
we instead find that a typical $m_s$ will give $m_s\propto M^2$ (the derivation is
given in Appendix~\ref{app:error_merging_one_norm}). Since statistical error ordinarily
decreases as $1/\sqrt{m_s}$, a memory budget of $M$ unique Pauli strings can
then support an apparent $1/M$ standard-deviation scaling.
As the circuit becomes deeper and
the coefficient distribution flattens, duplicates become rare. Thus in this deep regime
typically $m_s$ draws will give $M$ unique strings, and the usual $1/\sqrt{M}$ scaling reappears. 

 Figure~\ref{fig:memory_utilization} and the above discussion concern the \textit{unbiased} case. That is, one idea would be to continue with unbiased randomized propagation once conventional propagation hits the memory limit. However, we have found that injecting a controlled bias into the randomized propagation is absolutely central for achieving practical improvements over state-of-the-art deterministic propagation. In particular, if we do not truncate in-between the unbiased resampling events, the algorithm quickly accumulates a prohibitive variance. On the other hand, a carefully chosen threshold just below the target accuracy can delay this onset substantially. Interestingly, truncating small coefficients mirrors the initiator approximation in full configuration interaction quantum Monte Carlo (FCIQMC)~\cite{cleland2010communications}, where only ``walkers'' with large weights may spawn new walkers onto unoccupied basis states.

This truncation removes the small coefficients to suppress variance from the tail and then comb sampling efficiently handles the dominant peaks.
As circuit depth $L$ increases, operator spreading flattens the Pauli spectrum, causing the mean-squared error (MSE) to transition from polynomial to exponential growth in $L$. Increasing the relative truncation threshold $\epsilon_{\rm rel}$ introduces a controlled bias that effectively preserves the semi-peaked distribution longer and delays the exponential blowup of the square root of the MSE (RMSE) to deeper circuits. This is captured by Result~\ref{res:phenomenological_truncated_mse} in Methods where we see that, for a relative truncation threshold $\epsilon_{\rm rel}$, the MSE contains a bias term that scales as $\mathcal{O}(\epsilon_{\rm rel}^2 L^2)$ and an exponentially growing
variance term, with a prefactor that decreases with $\epsilon_{\mathrm{rel}}$, that scales as $\mathcal O(\gamma^{2L}/m_s)$ and where $\gamma \gg 1$ characterizes the growth of the untruncated one-norm.
The delayed polynomial to exponential transition can be seen in Figure~\ref{fig:resampling_increase_bias} in Methods (with the corresponding phenomenological analysis presented in detail in Appendix~\ref{app:one_norm_sampling_mse_model}).

The final calibrated method is shown in Figure~\ref{fig:bias_assisted_propagation}. Every randomized propagation stores at most $M=10^5$ Pauli strings. With
a minimum relative coefficient threshold of $\epsilon_{\mathrm{rel}}=3\times10^{-4}$, the bootstrapped absolute error converges to close to the $1.8\times10^{-2}$ error of the deterministic propagation whose final Pauli sum contains $42M=4.2\times10^6$ Pauli strings. Reducing the threshold to $\epsilon_{\mathrm{rel}}=10^{-4}$ brings the error below $7\times10^{-3}$, outperforming a deterministic calculation storing $96M=9.6\times10^6$ Pauli strings. Thus, with sufficient averaging, a calculation that fits in memory can reproduce results that require almost approximately orders of magnitude more deterministic memory. Thus rather than storing one impossible Pauli sum, we have replaced it with many independent jobs that fit in memory. Note, of course, that higher quality results are attainable via even lower bias at the cost of more averaging. \\

\paragraph*{\textbf{Performance.}}
\begin{figure}
    \centering
    \includegraphics[width=1\linewidth]{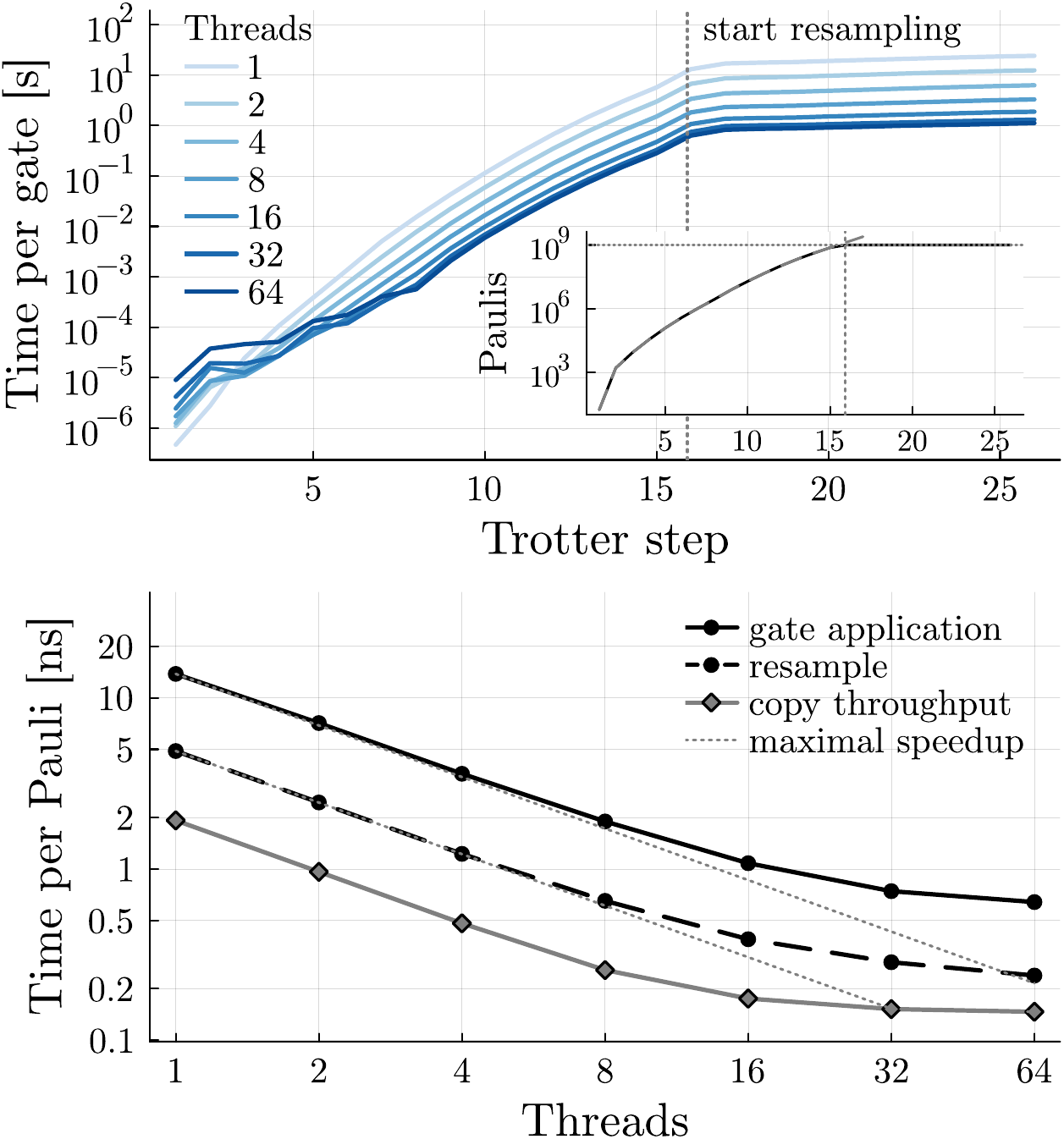}
    \caption{\textbf{Performance of Monte Carlo Pauli propagation on a CPU node.} For a Pauli-$Z$ observable evolved under a Trotterized $6\times6$ TFIM Hamiltonian, we observe that our implementation in \ppjl{} can apply a Pauli rotation and resample $10^9$ Pauli strings in approximately 1 second using 64 threads. We achieve a speedup of approximately $20\times$ from 1 to 64 threads as we approach the limit set by the memory bandwidth of the node.}
    \label{fig:timing}
\end{figure}
There is the danger that a randomized propagation framework relies too much on the in-principle ability to average results until they converge to the desired accuracy. Realistically, one will be restricted to a finite number of repetitions decided by the time one is willing to wait or by financial resources. It is thus important, now more than ever, that propagation algorithms are implemented to be fast. 

Fig.~\ref{fig:timing} depicts the typical phenomenology and performance of our randomized Pauli propagation implementation using the open-source library \ppjllink{}~\cite{rudolph2025pauli} (version 0.9 or higher), run with up to 64 threads on a CPU node with two Xeon Platinum 8360Y processors. We consider the Trotterized evolution of a Pauli-$Z$ observable in the middle of a $6\times6$ lattice under a TFIM Hamiltonian with $J=1$, $h=0.5$, a time step $dt=0.1$, and with an absolute coefficient cutoff of $10^{-7}$. The top panel shows the simulation time per gate and the number of stored Pauli strings as a function of the number of applied Trotter steps. At a cap of $M=10^9$, we start resampling, exchanging the exponential growth in the number of terms for (likely exponentially) accumulating statistical uncertainty. 

The bottom panel of Fig.~\ref{fig:timing} shows the average time per Pauli string it takes to apply a Pauli rotation and resample at the cap of $M=10^9$ as a function of the number of threads. Going from 1 to 64 threads, we achieve a speedup of approximately $20\times$ using a multi-Pauli sum approach inspired by Ref.~\cite{broers2025scalable}, which is similarly implemented in the open-source library \code{monoprop}~\cite{miller2026monoprop}. This allows us to apply a Pauli rotation and resample  $10^9$ Pauli strings in approximately one second, i.e., $1$ns per Pauli string. For comparison, merely copying the Pauli strings and their coefficients from one array to another takes $0.15$ns per Pauli string on the same node, a bound set by its memory bandwidth. 

\medskip

\paragraph*{\textbf{Large-scale implementation.}}

We showcase the flexibility and performance of our method at scale by simulating the Floquet dynamics of an $8\times8$ disordered Heisenberg model previously studied on quantum hardware in Ref.~\cite{alam2026onset}. The initial state $\rho$ is a 2D N\'{e}el state and it is evolved under a \textit{Floquet} circuit generated by a disordered, Heisenberg-like Hamiltonian
\begin{equation}
    H_{ij}=-J\left(X_iX_j+Y_iY_j+Z_iZ_j\right)-h_iZ_i-h_jZ_j,
    \label{eq:disordered_heisenberg}
\end{equation}
where the $h_i$ are random longitudinal fields and $J$ is a fixed interaction strength. We apply four repetitions of
the Floquet layer following the conventions outlined in Ref.~\cite{alam2026onset}. We study the cross-over from the localized to the ergodic regime as $J$ is increased. In the localized regime, the system retains substantial memory of the initial state, while in the ergodic regime reduced subsystems approach Haar-random character.

This crossover can be probed using the \textit{marginal collision entropy} on a patch of qubits $A$. Let $p_A(a)$ be the probability of observing bit string $a$ on the
patch. Its inverse participation ratio (IPR) and the corresponding collision
entropy can be written as
\begin{align}
    \operatorname{IPR}_A
    &:=\sum_a p_A(a)^2
      =\frac{1}{2^{|A|}}
      \sum_{P\in\{I,Z\}^{\otimes |A|}}\langle P\rangle^2,
      \nonumber\\
    S_A&:=-\log_2\operatorname{IPR}_A .
    \label{eq:marginal_collision_entropy}
\end{align}
The IPR measures how concentrated the marginal distribution is. We find that $S_A=0$ for a distribution concentrated on one bit string, while a uniform distribution gives the maximal value $S_A=|A|=9$.

\begin{figure}[t]
    \centering
    \includegraphics[width=0.9\columnwidth]{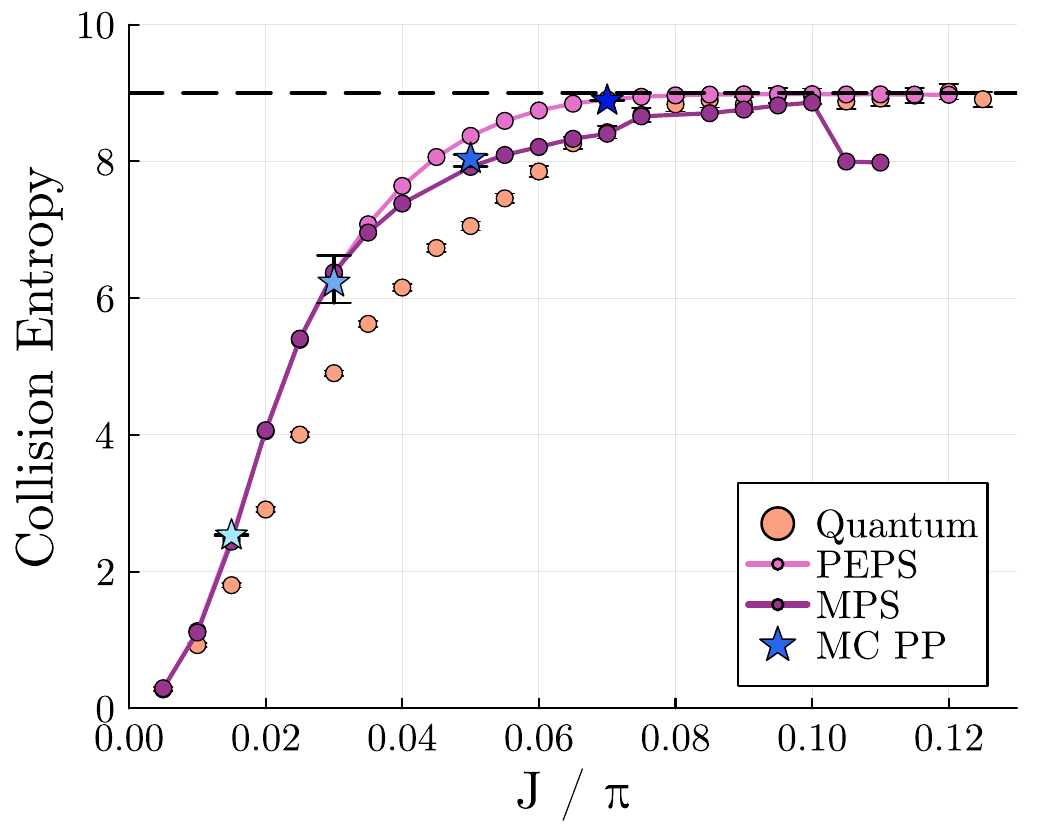}
    \caption{Marginal collision entropy on a $3\times3$ patch after Floquet
    evolution of the $8\times8$ disordered Heisenberg model. Randomized Pauli
    propagation (MC PP) is compared with error-mitigated quantum hardware, PEPS,
    and MPS results. The dashed line marks the maximal value $S_A=9$. The MC PP
    error bars show bootstrapped 16th to 84th percentiles and do not include the
    residual truncation and zero-clipping biases.}
    \label{fig:heisenberg_collision_entropy}
\end{figure}

We consider a  $3\times3$ patch of qubits in the middle of the lattice. Thus, the IPR requires the squared expectation values of 511 non-identity
Pauli strings. It might be tempting to perform randomized Pauli propagation to estimate the expectation values $\langle P \rangle$ and then square them. However, the square will be biased by the magnitude of the variance (see more details in App.~\ref{app:nonlinear_observables}). Instead, we can use the so-called \textit{replica trick}~\cite{overy2014unbiased}. In this approach, we perform two randomized propagations and take the expectation value of their product, allowing us to estimate second-order quantities without additional bias, which would be catastrophic given the several hundred Pauli expectation values we want to sum over. 

Furthermore, instead of individually estimating the expectation values of all 511 non-identity Pauli strings $P_i$, we allocate compute time proportional to $10^{-\gamma |P_i|}$ with $\gamma=0.3$, which approximately balances against the binomial distribution of weights until weight $k=5$, meaning that we have an allocation that nearly uniformly selects the first 5 Pauli weights with an emphasis on weights $2, 3$, and $4$. Low-weight Pauli strings, i.e., the ones that tend to have the largest expectation values, will have the lowest statistical error of the mean and the lowest relative error.

We round out our approach by clipping replica squared expectation values to zero when, after averaging, they remain outside the physical range of $ 0 \leq\langle \hat{P_i} \rangle^2 \leq 1$. Unphysical expectation values (including negative squares) are a signature of high variance which we expect for the highest Pauli weights and thus the lowest expectation values. For further details and a discussion on the incurred biases, see Methods~\ref{app:ergodicity-details}. Here it suffices to say that biases can drastically reduce statistical uncertainty in randomized experiments, but they need to be enjoyed with caution and results must be converged with respect to them.

Figure~\ref{fig:heisenberg_collision_entropy} shows our estimates at
$J/\pi=0.015,0.03,0.05,$ and $0.07$ using only $\maxsize = 10^{7}$, semi-deterministic comb resampling, and relative coefficient threshold $\epsilon_{\text{rel}}$ during propagation, which we continually lower to observe convergence. We compare with results obtained via belief propagation PEPS~\cite{tindall2023gauging,tindall2023efficient,rudolph2025simulating}, MPS~\cite{Vidal2003Efficient,schollwock2011density}, and the error-mitigated quantum results in Ref.~\cite{alam2026onset}. Our estimates are respectively $2.53\pm0.01$, $6.23^{+0.40}_{-0.30}$, $8.04^{+0.13}_{-0.05}$, and $8.89\pm0.00$. They broadly match the tensor network results in the localized regime, where both the PEPS and the MPS can be trusted. In the ergodic regime, where the MPS is not converged in bond dimension (but moving upward) and the PEPS experiences bias from belief propagation and cumulant expansion of high-weight Pauli strings, our Pauli propagation results fall in between. As such, it appears sensible to conclude that the error mitigation efforts in Ref.~\cite{alam2026onset} \textit{over-corrected} hardware outcomes, resulting in collision entropies that are mostly too low. That being said, at the intermediate point at $J/\pi=0.03$ our approach using only $M=10^7$ Pauli strings exhibits its largest statistical uncertainty and the biases cannot be argued to be negligible (see Fig.~\ref{fig:collision_entropy_analysis}). This is (not coincidentally) the regime where conventional Pauli propagation is the most challenging, and our results should not be seen as fully converged. \\

\paragraph*{\textbf{Discussion.}}

In this work, we have developed a randomized propagation framework that naturally recovers and extends the state-of-the-art for propagation algorithms. The overarching goal has been to continually restrict the maximum number of Pauli strings to some threshold $M$, beyond which we turned the estimation of expectation values into a trivially parallelizable averaging problem. We do so by use of sophisticated resampling strategies that allow full control over the current number of Pauli strings while introducing (relatively) low variance. At the same time, we highlight that increasing the memory of a single simulation is still very important. Increasing it can reduce the error asymptotically faster than averaging over more repetitions. Similarly, truncating the propagation remains crucial, as it can remove orders of magnitude of variance at the cost of potentially acceptable bias.

Our proposal gives GPU acceleration of propagation algorithms a legitimate purpose for simulating challenging circuits. Until now, their lower memory capabilities have implied that simulations simply ran out of memory more quickly while yielding worse results. Individual or distributed GPU codes could now perform more independent repetitions in the same time and thus accelerate the estimation of quantities. That being said, given that increasing memory is disproportionately more important than reducing runtime, and Fig.~\ref{fig:timing} demonstrates throughput of approximately $10^9$ Pauli strings per second, it appears that the bar for effective GPU acceleration is non-trivially high.

Beyond our concrete proposal, we describe a randomized propagation toolbox, including importance sampling and estimators for higher-order quantities (see Appendix~\ref{app:additional_tools}) that may inspire practitioners to develop efficient propagation variants that target specific problems or exploit particular structure in the circuit. For example, a second-order variant of our MC propagation scheme was developed in coordination with and used in Ref.~\cite{barron2026observable}, where it was used to estimate expectation values in \textit{out-of-time-ordered correlators} (OTOCs) in scrambling systems. We are confident that imaginative use of our and other randomized tools extends the domain of what is simulable via propagation algorithms. 

\section*{Methods}
\label{sec:methods}

In this section, we describe the methodology used to develop Monte Carlo Pauli propagation. We begin by setting out the design criteria for the sampling scheme, before detailing the different families of sampling schemes that could be employed. We then identify the coupled engineering choices and hyperparameters that must be tuned, and disentangle their effects through careful numerical case studies and analytical arguments. Finally, we provide additional details on the replica trick used to compute nonlinear observables in the ergodicity study.

\subsection{Design criteria}

A randomized propagation algorithm must first and foremost satisfy the hard memory
constraint. Given the various spurious processes taking up memory within a simulation, and importantly our auxiliary data structures for efficient propagation, we define a proxy limit of $M$ stored Pauli strings. If that is ever crossed, we must immediately drop below it again.

Subject to this constraint, the relevant figure of merit is the lowest possible error within a fixed
compute time. For example, we might be able to design an intricate scheme that has very low error, but if it runs very slowly it likely will not be worth it. Let $f$ denote the exact quantity of interest and let $\hat f_r$
be the estimator produced by the $r$th independent randomized pass. We write
$b=\mathbb{E}[\hat f_r]-f$ for its bias and
$\sigma^2=\operatorname{Var}(\hat f_r)$ for its variance. The single-pass
mean-squared error decomposes as
\begin{equation}
    \mathbb{E}\!\left[(\hat f_r-f)^2\right]
    =
    \underbrace{\sigma^2}_{\text{variance}}
    +
    \underbrace{b^2}_{\text{bias}^2}.
    \label{eq:mse_bias_variance}
\end{equation}
If $t_{\mathrm{run}}$ is the runtime of one pass and $T$ the total compute budget, we can fit
\begin{equation}
    R_T
    =\left\lfloor\frac{T}{t_{\mathrm{run}}}\right\rfloor
\end{equation}
independent repetitions into our budget and the estimator for the function $f$ reads
\begin{align}
    \bar f_{R_T}
    =\frac{1}{R_T}\sum_{r=1}^{R_T}\hat f_r.
\end{align}
The error of the $R_T$-repetition estimator becomes
\begin{equation}
    \mathbb{E}\!\left[(\bar f_{R_T}-f)^2\right]
    =\frac{\sigma^2}{R_T}+b^2,
    \label{eq:fixed_compute_mse}
\end{equation}
implying that independent averaging reduces the statistical uncertainty but not the inherent bias.
This allows us to define the fixed-compute standard error as
\begin{align}
    \operatorname{SE}_T
    &:=
    \sqrt{\operatorname{Var}\!\left[\bar f_{R_T}\right]}
    =\frac{\sigma}{\sqrt{R_T}} \nonumber\\
    &\simeq
    \sigma\sqrt{\frac{t_{\mathrm{run}}}{T}}.
    \label{eq:fixed_compute_standard_error}
\end{align}
For an unbiased estimator, $b=0$, this is also the root-mean-squared error of
the averaged estimate. For a biased estimator, further averaging reduces the
standard error but cannot remove the $b^2$ contribution in
Eq.~\eqref{eq:fixed_compute_mse}.

Fully unbiased sampling removes this bias, but its variance can grow rapidly
with circuit depth. This is a phenomenon commonly associated with the \textit{sign problem}~\cite{loh1990sign,troyer2005computational,pan2022sign,pashayan2015estimating}. Once truncation is combined
with randomized propagation, the simulation generally has both types of error.
Our aim is to suppress both for the deepest circuits possible while respecting
fixed memory and compute budgets.

\subsection{Sampling strategies}\label{sec:sampling_strategies}

Consider the Pauli sum in Eq.~\eqref{eq:pauli_sum}. When $N_O$ approaches the limit $M$, we introduce stochastic elements that either replace deterministic ones or are added on top. In the following we unpack our randomized toolbox in hopes that practitioners compose and improve on them. \\

\subsubsection{Path Sampling}
Path sampling substitutes the deterministic application of branching gates with a stochastic choice of which branch to apply.
Consider an operation $\mathcal E_l$ that maps an incoming Pauli string $P_i$ to a weighted sum of Pauli strings,
\begin{equation}
    \mathcal E_l[P_i]
    =\sum_{j=1}^{N_{\mathcal E_l,P_i}}a_{i\rightarrow j}P_j.
    \label{eq:path_local_map}
\end{equation}
The path sampling algorithm chooses one outgoing
branch with probability
\begin{align}
    p(j\mid i)&=\frac{|a_{i\rightarrow j}|}{A_i},
    &
    A_i=\sum_{j=1}^{N_{\mathcal E_l,P_i}}|a_{i\rightarrow j}|,
    \label{eq:path_sampling}
\end{align}
where $A_i$ is the 1-norm of the transition amplitudes $a_{i\rightarrow j}$.
The outgoing coefficient of the sampled branch is set to
\begin{equation}
    c_j'=c_i\,\sign{a_{i\rightarrow j}}A_i,
\end{equation}
where $c_i$ is the path coefficient before sampling. 
For a common Pauli rotation in
Eq.~\eqref{eq:pauli_rotation_action}, the local one-norm is
\begin{equation}
    A_P=|\cos(\theta)|+|\sin(\theta)|,
\end{equation}
with the two branches sampled in proportion to $|\cos(\theta)|$ and
$|\sin(\theta)|$. More details can be found in appendix \ref{app:path_sampling}.

This gives an unbiased estimator that always uses the same constant memory. The sampled paths are fully independent, which can be parallelized at will. Unfortunately, it loses the main advantage of
breadth-first propagation as paths that arrive at the same Pauli string do not
merge, and so the coefficients strictly carry the product of the local one-norms. For most branching gates, including Pauli rotations, we find that $A_P>1$, so this product can grow exponentially with the number of branching gates. 
This makes path sampling a very fast but particularly high-variance algorithm. An illustration of such high variance can be found in Appendix \ref{app:variance_path_sampling}. \\

\subsubsection{Resampling Strategies}
Pauli sum resampling strategies are breadth-first sampling schemes that can be applied to any given Pauli sum $ O=\sum_{i=1}^{N_O}c_iP_i$
to reduce the stored number of Pauli strings in an unbiased fashion. Here we detail several variants.
Throughout, we will use $m$ as the resampling target, i.e., the targeted number of unique Pauli strings after resampling. $m_p$ is the de-facto number of unique Pauli strings that our resampling strategy produces, and $m_s$ is the number of effective samples from the coefficient distribution this corresponds to.
\\

\paragraph*{One-norm resampling.}
Ordinary one-norm resampling draws $m=m_s$ indices independently and with
replacement, with probabilities $p_i=|c_i|/W$, where
\begin{equation}
    W=\sum_{i=1}^{N_O}|c_i|
\end{equation}
is the one-norm of the Pauli sum to be resampled. Each draw contributes a
coefficient $\sign(c_i)W/m_s$. If index $j$ is drawn $N_j$ times, the
resampled Pauli sum after merging is
\begin{equation}
    \widehat{O}
    =\sum_{j=1}^{N_O}\widehat{c}_jP_j,
    \qquad
    \widehat{c}_j
    =\sign(c_j)\frac{N_jW}{m_s}.
    \label{eq:one_norm_resampling}
\end{equation}
The counts $(N_j)_{j=1}^{N_O}$ follow a multinomial distribution with $m_s$
trials and probabilities $(p_j)_{j=1}^{N_O}$. Consequently,
$\mathbb{E}[\widehat{c}_i]=c_i$, and the resampling is unbiased.
Notice how one-norm resampling conserves the one-norm of the resampled Pauli sum exactly:
\begin{equation}
    \sum_j|\widehat{c}_j|
    =\frac{W}{m_s}\sum_jN_j
    =W.
\end{equation}
When resampling reduces the support, the same one-norm is distributed over
fewer coefficients, increasing their average magnitude. Further details are
given in Appendix~\ref{app:one_norm_resampling}, with the corresponding statistical error
analysis in Appendix~\ref{app:one_norm_sampling_error}. \\

\paragraph*{Comb sampling.}
The one-norm sampling strategy described above is sub-optimal for several reasons. First, it requires explicit merging of the samples, which already dominates the runtime of deterministic propagation. Second, it can yield significantly fewer unique Pauli strings  than the target $m$. And third, the near-random memory access to array entries due to the required sorted search can significantly hinder GPU acceleration~\cite{murray2016parallel}. 

Comb sampling, also
called \textit{systematic resampling}~\cite{kitagawa1996monte}, gets around these issues. 
Instead of drawing $m_s$ independent samples from this distribution, comb sampling places a regular grid (a comb) with spacing $\Delta$ and a single random offset $r\sim\operatorname{Uniform}[0,\Delta)$ over the cumulative weights, and counts the number of comb teeth that land in any $ w_i=\sum_{j=1}^{i}\abs{c_j}$.
That number can be computed for any index $i$ via
\begin{equation}
    N_i=
    \left\lfloor\frac{w_i-r}{\Delta}\right\rfloor
    -
    \left\lfloor\frac{w_{i-1}-r}{\Delta}\right\rfloor .
    \label{eq:comb_counts}
\end{equation}
The resampled coefficient is then
\begin{equation}
    \widehat c_i
    =\sign(c_i)N_i\Delta .
    \label{eq:comb_resampling}
\end{equation}
The coefficients are already merged.
The estimator remains unbiased, while each $N_i$ is always the expected number of draws appropriately rounded up or down to the next integer value. The shared offset therefore produces much smaller allocation fluctuations than independent one-norm draws. Concretely, the following theorem helps to explain the biggest difference:
\begin{result}[Upper-bounds on the 1-norm and comb variances, Informal]\label{result:comb1normvariancebounds}
The estimators $\hat f_{\rm 1norm}$ and $\hat f_{\rm comb}$ have variances:
    \begin{align}
        \Var[\hat f_{\rm 1norm}] &\propto \frac{W^2}{m_s} \, , \\
        \Var[\hat f_{\rm comb}] &\propto \frac{W^2}{m_s^2} \sum_i f_i \, , \label{eq:comb}
    \end{align}
where $|W|$ is the total $1$-norm weight and $f_i$ is the fractional remainder of Pauli $i$ that does not completely fill a comb tooth. In general, if the distribution is sufficiently peaked, we have that $\sum_i f_i/m_s < 1$ and hence $        \Var[\hat f_{\rm comb}]  < \Var[\hat f_{\rm 1norm}] $.
\end{result}
 
If $\sum_i f_i$ is a small constant (for example, in the case of the strongly peaked distribution discussed in the main text) then $ \Var[\hat f_{\rm comb}] \propto \frac{1}{m_s^2}$. A schematic and fuller derivation are given in
Appendix~\ref{app:comb_sampling}, and the corresponding error analysis is detailed in Appendix \ref{app:comb_sampling_error}, as well as the additional technical assumptions that are used.

Comb sampling leaves one further parameter to determine, the tooth spacing $\Delta$. The most direct choice of $\Delta$ is $\Delta=W/m$, which lays exactly $m$ teeth across the
distribution. However, this does not generally produce $m$ unique Pauli strings because a coefficient
larger than $\Delta$ catches several teeth and so repeated outcomes are merged leaving
part of the available memory unused. An alternative approach is to \textit{calibrate} comb resampling to a smaller $\Delta$ so that the expected number of unique Pauli strings $m_p$ is closer to the desired number $m$. We do so by computing the expected number of unique Pauli strings $F(\Delta) = \sum_i\min\left(1,\frac{\abs{c_i}}{\Delta}\right)$, and updating $\Delta$ as 
\begin{align}
    \Delta_{k} \frac{F(\Delta_k)}{m}\to\Delta_{k+1}
    \,.
\end{align}
In Appendix~\ref{app:comb_sampling} we expand on why the method works and how it usually converges within a handful of
passes over the coefficient array. \\ 

\paragraph*{Semi-deterministic resampling.}
One main bottleneck of the sampling schemes described above is that they do not have an intrinsic mechanism for dealing with repeated draws stemming from large coefficients $c_i$ that have $N_i \geq 2$. One pragmatic solution to this problem is to use a semi-deterministic variant, where 
large coefficients are not resampled but kept deterministically.
We therefore divide the Pauli sum into a deterministic head $D$, usually
containing the largest coefficients, and a stochastic tail $T=D^c$.
The coefficients indexed by $D$ are kept exactly and the tail is resampled so
that the total Pauli sum contains at most $m$ Pauli strings. Thus
$m_{\text{stoch}}=m-|D|$ Pauli strings are allocated to the tail, which can be covered by any resampling scheme. In practice, we use comb resampling for the stochastic part.
With
$W_{\text{stoch}}=\sum_{i\in T}|c_i|$ the one-norm of the stochastic tail, this gives
\begin{equation}
    \widehat c_i=
    \begin{cases}
        c_i, & i\in D,\\
        \sign{c_i}\,N_i W_{\text{stoch}}/m_{\text{stoch}}, & i\in T.
    \end{cases}
    \label{eq:deterministic_resampling}
\end{equation}
The largest coefficients are retained exactly and do not participate in the resampling at all. A notable connection is that the same construction underlies the fast randomized iteration framework for stochastic linear algebra and projector quantum Monte Carlo~\cite{lim2017fast}.

Removing the head from the sampling
normalization replaces $W$ by the smaller $W_{\text{stoch}}$, which substantially reduces
the noise placed on the tail.
It also makes $N_i \geq 2$ much less common, so the uncalibrated tail spacing $\Delta=W_{\text{stoch}}/m_{\text{stoch}}$ already comes close to storing $m_{\text{stoch}}$ unique Pauli strings. 

Semi-deterministic resampling is unbiased provided that the tail is resampled in an unbiased fashion.
The calibrated and semi-deterministic variants of comb resampling, together with their local variances
and implementation costs, are described in detail in
Appendix~\ref{app:semi_deterministic_comb_resampling}. \\

\paragraph*{Alternative tools.}
There are more randomized tools that one could apply to Pauli propagation.
One is \textit{stochastic pruning}.  It probabilistically removes
small coefficients below a threshold with a probability connected to their magnitude but scales up surviving coefficients to remain unbiased. In special cases, it is equivalent to semi-deterministic resampling, but, without calibration of the threshold, it does not have a mechanism to limit the number of surviving Pauli strings to $m$.

It is also possible to apply gates semi-deterministically. Pauli strings with large coefficients could be allowed to branch deterministically, whereas gate application to small coefficients could be performed via randomized path sampling. Yet again, it is not straightforward to precisely control the number of stored Pauli strings; the main advantage of resampling.

Finally, there is \textit{importance sampling}~\cite{robert2004monte}, which entails re-weighting probabilities to favor Pauli strings
expected to contribute strongly to the final observable. Importance sampling can be employed within any randomized routine, but requires specific domain knowledge. If one knows which propagating Pauli strings end up being more relevant for an expectation value, re-weighting probabilities (while adequately re-normalizing coefficients) can lower the variance of the estimator. If one was mistaken about which strings are important, the variance does not decrease or could even increase. Sampling by 1-norm can already be viewed as a form of importance sampling, and we have not found other strategies, e.g., favoring low Pauli weight~\cite{angrisani2024classically}, to significantly improve on that.

A more detailed description of the alternative tools, including their estimators, biased
variants, and implementation details is deferred to
Appendix~\ref{app:additional_tools}.

\subsection{Calibrating the randomized protocol}

Having introduced the candidate randomized strategies, we now examine the quantitative choices that govern the practical performance of Protocol~\ref{prot:mcpp}. This protocol uses variants of comb resampling combined with coefficient truncation, which we have found to perform best in the main-text benchmarks. We focus on two parameters, the resampling target $m$ (i.e., the number of unique Pauli strings we aim to sample) and the relative truncation threshold $\epsilon_{\mathrm{rel}}$. The resampling target controls how strongly the Pauli sum is compressed and hence how soon another resampling event is required.
The truncation threshold determines how much sampling variance is suppressed in exchange for introducing bias. Together, these parameters determine the runtime of each pass and the accuracy attainable within fixed memory and compute budgets. We analyze their effects using analytical variance estimates and the $6\times5$ transverse-field Ising testbed described in the main text, Eq.~\eqref{eq:Ising}. \\

\begin{figure}
    \centering
    \includegraphics[width=1\linewidth]{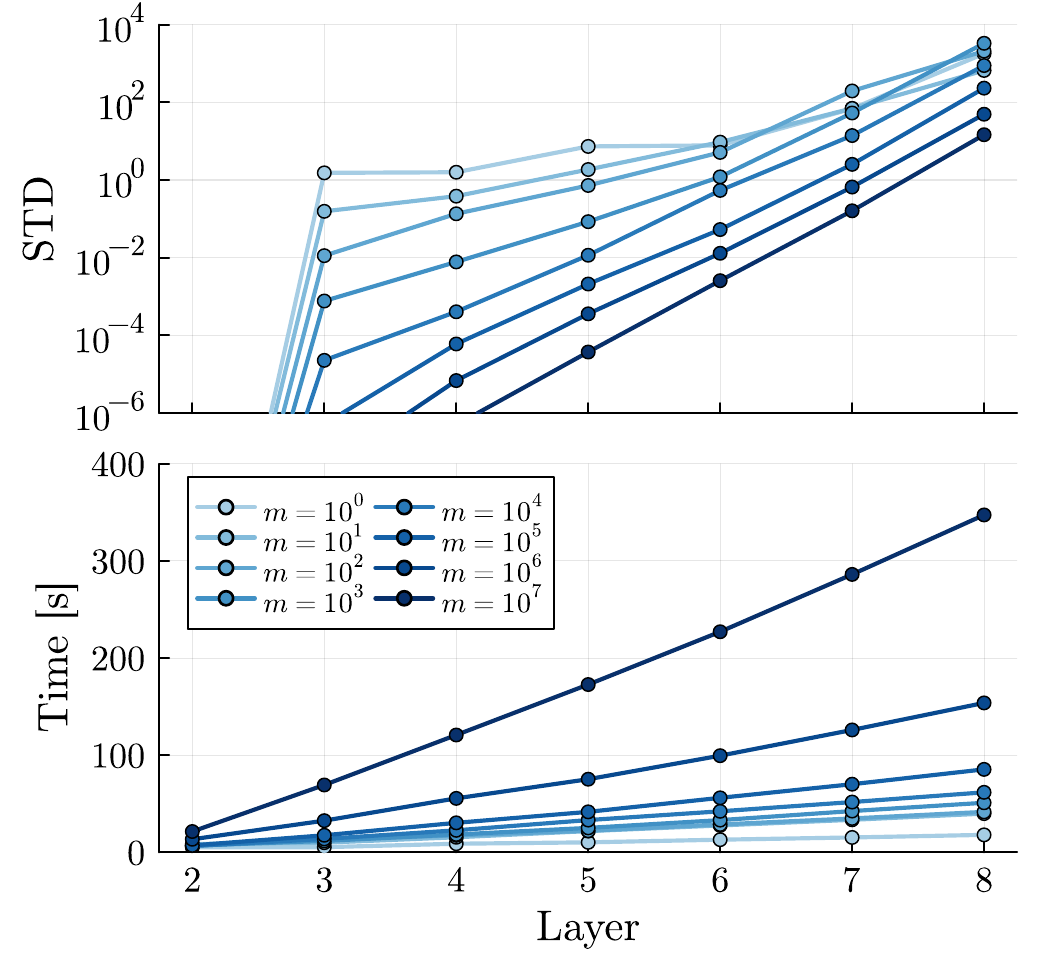}
    \caption{The choice of resampling target $m$ at a fixed limit of
    $M=2\times10^7$ stored Pauli strings, shown for unbiased propagation of a 30-qubit,
    two-dimensional transverse-field Ising circuit. Top, the standard
    deviation of a single pass decreases consistently as $m$ is increased from
    $1$ to $10^7=M/2$. Bottom, the runtime varies by only around a factor of
    twenty across the same range, far too little to offset the
    orders-of-magnitude improvement in statistical error.}
    \label{fig:resampling_size_tradeoff}
\end{figure}

\paragraph*{How many Pauli strings should be retained after resampling?}

A small $m_p$ (i.e., a small number of unique Pauli strings drawn) leaves more room for subsequent deterministic propagation, while a large $m_p$ preserves more of the current Pauli sum but reaches the memory limit again sooner.

The statistical case for targeting large $m_p$ is straightforward. 

Let $\mathcal K_{\mathrm{res}}$ denote the set of gate indices at which
resampling occurs, and let $\STD^2_{m_p}(\hat{f}_k)$ denote the contribution to the final estimator variance from the stochastic error introduced at event $k$. For unbiased resampling, a convenient worst-case bound takes the form
\begin{equation}
    \STD^2_{m_p}(\hat{f})
    \leq
    \sum_{k\in\mathcal K_{\mathrm{res}}(m_p)}
    \STD^2_{m_p}(\hat{f}_k).
    \label{eq:resampling_mse_bound}
\end{equation}
The sampler-specific bounds on $\STD^2_{m_p}(\hat{f}_k)$ are given in Result~\ref{result:comb1normvariancebounds}. The derivation of this result is detailed in Appendices \ref{app:one_norm_sampling_error_unbiased} and \ref{app:comb_sampling_error_unbiased} along with the additional technical assumptions that are used. For both one-norm and comb resampling, the variance introduced by an individual event decreases as the sampling effort $m_s$ is increased, although the two methods exhibit different dependences on $m_s$ and on the coefficient distribution. 

Intuitively, a large $m_p$ reduces the
statistical noise of a single pass, while a small $m_p$ will be quicker and so will permit more independent passes within the same compute budget.
We can weigh up these effects and determine the best targeted $m_p$ by studying
the fixed-compute standard error $\STD_{m_p}(\hat{f})\sqrt{t_{\mathrm{run}}(m_p)}$ introduced in
Eq.~\eqref{eq:fixed_compute_standard_error}. 
Here $\STD_{m_p}(\hat{f})$ includes the STD accumulated over the full schedule $\mathcal K_{\mathrm{res}}(m_p)$. 

The runtime advantage of small $m_p$ is weaker than one might expect. Between
resampling events, the number of stored Pauli strings grows approximately
exponentially from $m_p$ back to $M$. Most of the work is therefore concentrated
in the final few gates, when the Pauli sum is already close to the memory limit.
A much stronger compression can consequently discard far more information while
saving comparatively little time. Whether the reduction in
$t_{\mathrm{run}}(m_p)$ compensates for the increase in $\STD_{m_p}(\hat{f})$ must therefore
be determined empirically.

Figure~\ref{fig:resampling_size_tradeoff} measures both the constituent quantities, $\STD$ and $t_{\mathrm{run}}$, as a function of the number of circuit layers. We fix
$M=2\times10^7$ and vary the resampling target $m$ from $1$ to
$m=10^7=M/2$. The top panel shows that the largest $m$ gives the smallest
single-pass standard deviation at every displayed layer, with an improvement of
several orders of magnitude over $m=1$. The bottom panel shows that the runtime difference between the two extremes
is only around a factor of twenty. One may wonder why the runtime is not more drastically different, which can be explained largely by the exponential growth of the Pauli sum with the number of gates causing the average number of Pauli strings to be only approximately that factor of twenty smaller.

The optimal time vs variance tradeoff is therefore given for the largest target that leaves just enough memory to apply
the next gate. If one Pauli string can branch into at most $b_{\max}$ strings, a
conservative choice is
\begin{equation}
    m=\left\lfloor\frac{M}{b_{\max}}\right\rfloor .
\end{equation}
For example, for Pauli rotations $b_{\max}=2$ and hence $m=M/2$ if $M$ should not be significantly surpassed.\\

\begin{figure}
    \centering
    \includegraphics[width=0.45\textwidth]{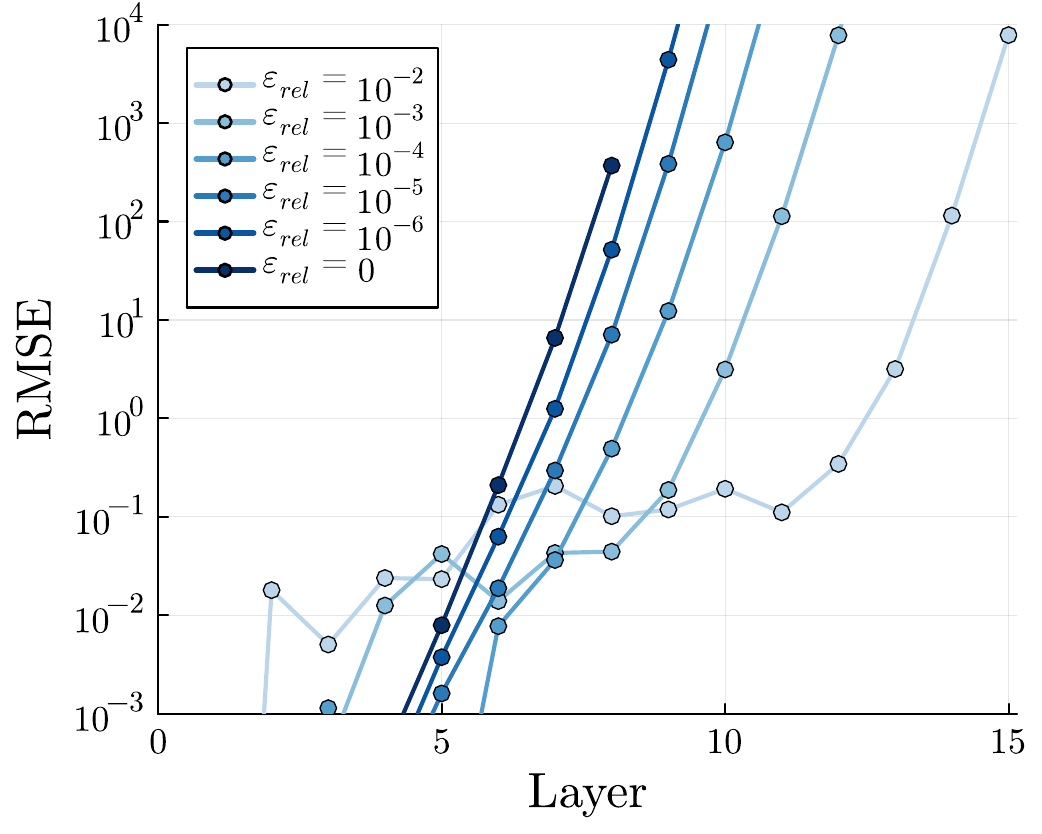}
    \caption{\textbf{Error scaling of resampling Pauli propagation with varying amounts of bias.} In contrast to the previous unbiased results, we now constantly apply a relative coefficient truncation with threshold $\minrelcoeff = 10^{-2}, \dots, 0$, where $\minrelcoeff = 0$ represents the unbiased benchmark. The maximum number of Paulis is $\maxsize = 10^{5}$. We observe that bias can delay the exponential onset of statistical uncertainty for several Trotter layers at the cost of not being able to achieve an error lower than the bias floor. For a bias that is just low enough for a target accuracy, the STD of the biased estimator can be multiple orders of magnitude lower than for the unbiased estimator.}
    \label{fig:resampling_increase_bias}
\end{figure}

\paragraph*{How much controlled bias should be introduced?}

In practice, we introduce controlled bias by discarding the smallest coefficients after deterministic gate application and merging. For the purposes of our phenomenological calculation, assume that between layers $k$ we truncate all coefficients satisfying $|c_i^{(k)}|\leq b_k$, where $b_k>0$ is a chosen threshold. The one-norm of the retained Pauli sum is therefore
\begin{equation}
    \widetilde W_k
    =
    \sum_{i=1}^{N_O}
    |c_i^{(k)}|
    \mathbf{1}\!\left(|c_i^{(k)}|>b_k\right).
\end{equation}
If resampling is required at that layer, it is applied to this truncated sum. For ordinary one-norm resampling, the retained indices are drawn with probabilities
\begin{equation}
    p_i^{(k)}
    =
    \frac{
        |c_i^{(k)}|
        \mathbf{1}\!\left(|c_i^{(k)}|>b_k\right)}
        {\widetilde W_k}.
\end{equation}
If index $j$ is drawn $N_j$ times, the resampled Pauli sum after merging is
\begin{equation}
    \widehat O
    =
    \sum_{j=1}^{N_O}\widehat c_jP_j,
    \qquad
    \widehat c_j
    =
    \sign(c_j^{(k)})\frac{N_j\widetilde W_k}{m_s}.
    \label{eq:one_norm_resampling_after_truncation}
\end{equation}
The same truncation step can be combined with any of the resampling schemes introduced above. We write the ordinary one-norm case explicitly because it makes the resulting bias--variance trade-off particularly transparent.

The threshold may be fixed throughout the propagation or recomputed relative to the largest coefficient at each layer,
\begin{equation}
    b_k
    =
    \epsilon_{\mathrm{rel}}
    \max_i|c_i^{(k)}|,
    \qquad
    \epsilon_{\mathrm{rel}}>0.
\end{equation}
Resampling remains unbiased with respect to the truncated Pauli sum, but the combined truncation-and-resampling procedure is biased with respect to the original sum. Discarding small coefficients reduces the retained one-norm and therefore the variance introduced by subsequent resampling. This competition is captured by the following phenomenological model.

\begin{figure*}
    \centering
    \includegraphics[width=0.7\linewidth]{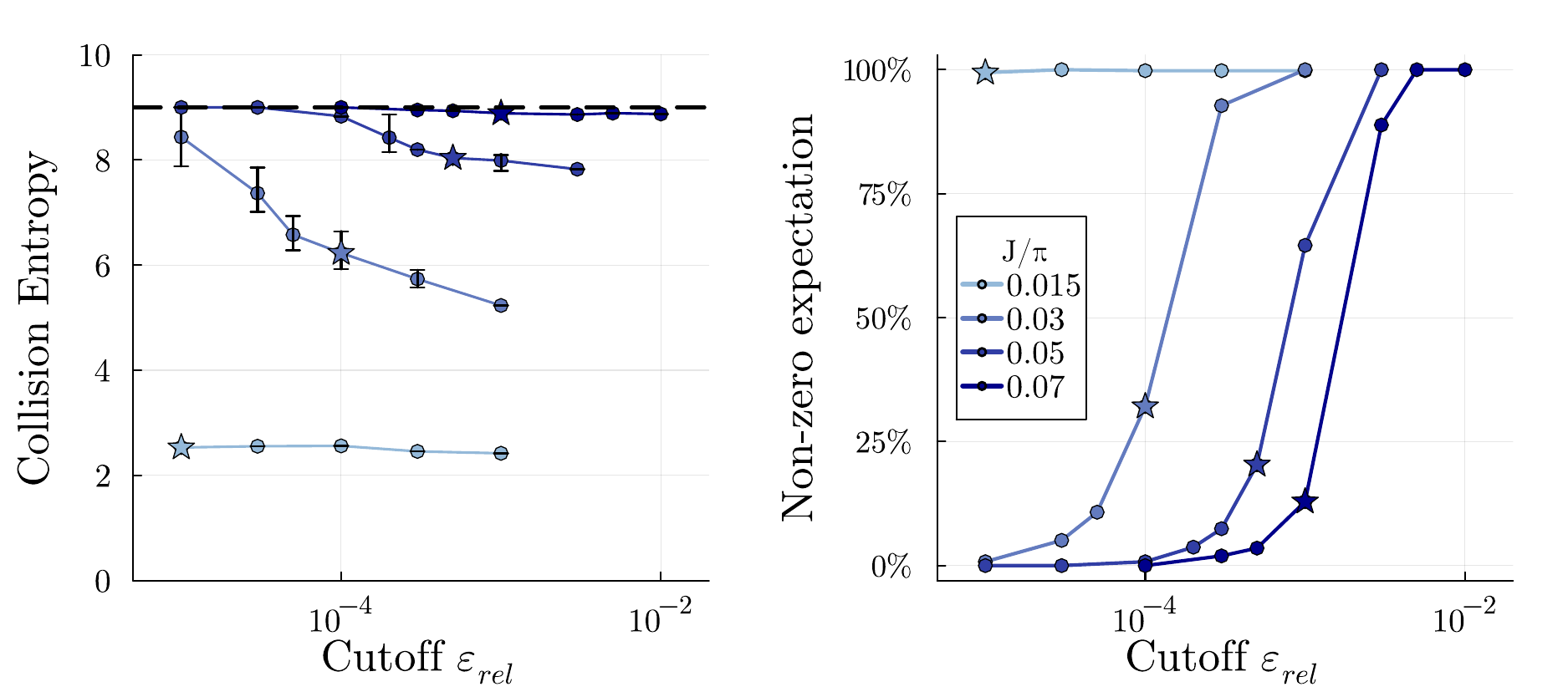}
    \caption{\textbf{Analysis of the randomized Pauli propagation result for estimating ergodicity.} 
    Our approach for estimating marginal collision entropy suffers from two competing biases. Truncating small coefficients below $\epsilon_\text{rel}$ pulls down the estimate, and the clipping of unphysical, high-variance results pulls it up. \textbf{Left:} As we lower $\epsilon_\text{rel}$, we observe the collision entropy estimates grow and partly stabilize, until the clipping bias takes over. \textbf{Right:} Fraction of expectation values for which we have non-zero estimates. Zero estimates are a sign of high variance and clipping. As that fraction changes from $100\%$ to $0\%$, our simulation transitions from mostly deterministic to fully corrupted by statistical noise. We highlight the data points shown in Fig.~\ref{fig:heisenberg_collision_entropy}. For $J/\pi = 0.03$, we cannot confidently point to a statistically converged simulation, which would exhibit a plateau on the left panel.
    }
    \label{fig:collision_entropy_analysis}
\end{figure*}

\begin{result}[Phenomenological model for the MSE of truncated one-norm resampling]
\label{res:phenomenological_truncated_mse}
Assume that relative truncation is applied at each layer with $b_k=\epsilon_{\mathrm{rel}}\max_j|c_j^{(k)}|$, that the accumulated bias per layer is proportional to $\epsilon_{\mathrm{rel}}$, and that the untruncated one-norm grows as $W_k\sim\gamma^k$. Then
\begin{equation}
    \operatorname{MSE}(\widehat f)
    \lesssim
    \left(C_1\epsilon_{\mathrm{rel}}L\right)^2
    +
    C_2
    \frac{
        \left(1-C_3\epsilon_{\mathrm{rel}}\right)^2
        \gamma^{2L}}
        {m_s},
    \label{eq:phenomenological_truncated_mse}
\end{equation}
where $C_1,C_2,C_3>0$ are problem-dependent constants.
\end{result}

The first term describes the truncation-induced bias floor, while the second describes the sampling variance. Although derived for truncated one-norm resampling, the model captures the same qualitative competition encountered when truncation is combined with comb resampling. A fuller derivation is given in Appendices~\ref{app:one_norm_sampling_error_bias} and \ref{app:comb_sampling_error_bias} along with the additional technical assumptions that are used.

Figure~\ref{fig:resampling_increase_bias} tests this trade-off at fixed $M=10^5$. Increasing $\epsilon_{\mathrm{rel}}$ delays the rapid growth of the RMSE, but eventually produces a larger bias floor. The useful choice is therefore the largest threshold whose bias remains below the target accuracy. We use a relative rather than an absolute threshold because resampling changes the characteristic coefficient scale, causing a fixed absolute threshold to become progressively less effective.

In practice, we begin with relatively aggressive truncation and repeat the calculation with successively smaller values of $\epsilon_{\mathrm{rel}}$. This allows the estimates to be converged successively, ending with the least biased estimate that can still be resolved within the available compute budget. \\

\subsection{Numerical details for estimating ergodicity}\label{app:ergodicity-details}

Squaring a one-norm resampled estimator
$\widehat O$ would add its variance as a bias, which over a sum of 511 propagations would dominate the error. Instead, we propagate two
independent \textit{replica} for each Pauli string and form
\begin{equation}
    \widehat q_P
    =\widehat{\langle P\rangle}^{(1)}
     \widehat{\langle P\rangle}^{(2)},
    \,\, \text{such that} \, \, \, \,
    \mathbb{E}[\widehat q_P]=\langle P\rangle^2.
    \label{eq:dual_replica}
\end{equation}
This so-called \textit{replica trick}~\cite{overy2014unbiased} produces an unbiased estimator for the squared expectation value (assuming the one-norm estimator is unbiased), and we can sum up all squared expectation values to attain an estimator for the IPR that is only biased by deterministic truncation applied throughout propagation. 
Appendix~\ref{app:nonlinear_observables} gives the full construction.

To produce the data in Fig.~\ref{fig:collision_entropy_analysis}, each replica is propagated with a limit of $M=10^7$ stored Pauli strings using
semi-deterministic comb resampling and the relative coefficient truncation threshold
$\epsilon_{\mathrm{rel}}$ as indicated. We lower this threshold to observe convergence of the estimator. 

A uniform
allocation of compute time over the 511 Pauli strings would devote most time to weight $3-6$
Pauli strings, of which there are most in a patch of size 9. These strings incur higher variance than low-weight strings but typically contribute very little
in the scrambling regime. We instead allocate repetitions approximately as
$R(P)\propto10^{-0.3|P|}$, concentrating effort on the lower-weight observables
that carry most of the signal.

We further apply a physicality-motivated zero-clipping scheme to the estimates. After the two
independent estimates have been averaged over repetitions, their product
$\widehat{\langle P\rangle}^2$ can remain negative or exceed one, even though the
exact quantity lies in $[0,1]$. Such unphysical values signal that the estimate
is dominated by high variance, most commonly for high-weight Pauli
strings whose true expectation values are expected to be small. We \textit{clip} these
estimates to zero before forming the IPR. This stabilizes the estimator, but introduces a new bias. In fact, this bias competes with the bias introduced by truncating small coefficients. We find that the latter tends to decrease the collision entropy, while clipping unphysical expectations pulls the entropy up. This complicates the convergence, and Fig.~\ref{fig:collision_entropy_analysis} outlines our best-faith attempt to find the least biased estimate for the collision entropy.

\begin{acknowledgements}
    The authors would like to thank Faisal Alam and Ashley Montanaro for providing and discussing the data for Fig.~\ref{fig:heisenberg_collision_entropy}.
    MSR acknowledges funding from the 2024 Google PhD Fellowship and the Swiss National Science Foundation [grant number 200021-219329]. R.P. acknowledges the support of the SNF Quantum Flagship Replacement Scheme (Grant No.~215933). ZH acknowledges support from the Sandoz Family Foundation-Monique de Meuron program for Academic Promotion. We acknowledge support of the NCCR MARVEL, a National Centre of Competence in Research, funded by the Swiss National Science Foundation (grant number 205602).
\end{acknowledgements}

\medskip

\section*{AI Usage Statement}
The authors acknowledge the use of generative AI tools (Claude, ChatGPT, Gemini), in particular thanks to the ChatGPT for Academic Researchers program, for the purposes of literature search, combining drafts, editing text, deriving and formatting equations, and implementing high-performance code. All AI-assisted content was reviewed, verified, and revised by the authors, who take full responsibility for the content of this work.

\clearpage

\bibliography{quantum, refs}

\clearpage
\onecolumngrid
\appendix

\section{Sampling schemes in detail}
\label{app:resampling_comparison}

The Methods section introduces the sampling schemes in the order in which they are
used. Here we give the full definitions and collect their local statistical
properties. As before, we use $m$ to denote the target number of unique Pauli strings retained after each resampling event, $m_p$ for the de-facto number of unique Pauli strings retained by any resampling method, and $m_s$ for the number of effective raw draws. These
numbers need not agree. If several draws select the same Pauli string, merging stores
their aggregate coefficient only once, so $m_p<m_s$ is possible and often
desirable.

Throughout this appendix, the object to be compressed is a Pauli sum
\begin{equation}
    O=\sum_{i=1}^{N_O}c_iP_i,
    \qquad
    W=\sum_{i=1}^{N_O}\abs{c_i}.
    \label{eq:app_pauli_sum_norm}
\end{equation}
For real coefficients, $\sign(c_i)$ denotes the usual sign. For a
complex coefficient it should be read as the phase $c_i/\abs{c_i}$. A
compression is unbiased if every coefficient, and hence the full Pauli sum,
is recovered in expectation.

\subsection{Path sampling}\label{app:path_sampling}

Path sampling is the simplest randomized propagation algorithm and the baseline
that any method using more memory should beat. Consider a circuit channel
$\EC=\EC^{(L)}\circ\cdots\circ\EC^{(1)}$ acting on an
initial Pauli string $P_0$ with coefficient $c_{P_0}=1$. At step $k$, an
incoming Pauli string is mapped to
\begin{equation}
    \EC^{(k)}[P_{k-1}]
    =\sum_{P_k}a_{P_{k-1}\rightarrow P_k}P_k.
    \label{eq:app_local_transfer}
\end{equation}
Define the column one-norm as the sum of the absolute values of the different transition amplitudes
\begin{equation}
    A_k(P_{k-1})
    =\sum_{P_k}\abs{a_{P_{k-1}\rightarrow P_k}}.
    \label{eq:app_local_column_norm}
\end{equation}
Path sampling selects one outgoing Pauli string with probability proportional to the magnitude of the transition amplitude
\begin{equation}
    p(P_k\mid P_{k-1})
    =\frac{\abs{a_{P_{k-1}\rightarrow P_k}}}
    {A_k(P_{k-1})},
    \label{eq:app_path_probability}
\end{equation}
and updates the running coefficient according to
\begin{equation}
    c_{P_k}
    =c_{P_{k-1}}\sign
    \!\left(a_{P_{k-1}\rightarrow P_k}\right)
    A_k(P_{k-1}).
    \label{eq:app_path_update}
\end{equation}

The compensation by the local one-norm makes the step unbiased. Indeed, if we compute the expectation value of the Pauli string and its coefficient from the step $k-1$ to $k$ we find that
\begin{align}
    &\Ebb_{P_k\sim p(\:\cdot\:\mid P_{k-1})}
    \left[c_{P_k}P_k\right] \nonumber\\
    &\quad=\sum_{P_k}
    \frac{\abs{a_{P_{k-1}\rightarrow P_k}}}{A_k(P_{k-1})}
    c_{P_{k-1}}\sign
    \!\left(a_{P_{k-1}\rightarrow P_k}\right)
    A_k(P_{k-1})P_k \nonumber\\
    &\quad=c_{P_{k-1}}\sum_{P_k}
    a_{P_{k-1}\rightarrow P_k}P_k
    =c_{P_{k-1}}\EC^{(k)}[P_{k-1}]\,,
    \label{eq:app_path_unbiased_step}
\end{align}
where in the first equality we substituted the probability defined in Eq.~\eqref{eq:app_path_probability}, and in the second equality we substituted the expression for $\EC^{(k)}[\cdot]$ in Eq.~\eqref{eq:app_local_transfer}. Iterating the construction produces a path $(P_0,P_1,\ldots,P_L)$ and the exact identity
\begin{equation}
    \EC[P_0]
    =\Ebb\left[c_{P_L}P_L\right],
    \qquad
    c_{P_L}=\prod_{k=1}^{L}
    \sign\!\left(a_{P_{k-1}\rightarrow P_k}\right)
    A_k(P_{k-1}).
    \label{eq:app_path_full_estimator}
\end{equation}
By linearity, the same sampled path gives an unbiased estimator of
$\Tr[\rho\EC[P_0]]$.

This method needs only constant memory per path and is embarrassingly parallel (i.e., it can be divided into many independent tasks requiring little or no communication between them). It also prevents paths from meeting before the final average is taken. Duplicate Pauli strings cannot merge, and cancellations that a breadth-first calculation would perform internally must instead be resolved statistically across independent trajectories.

Path sampling can still benefit from filling a CPU or GPU with many independent paths at once. Large batches amortize kernel launches and per-gate overheads. This changes the runtime, but not the $1/\sqrt R$ convergence with the total number $R$ of independent paths.

Resampling the full Pauli sum is more ambitious. Using memory we can alter the statistical error itself by allowing deterministic evolution and interference between stochastic events, as we explore in the next section.

\subsection{Resampling strategies}

\subsubsection{Independent one-norm resampling}
\label{app:one_norm_resampling}

Rather than make a random decision at each gate, one-norm resampling first lets
the full Pauli sum evolve until the memory barrier is hit. When it contains too many Pauli strings, it defines a
probability distribution based on the one norm of each coefficient
\begin{equation}
    p_i=\frac{\abs{c_i}}{W}\,,
    \label{eq:app_one_norm_probability}
\end{equation}
where we recall that $W = \sum_i \abs{c_i}$ is the one norm of the Pauli coefficients over the support in memory.

After sampling from this distribution $m_s$ times, and obtaining $N_i$ times each Pauli $P_i$, the estimator of the Pauli coefficient $\widehat c_i$ can be written as 
\begin{equation}
    \widehat c_i
    =N_i\:\sign(c_i)\frac{W}{m_s}
    \label{eq:app_one_norm_estimator}
\end{equation}
where the vector of the repetitions $N_i$ follows a multinomial distribution $(N_1,N_2,\ldots)\sim\operatorname{Multinomial}(m_s,\{p_i\})$. With this we can easily see that the expectation value of $N_i$ is $\Ebb[N_i]=m_sp_i$, which means that the expectation value of the estimator $\widehat c_i$ of the $P_i$ coefficient is unbiased
\begin{equation}
    \Ebb[\widehat c_i]
    =\sign(c_i)\frac{W}{m_s}m_s
      \frac{\abs{c_i}}{W}
    =c_i.
    \label{eq:app_one_norm_unbiased}
\end{equation}

Similarly, we use the fact that $\Var[N_i] = m_sp_i(1-p_i)$ and we can compute

\begin{equation}
    \Var[\widehat c_i]
    =\sign(c_i)^2\frac{W^2}{m_s^2}m_s
      \frac{\abs{c_i}}{W}\left(1 - \frac{\abs{c_i}}{W}\right)
    =\frac{W}{m_s}|c_i|\left(1 - \frac{\abs{c_i}}{W}\right).
    %\label{eq:app_one_norm_unbiased}
\end{equation}

The method is simple and exactly unbiased, but it does not generally produce
$m_s$ unique Pauli strings. A large coefficient may be drawn many times. Merging all
of those copies saves memory, although it also means that a nominal request for
$m_s$ samples can leave far fewer than $m_s$ stored Pauli strings. This distinction is
minor for a flat coefficient distribution (most samples will be of new strings) but can be substantial for a peaked one (some strings will be sampled many times).

A tempting modification is to draw repeatedly until exactly $m_p$ distinct
Pauli strings have appeared. 
This fixes the number of stored Pauli strings, but adds a huge time overhead when the number of distinct draws to obtain $m_p$ distinct Paulis is significantly larger than $m_p$ itself. Moreover, a naive normalization introduces a small bias. Indeed, if we draw $m_s$ times from this distribution\footnote{Notice that now $m_s$ is a random variable, as we will draw from the distribution until we reach $m_p$ different Paulis.} until we fill up $m_p$ different Paulis and assign the estimator naively as
\begin{equation}
    \widehat c_i
    =N_i\sign(c_i)\frac{W}{m_s}\,,
    \label{eq:app_enhanced_one_norm}
\end{equation}
then $N_i$ and $m_s$ are correlated. We can see that this estimator is biased by a simple second-order expansion of the expectation value of the ratio $\Ebb[N_i/m_s]$
\begin{align}
    \Ebb[\widehat c_i] = & \sign(c_i)\,W\,\Ebb\!\left[\frac{N_i}{m_s}\right]\\
    \approx &  \sign(c_i) W\left(\frac{\Ebb[N_i]}{\Ebb[m_s]} - \frac{\Covar[N_i, m_s]}{\Ebb[m_s]^2} +  \frac{\Ebb[N_i]\Var[m_s]}{\Ebb[m_s]^3}\right)\\
    = &  \sign(c_i) W\left(p_i -\frac{\Covar[N_i,m_s]}{\Ebb[m_s]^2} +p_i\frac{\Var[m_s]}{\Ebb[m_s]^2}\right) 
    \label{eq:app_stopping_expectation}
\end{align}
where in the second line we used that the expectation value of a ratio of two different random variables can be approximated as (see \cite{vanKempen2000meanvariance} which is Theorem 2 of \cite{hurt1976asymptotic} applied to $g(x, y) = x/y$)
\begin{align}
    \Ebb\left[ \frac{x(t)}{y(t)} \right] \approx  \frac{\Ebb[x(t)]}{\Ebb[y(t)]} - \frac{\Covar[x(t), y(t)]}{\Ebb[y(t)]^2} + \frac{\Ebb[x(t)]\Var[y(t)]}{\Ebb[y(t)]^3}\,,
\end{align}
and in Eq.~\eqref{eq:app_stopping_expectation} we used Wald's identity, which states that if $w_1,w_2,\ldots$ are i.i.d.\ with $\Ebb[\abs{w}]<\infty$ and $N$ is a stopping time for the sequence of draws with $\Ebb[N]<\infty$, then $\Ebb[\sum_{k=1}^{N} w_k] = \Ebb[w]\Ebb[N]$. Here $m_s$ is the first draw at which $m_p$ distinct Pauli strings have appeared, which is such a stopping time. We use it to write $\Ebb[N_i] = \Ebb[\sum_j^{m_s} \mathbb{I}_j(i)] = \Ebb[m_s]\Ebb[\mathbb{I}_i]$, where $\mathbb{I}_j(i)$ is an indicator function that is equal to $1$ when the Pauli $P_i$ is sampled at the $j$-th sample and $0$ otherwise, with $\Ebb[\mathbb{I}_j(i)] = p_i$. 

We thus conclude from Eq.~\eqref{eq:app_stopping_expectation} that the average of the estimator is  biased by 
\begin{equation}
    \sign(c_i) W\left(-\frac{\Covar[N_i,m_s]}{\Ebb[m_s]^2} +p_i\frac{\Var[m_s]}{\Ebb[m_s]^2}\right) \, .
\end{equation}

The random stopping time therefore trades exact unbiasedness for an exact number of distinct Pauli strings. We do not use it for the unbiased benchmarks. In the following section we introduce comb resampling and semi-deterministic resampling, two methods that come close to achieving the desired number of stored Paulis when resampling without inducing bias or requiring an exceedingly long time to sample.

\subsubsection{One-norm resampling with bias}
\label{app:one_norm_resampling_bias}

Before turning to the comb schemes, we describe a modification of the resampling step of Appendix~\ref{app:one_norm_resampling} that trades exact unbiasedness for a reduction in variance. If a propagated Pauli sum is peaked a modest number of coefficients stand out, while the bulk of the one-norm is spread over very many small ones. Each of those small coefficients contributes little to the expectation value on its own, and its own sampling variance $\Var[\widehat c_i]=\abs{c_i}(W-\abs{c_i})/m_s$ is correspondingly small, but together they carry most of the one norm and therefore dominate the variance of the protocol. It is therefore often useful to discard the smallest coefficients before resampling. The one-norm resampling with bias method can be decomposed into two steps:
\begin{enumerate}
    \item Discard all coefficients satisfying $\abs{c_i}\leq b$, for a chosen threshold $b >0$.
    \item Perform one-norm resampling on the remaining coefficients.
\end{enumerate}

The first step is deterministic, so it is convenient to absorb it into the coefficients themselves. Let
\begin{align}
    \widetilde c_i
    =c_i\:
      \mathbf 1\!\left(\abs{c_i}>b\right)
      \label{eq:app_truncated_coefficient}
\end{align}
be the truncated coefficients. Then one-norm resampling with bias can be considered as one-norm resampling on the modified coefficients $\widetilde c_i$, that is, on the modified Pauli sum $\widetilde O = \sum_i \widetilde c_iP_i$.

We can similarly define the modified one-norm and the modified probabilities
\begin{align}
    \widetilde W &=\sum_i |\widetilde c_i| = \sum_i |c_i| \mathbf 1\!\left(\abs{c_i}>b\right)
\label{eq:app_truncated_norm}\\
    \widetilde p_i
    &=\frac{|\widetilde c_i|}{\widetilde W} =\frac{ |c_i|}{\widetilde W} \mathbf 1\!\left(\abs{c_i}>b\right)\,,
    \label{eq:app_truncated_probability}
\end{align}
which replace Eq.~\eqref{eq:app_one_norm_probability}. Every statement of Appendix~\ref{app:one_norm_resampling} then carries over under the substitutions $c_i\to\widetilde c_i$, $W\to\widetilde W$ and $p_i\to\widetilde p_i$, the only structural difference being that the discarded Pauli strings have $\widetilde p_i=0$ and can never be drawn. In particular, the estimator can be written similarly as
 \begin{align}
    \widehat c_i
    =N_i\:\sign(c_i)\frac{\widetilde W}{m_s}\,,
    \label{eq:app_truncated_one_norm_estimator}
\end{align}
where $(N_1,N_2,\ldots)\sim\operatorname{Multinomial}(m_s,\{\widetilde p_i\})$. Contrary to the one-norm resampling, this estimator is biased with respect to $O$, because small coefficients will be deterministically discarded, but is unbiased with respect to the truncated sum $\widetilde O$. Concretely, using $\Ebb[N_i]=m_s\widetilde p_i$ we find
\begin{align}
    \Ebb[\widehat c_i] &= \sign(c_i)\frac{\widetilde W}{m_s} m_s \widetilde p_i = \sign(c_i)\widetilde W\frac{|c_i|}{\widetilde W} \mathbf 1\!\left(\abs{c_i}>b\right) = \sign(c_i) |c_i|\mathbf1\!\left(\abs{c_i}>b\right) \nonumber \\
    &= c_i \:\mathbf1\!\left(\abs{c_i}>b\right) = \widetilde c_i\, .
    \label{eq:app_truncated_mean_local}
\end{align}
 Subtracting $c_i$ from Eq.~\eqref{eq:app_truncated_mean_local}, the error made on each coefficient is
\begin{align}
    \Ebb[\widehat c_i] - c_i = -c_i\:\mathbf 1\!\left(\abs{c_i}\leq b\right)\,,
    \label{eq:app_truncated_coefficient_bias}
\end{align}
that is, exactly the coefficient that was thrown away.

Similarly, if $|c_i| > b$, $\widehat c_i$ has the same variance as the unbiased case with $W$ replaced by $\widetilde{W}$, and if $|c_i| \leq b$, $\widehat c_i = 0$ is constant, therefore

\begin{align}
    \Var[\widehat c_i] = \frac{\widetilde W}{m_s} |c_i|\left(1 - \frac{\abs{c_i}}{\widetilde W}\right)\:\mathbf1\!\left(\abs{c_i}>b\right)
\end{align}

\paragraph*{Choosing the threshold.}
An absolute threshold held fixed throughout the propagation does not adapt to the scale of the coefficients, which changes by orders of magnitude as the Pauli sum grows. A threshold relative to the one-norm, $b=\epsilon W$, does adapt, but discards too much once the distribution flattens out and the individual coefficients become small compared to $W$. We therefore take the threshold relative to the largest coefficient,
\begin{align}
    b = \epsilon_{\mathrm{rel}}\max_i|c_i|\,,\qquad \epsilon_{\mathrm{rel}}>0\,,
    \label{eq:app_relative_threshold}
\end{align}
which follows the scale of the distribution at every step, keeping the coefficients that carry its meaningful part and throwing away the rest.

\subsubsection{Systematic or comb resampling}\label{app:comb_sampling}

\begin{figure}
     \centering
     \includegraphics[width=0.95\textwidth]{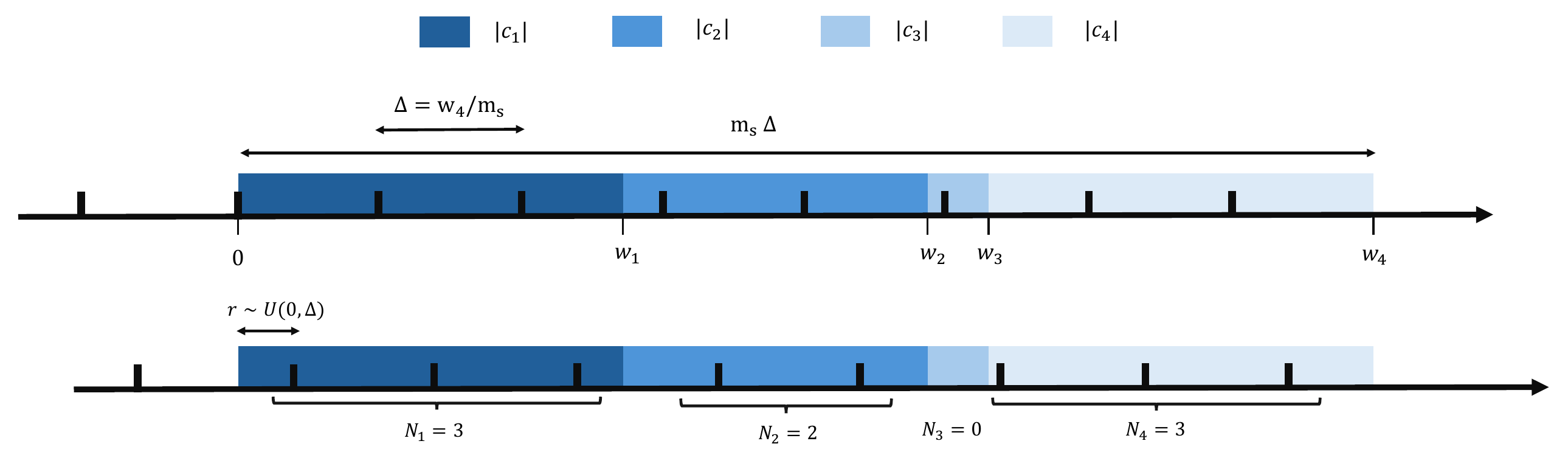}

     \includegraphics[width=0.95\textwidth]{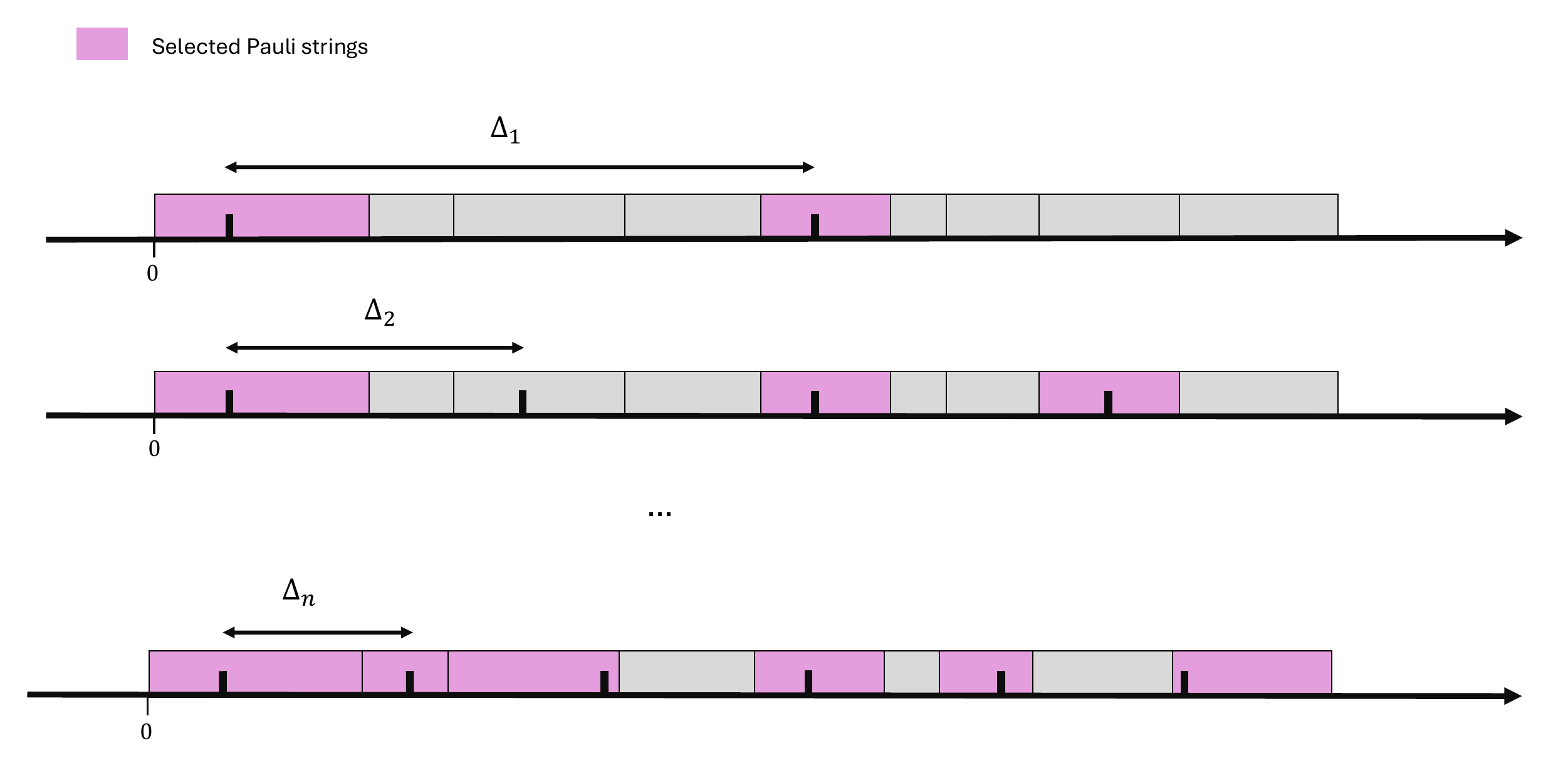}
     \caption{(Top) Illustration of the comb resampling principle. Given absolute value coefficients $|c_i|$s with cumulative one norms $w_i = \sum_{j=1}^{i} |c_j|$ and a number of samples $m_s$, one superposes a grid of width $\Delta = w_{N_O}/ m_s$. The sampling procedure consists first of shifting the grid by a random offset $r$ uniform in $[0, \Delta)$ and then assigning the number $N_i$ associated to the coefficient $c_i$ by counting the number of teeth falling in the interval $[w_{i-1}, w_i)$. (Bottom) Illustration of the calibration process. $\Delta$ is updated and decreases until one gets the desired expected number of Pauli strings.}
     \label{fig:comb_sampling}
\end{figure}

To circumvent the issues created by independent one-norm sampling we use systematic resampling~\cite{kitagawa1996monte}, which we call comb sampling. This method replaces $m_s$ independent random numbers with a more efficient method. Recall the Pauli sum $O=\sum_{i=1}^{N_O}c_iP_i$, with one-norm
$W=\sum_{i=1}^{N_O}\abs{c_i}$. We can define the cumulative weights belonging to the coefficient distributions as
\begin{equation}
    w_i=\sum_{j=1}^{i}\abs{c_j},
    \qquad w_0=0,
    \qquad w_{N_O}=W.
    \label{eq:app_cumulative_weights}
\end{equation}

Instead of drawing $m_s$ independent samples from this distribution, \textit{comb sampling} places a regular grid (a comb) with spacing $\Delta$ and a single random offset $r\sim\operatorname{Uniform}[0,\Delta)$ over the cumulative weights, and counts the number of comb teeth that land in each interval $[w_{i-1},w_i)$. This is illustrated in Fig.~\ref{fig:comb_sampling} (top). That number can be computed for any index $i$ via
\begin{equation}
    N_i=
    \left\lfloor\frac{w_i-r}{\Delta}\right\rfloor
    -
    \left\lfloor\frac{w_{i-1}-r}{\Delta}\right\rfloor\,,
    \label{eq:app_comb_counts}
\end{equation}
where the random variable $r$ determines $N_i$. Note that different $N_i$ are not independent. The resampled coefficient is then
\begin{equation}
    \widehat c_i
    =\sign(c_i)N_i\Delta .
    \label{eq:app_comb_coefficient}
\end{equation}
The coefficients are already merged, only zero-valued coefficients need to be truncated, i.e. Pauli strings with $N_i=0$ are removed from the resampled sum.  

We can see that this estimator is unbiased by understanding how $N_i$ behaves as a random variable. While $N_i$ is computed directly from Eq.~\eqref{eq:app_comb_counts}, its behaviour is easier to read off after massaging that expression. Let us start with the first term of Eq.~\eqref{eq:app_comb_counts},
\begin{align}\label{eq:construct_N}
    \left\lfloor\frac{w_i-r}{\Delta}\right\rfloor = & \left\lfloor\frac{\abs{c_i}}{\Delta} + \frac{w_{i-1}-r}{\Delta}\right\rfloor   \\
    = & \left\lfloor\frac{w_{i-1}-r}{\Delta}\right\rfloor + \left\lfloor\frac{\abs{c_i}}{\Delta} \right\rfloor + \left\lfloor \left\{\frac{\abs{c_i}}{\Delta} \right\} + \left\{ \frac{w_{i-1}-r}{\Delta}\right\} \right\rfloor
\end{align}
where in the first equality we used that $w_i = \abs{c_i} + w_{i-1}$ by Eq.~\eqref{eq:app_cumulative_weights}, and that $x = \lfloor x \rfloor + \{x\}$, i.e.\ $\{x\}$ denotes the non-integer part of $x$. We can rearrange Eq.~\eqref{eq:construct_N} to find
\begin{align}
    \left\lfloor\frac{w_i-r}{\Delta}\right\rfloor - \left\lfloor\frac{w_{i-1}-r}{\Delta}\right\rfloor = \left\lfloor\frac{\abs{c_i}}{\Delta} \right\rfloor + \left\lfloor \left\{\frac{\abs{c_i}}{\Delta} \right\} + \left\{ \frac{w_{i-1}-r}{\Delta}\right\} \right\rfloor
\end{align}
which means that, recalling Eq.~\eqref{eq:app_comb_counts}, we can write
\begin{align}\label{eq:N_i_asbernoully}
    N_i = \left\lfloor\frac{\abs{c_i}}{\Delta} \right\rfloor + I_i\,, \quad I_i = \left\lfloor \left\{\frac{\abs{c_i}}{\Delta} \right\} + \left\{ \frac{w_{i-1}-r}{\Delta}\right\} \right\rfloor\,,
\end{align}
where the $I_i$ are random variables taking the value $0$ or $1$, which are, crucially, not independent of each other, since they are all driven by the single random offset $r$.

Before analysing the unbiasedness of our estimator $\widehat c_i$ we need the expectation value of $I_i$. We can compute it by noting that $u = r/\Delta$ is a uniform random variable, $u \sim \operatorname{Uniform}[0,1)$, and writing $\phi_{i-1} = \left\{ w_{i-1}/\Delta \right\}$ for the non-integer part of the cumulative weight, so that $\left\{ w_{i-1}/\Delta - u\right\} = \left\{ \phi_{i-1} - u\right\}$,
\begin{align}\label{eq:expval_random_I_i_N_i}
    \Ebb[I_i] =& \int_0^1 \left\lfloor \left\{\frac{\abs{c_i}}{\Delta} \right\} + \left\{ \phi_{i-1}- u\right\} \right\rfloor \dd u\\
    =& \int_0^{\phi_{i-1}} \left\lfloor \left\{\frac{\abs{c_i}}{\Delta} \right\} +  \phi_{i-1}- u \right\rfloor \dd u + \int_{\phi_{i-1}}^1 \left\lfloor \left\{\frac{\abs{c_i}}{\Delta} \right\} + 1+ \phi_{i-1}- u \right\rfloor \dd u\\
    =& \int_0^{\phi_{i-1}} \left\lfloor \left\{\frac{\abs{c_i}}{\Delta} \right\} +  x \right\rfloor \dd x + \int_{\phi_{i-1}}^1 \left\lfloor \left\{\frac{\abs{c_i}}{\Delta} \right\} +  x \right\rfloor \dd x \\
    = & \int_0^1 \left\lfloor \left\{\frac{\abs{c_i}}{\Delta} \right\} +  x \right\rfloor \dd x = \int_{1-\left\{{\abs{c_i}}/{\Delta} \right\}}^1  \dd x = \left\{\frac{\abs{c_i}}{\Delta} \right\}\,,
\end{align}
where in the second equality we used that $\left\{ \phi_{i-1} - u\right\} = \phi_{i-1}-u$ for $u<\phi_{i-1}$ and $\left\{ \phi_{i-1} - u\right\} = 1 + \phi_{i-1}-u$ for $u>\phi_{i-1}$, since $\phi_{i-1}\in[0,1)$ and $u\in[0,1)$. In the third equality we used the changes of variables $x = \phi_{i-1}-u$ and $x = 1+\phi_{i-1}-u$ for the first and the second integral respectively, each of which maps the interval of integration onto the one shown. In the two last equalities we first regrouped the two integrals into one, and then used that $\left\lfloor \left\{\abs{c_i}/\Delta \right\} +  x \right\rfloor = 1$ for $x > 1- \left\{\abs{c_i}/\Delta \right\}$ and $0$ otherwise.

Now we are equipped to compute the expectation value of our estimator $\widehat c_i$ in Eq.~\eqref{eq:app_comb_coefficient}
\begin{align}
    \Ebb[\widehat c_i] =&  \sign(c_i)\Delta\Ebb[N_i]\\
     = & \sign(c_i)\Delta \left( \left\lfloor\frac{\abs{c_i}}{\Delta} \right\rfloor + \Ebb[I_i] \right)\\
     = & \sign(c_i)\Delta \left( \left\lfloor\frac{\abs{c_i}}{\Delta} \right\rfloor +  \left\{\frac{\abs{c_i}}{\Delta} \right\}\right) = \sign(c_i)\Delta \frac{\abs{c_i}}{\Delta} = c_i\,,
\end{align}
where in the first equality we used Eq.~\eqref{eq:N_i_asbernoully}, and in the second equality we used Eq.~\eqref{eq:expval_random_I_i_N_i}. The estimator is therefore unbiased.

We can also compute the variance. Since $N_i = \left\lfloor\frac{\abs{c_i}}{\Delta} \right\rfloor + I_i$, we have $\Var[N_i] = \Var[I_i] = \left\{\frac{\abs{c_i}}{\Delta} \right\}\left(1-\left\{\frac{\abs{c_i}}{\Delta} \right\}\right)$.
Therefore

\begin{align}
    \Var[\widehat c_i] = \Delta^2 \Var[I_i] = \Delta^2  \left\{\frac{\abs{c_i}}{\Delta} \right\}\left(1-\left\{\frac{\abs{c_i}}{\Delta} \right\}\right)
\end{align}

Comb resampling has two implementation advantages. The counts can be
generated while walking through the cumulative distribution, so repeated
outcomes are already aggregated and need not be merged in a second pass. The method also admits efficient parallel implementations on CPUs and
GPUs~\cite{murray2016parallel}. \\

\paragraph*{Calibrating the spacing.}
We are now left with the question of how to pick the comb spacing $\Delta$. The natural spacing $W/m$ fixes the number of teeth, not the number of unique
Pauli strings. Any interval, $c_i$, wider than $\Delta$ can catch more than one tooth, and so contribute multiple Pauli strings but only one unique one. 
It follows that a
peaked distribution (i.e., one with many large $c_i$) again produces fewer than $m$ stored Pauli strings. 
The
expected number of unique Pauli strings at spacing $\Delta$ is 
\begin{equation}
    F(\Delta)
    =\sum_{i=1}^{N_O}\min\!\left(1,\frac{\abs{c_i}}{\Delta}\right)\,,
    \label{eq:app_comb_objective}
\end{equation}
where a coefficient larger than the comb spacing is certain to receive at least one tooth, while a smaller one survives only with probability $\abs{c_i}/\Delta$.

Calibrated comb sampling selects $\Delta$ so that 
$F(\Delta)\simeq m$. To do so, we employ a fixed point iteration, updating $\Delta$ until convergence with the following rule
\begin{equation}
    \Delta_{k+1}\leftarrow\Delta_k\frac{F(\Delta_k)}{m}.
    \label{eq:app_comb_update}
\end{equation}
Since $F(\Delta)\leq W/\Delta$, the calibrated spacing satisfies $\Delta\leq W/m$, and the comb lays $m_s=W/\Delta\geq m$ teeth. Calibration therefore increases the effective number of draws $m_s$ while bringing $m_p$ closer to $m$.

With the following proposition, we will show that this procedure converges given that $\Delta$ is initialized at $\max_i |c_i|$ and some technical assumptions. $\Delta$ decreases until the expected number of unique Pauli strings $F(\Delta)$ reaches $m$. This is illustrated in Fig.~\ref{fig:comb_sampling} (bottom).

\begin{proposition}
\label{prop:calibration_convergence}
Consider the fixed point iteration initialized at $\Delta_0 = \max_i |c_i|$ updating $\Delta$ with 
\begin{equation}
    \Delta_{k+1}\leftarrow g(\Delta_k)
    \label{eq:app_comb_update2}
\end{equation}
where $g(\Delta) = \Delta \displaystyle \frac{F(\Delta)}{m}$. Assume that $\max_i |c_i| > \displaystyle \frac{\sum_i |c_i|}{m}$ and that there exists $\Delta > 0$ such that $\left|\{i, |c_i| > \Delta\}\right| > m$.
Then the fixed point iteration converges to $\Delta_{\text{opt}}$ such that $F(\Delta_{\text{opt}}) = m$.
    
\end{proposition}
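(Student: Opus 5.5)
The plan is to rewrite the update map in a form whose monotonicity and continuity are evident, and then run a standard monotone-convergence argument. Multiplying $F$ in Eq.~\eqref{eq:app_comb_objective} by $\Delta$ gives
\begin{equation*}
    g(\Delta)=\frac{\Delta F(\Delta)}{m}=\frac{1}{m}\sum_{i=1}^{N_O}\min\!\left(\Delta,\abs{c_i}\right).
\end{equation*}
From this form, $g$ is continuous, piecewise linear, nondecreasing and concave on $(0,\infty)$. It also satisfies $g(\Delta)\leq W/m$. A point $\Delta>0$ is a fixed point of $g$ exactly when $F(\Delta)=m$. So it suffices to show that the iterates converge to a fixed point.

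\textbf{Step 1 (the sequence decreases).} At $\Delta_0=\max_i\abs{c_i}$ every term of the sum equals $\abs{c_i}$. Hence $\Delta_1=g(\Delta_0)=W/m<\Delta_0$, using the first assumption. Suppose $\Delta_{k+1}\leq\Delta_k$. Monotonicity of $g$ then gives $\Delta_{k+2}=g(\Delta_{k+1})\leq g(\Delta_k)=\Delta_{k+1}$. By induction, $(\Delta_k)$ is nonincreasing.

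\textbf{Step 2 (a positive lower bound).} This step needs the second assumption. Take $\Delta^*>0$ with $\abs{\{i:\abs{c_i}>\Delta^*\}}>m$. Every such index contributes $1$ to $F(\Delta^*)$, so $F(\Delta^*)>m$, which means $g(\Delta^*)>\Delta^*$. We then show $\Delta_k\geq\Delta^*$ for all $k$ by induction.
\begin{itemize}
    \item Base case: $\Delta_0=\max_i\abs{c_i}>\Delta^*$, since some $\abs{c_i}$ exceeds $\Delta^*$.
    \item Inductive step: if $\Delta_k\geq\Delta^*$, then $\Delta_{k+1}=g(\Delta_k)\geq g(\Delta^*)>\Delta^*$.
\end{itemize}
The role of this step is to keep the limit bounded away from $0$, where a fixed point would be degenerate. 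At $0$ the identity $g(\Delta)=\Delta\Leftrightarrow F(\Delta)=m$ is lost, because we divided by $\Delta$.

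\textbf{Step 3 (convergence).} A nonincreasing sequence that is bounded below converges, say to $\Delta_{\text{opt}}\geq\Delta^*>0$. Passing to the limit in $\Delta_{k+1}=g(\Delta_k)$ and using continuity of $g$ gives $g(\Delta_{\text{opt}})=\Delta_{\text{opt}}$. Dividing by $\Delta_{\text{opt}}>0$ gives $F(\Delta_{\text{opt}})=m$.

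The main obstacle is Step 2. Without the second assumption the iteration could decrease toward $0$, or toward a region where $F$ is constant and equal to $N_O<m$, and there would be no valid limit. This is exactly what the hypothesis rules out. As an optional remark, one could note that the limit is the largest fixed point below $\Delta_0$. This follows from the same induction: any fixed point $\tilde\Delta\leq\Delta_0$ satisfies $\Delta_k\geq\tilde\Delta$ for all $k$.
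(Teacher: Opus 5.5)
Your proof is correct, but it takes a different route from the paper's. The paper writes $m\,g(\Delta)=a(\Delta)\Delta+b(\Delta)$ with $a(\Delta)=|\{i:|c_i|>\Delta\}|$. It checks continuity by matching one-sided limits at each breakpoint. It then obtains a fixed point from the intermediate value theorem applied to $\Delta-g(\Delta)$, which is positive at $\Delta_0$ and negative where $a(\Delta)>m$. Finally, it uses the decreasing slopes $a(\Delta)/m$ to show that $g$ has Lipschitz constant strictly below $1$ on $[\Delta_{\text{opt}},\Delta_0]$, and invokes the contraction mapping principle.

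You instead use the identity $g(\Delta)=\frac{1}{m}\sum_i\min(\Delta,|c_i|)$, which makes continuity and monotonicity immediate, and run a monotone-iteration argument. $g(\Delta_0)<\Delta_0$ makes the iterates nonincreasing, $g(\Delta^*)>\Delta^*$ keeps them above $\Delta^*$, and continuity shows the limit is a fixed point.

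Your route needs neither concavity nor a separate existence step, since the iteration itself produces the fixed point. Your monotonicity observation is also what guarantees that $g$ maps $[\Delta_{\text{opt}},\Delta_0]$ into itself. The contraction mapping principle requires this self-map property, and the paper leaves it implicit.

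What the paper's route buys in return is a rate. Contraction gives geometric convergence with factor equal to the largest slope of $g$ on $[\Delta_{\text{opt}},\Delta_0]$. Monotone convergence alone says nothing about speed, although combining your argument with the piecewise-affine form of $g$ recovers an eventual geometric rate as well.
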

\begin{proof}
We will first establish that $g(\Delta)$ is continuous and piecewise affine with decreasing slopes. We have by using the expression of $F$ in Eq.~\eqref{eq:app_comb_objective}

\begin{equation}
    g(\Delta) = \frac{\Delta}{m} F(\Delta) = \frac{\Delta}{m}\left(\sum_{i, |c_i| > \Delta} 1 + \frac{1}{\Delta} \sum_{i, |c_i| \leq \Delta} |c_i|\right) = \frac{1}{m}\left(a(\Delta)\:\Delta + b(\Delta)\right)
\end{equation}
where we defined
\begin{align}
    a(\Delta) &= \left|\{i, |c_i| > \Delta\}\right|,\\
    b(\Delta) &= \sum_{i, |c_i| \leq \Delta} |c_i|.
\end{align}

Some useful facts that will be used later are
\begin{align}
    a(\Delta_0) &= a(\max_i |c_i|) = 0\\
    b(\Delta_0) &= \sum_i |c_i|
\end{align}
then by our assumption,
\begin{equation}
    g(\Delta_0) = \sum_i |c_i|/m < \Delta_0
    \label{eq:assumption_delta_0}
\end{equation}

Assume that the $|c_i|$s are ordered such that $|c_1| \leq |c_2| \leq \dots \leq |c_{N_O}|$.
$a(\Delta)$ and $b(\Delta)$ are constant in the intervals $]|c_i|, |c_{i+1}|[$ where $|c_i|\neq  |c_{i+1}|$. Then $g(\Delta)$ is affine in those same intervals and $a(\Delta)$ is a decreasing function so the slopes are decreasing. We now have to prove that $g(\Delta)$ is continuous in all points $|c_{i}|$s. Let $i_0$ be an index, and let us study the limits of $g(\Delta)$ in $|c_{i_0}|^+$ and $|c_{i_0}|^-$. We study $m g(\Delta)$ for lighter notations. We have

\begin{align}
    \text{if }|c_{i_0}|<\Delta <|c_{i_0+1}|, \: m\:g(\Delta) &= \left|\{i, |c_i| > |c_{i_0}|\}\right| \Delta + \sum_{i, |c_i| \leq |c_{i_0}|}|c_i|  \\
    &= \left|\{i, |c_i| > |c_{i_0}|\}\right| \Delta + \sum_{i, |c_i| < |c_{i_0}|}|c_i|+ \sum_{i, |c_i| = |c_{i_0}|}|c_i|\\
    &= \left|\{i, |c_i| > |c_{i_0}|\}\right| \Delta + \sum_{i, |c_i| < |c_{i_0}|}|c_i|+ \left|\{i, |c_i| = |c_{i_0}|\}\right| |c_{i_0}|\\
    \lim_{\Delta \rightarrow |c_{i_0}|^+}m\:g(\Delta) &= \left|\{i, |c_i| > |c_{i_0}|\}\right| |c_{i_0}| + \sum_{i, |c_i| < |c_{i_0}|}|c_i|+ \left|\{i, |c_i| = |c_{i_0}|\}\right| |c_{i_0}|\\
    &= \left|\{i, |c_i| \geq |c_{i_0}|\}\right| |c_{i_0}| + \sum_{i, |c_i| < |c_{i_0}|}|c_i|\\
    \text{if }|c_{i_{0}-1}|<\Delta <|c_{i_0}|, \: m\:g(\Delta) &= \left|\{i, |c_i| \geq |c_{i_0}|\}\right| \Delta + \sum_{i, |c_i| < |c_{i_0}|}|c_i|  \\
    \lim_{\Delta \rightarrow |c_{i_0}|^-}m\:g(\Delta) &= \left|\{i, |c_i| \geq |c_{i_0}|\}\right| |c_{i_0}|+ \sum_{i, |c_i| < |c_{i_0}|}|c_i|
 \end{align}

 We have proved that $\lim_{\Delta \rightarrow |c_{i_0}|^+}g(\Delta) = \lim_{\Delta \rightarrow |c_{i_0}|^-}g(\Delta)$ therefore $g$ is continuous.

 We will now show that under our assumptions, $g(\Delta)$ admits a fixed point $\Delta_{\text{opt}} > 0$ and that the procedure converges to this fixed point. By the contraction mapping principle \cite{hoffman2018numerical}, it is enough to show that $g$ has a Lipschitz constant strictly below 1 on the interval $[\Delta_{\text{opt}}, \:\Delta_0]$.\\

 Let us first establish that $\Delta_{\text{opt}}$ exists. By ~\eqref{eq:assumption_delta_0}, we have that $ \Delta_0 - g(\Delta_0) > 0$
 We will show that there exists $\Delta > 0$ such that $ \Delta-g(\Delta) < 0$. Because $\Delta \mapsto  \Delta-g(\Delta)$ is continuous, by the intermediate values theorem it implies that there exists  $\Delta_{\text{opt}} > 0$ such that $\Delta_{\text{opt}} = g(\Delta_{\text{opt}})$.
 We have
 \begin{align}
     \Delta - g(\Delta) = \Delta - \frac{a(\Delta)}{m}\Delta - \frac{b(\Delta)}{m} = \left(1 -\frac{a(\Delta)}{m} \right)\Delta - \frac{b(\Delta)}{m}
 \end{align}
 By our assumption, we know that there exists $\Delta > 0$ such that $\left|\{i, |c_i| > \Delta\}\right| > m$, then for such a $\Delta$, $1 -\frac{a(\Delta)}{m} < 0$, and the second term in $\Delta - g(\Delta)$ is non positive, which proves that there exists $\Delta > 0$ such that $\Delta - g(\Delta) < 0$.\\

 We will now show that the Lipschitz constant of $g$ is strictly inferior to 1 in the interval $[\Delta_{\text{opt}}, \:\Delta_0]$. We know that $g$ is piecewise affine continuous, then the Lipschitz constant is upper bounded by the maximum of the slopes. Assume that there exists $\Delta_1$ such that $g$ does not cross the identity line in the interval $[\Delta_1,\: \Delta_0]$ (so we have $g(\Delta_1) < \Delta_1$) and that the slope at $\Delta_1$ is superior to 1. Then for all $\Delta < \Delta_1$, we have
 \begin{equation}
      g(\Delta_1) - g(\Delta) \geq \ell(\Delta_1 - \Delta)
 \end{equation}
 where $\ell$ is the lower Lipschitz constant, lower bounded by the lowest slope, so $\ell \geq 1$

 Then we have
 \begin{align}
      g(\Delta_1) - g(\Delta) &\geq \Delta_1 - \Delta\\
      \Delta - g(\Delta) &\geq \Delta_1 - g(\Delta_1) > 0
 \end{align}
 We have then shown that for all $\Delta \leq \Delta_1$, $g(\Delta) < \Delta$ so $g$ can never cross the identity line.

 Because we have shown that $g$ \textsl{does} cross the identity line at some point, it implies that the Lipschitz constant is strictly below 1 in the interval $[\Delta_{\text{opt}}, \Delta_0]$.

\end{proof}

Proposition \ref{prop:calibration_convergence} does not give a bound on the number of necessary iterations to converge, it depends in general on the distribution of coefficients.
The process usually reaches the target tolerance within roughly five iterations. The
calibration can approximately double the time spent resampling because each
iteration requires another pass over the coefficient array. It is worthwhile
when resampling dominates the error and less attractive when gate application
or merging dominates the runtime.

\subsubsection{Semi-deterministic comb resampling}
\label{app:semi_deterministic_comb_resampling}

There is little value in repeatedly sampling a coefficient that is almost
certain to survive. Semi-deterministic comb resampling therefore divides the
Pauli sum into a deterministic head and a stochastic tail. Choose a threshold
$\tau$ and define $D$ as the set of indices whose coefficients are deterministically propagated, and $T$ as its complement
\begin{align}
    D=\{i:\abs{c_i}>\tau\}\,,\quad T=D^c=\{i:\abs{c_i}\leq\tau\}\,,
\end{align}
we can also call $m_{\rm det}$ the number of distinct Paulis supported on $D$, and $m_{\rm stoch}$ the number of distinct Paulis that the stochastic part must compress down to
\begin{align}
    m_{\mathrm{det}}=\abs{D}\,, \quad m_{\mathrm{stoch}}=m-m_{\mathrm{det}}\,.
\end{align}
Finally we can define the one-norm of the stochastic coefficients as well as the comb size as 
\begin{align}
    W_{\mathrm{stoch}}=\sum_{i\in T}\abs{c_i}\,,\quad \Delta_{\mathrm{stoch}}=\frac{W_{\mathrm{stoch}}}{m_{\mathrm{stoch}}}\,.
\end{align}

The coefficients from the resampled Pauli sum are
\begin{equation}
    \widehat c_i=
    \begin{cases}
        c_i, & i\in D,\\
        \sign(c_i)N_i\Delta_{\mathrm{stoch}},
        & i\in T,
    \end{cases}
    \label{eq:app_semidet_estimator}
\end{equation}
where the counts $N_i$ for the smaller $|c_i|$ are generated via comb sampling described in the previous Appendix~\ref{app:comb_sampling}. The larger coefficients are exact, while the coefficients in the tail remain unbiased.

A safe and cheap choice sets $\tau$ from $m$ alone,
\begin{equation}
    \tau=\frac{W}{m}\,,
    \label{eq:app_semidet_threshold}
\end{equation}
which guarantees that fewer than $m$ terms are kept deterministically, since $k$ coefficients above $W/m$ would already carry a one norm of at least $kW/m$. 

Once the large coefficients are removed, most Pauli strings in the tail receive either zero or one tooth. The uncalibrated tail comb therefore lands close to the desired
number of stored Pauli strings, which removes most of the benefit of the
iterative calibration. One may lower $\tau$ to use still more of the available memory, but percentile-based thresholds require either an expensive global selection or a
noisy estimate from a subset of coefficients.

Residual resampling~\cite{liu1998sequential} is a related
construction. It assigns the integer part of every ideal multiplicity
deterministically and samples only the residual probabilities. It likewise
concentrates randomness in the part of the Pauli sum that cannot be allocated
exactly. Which variant is fastest depends on the coefficient distribution and
the target hardware, particularly on whether scans, sorting, and random-number
generation are cheap relative to gate application.

\section{Theoretical analysis of the errors of Monte Carlo Pauli Propagation.}
\label{app:error_accumulation}
In this section we present different results and theorems that help explain the numerical results. 

\subsection{Summary of results and overview of proof strategies}

We start by showing that the error of the simple path sampling method can blow up exponentially. We do so by analyzing an example of a Trotter circuit of a TFIM  (Example \ref{ex:variance_path_sampling_lower_bound}) where we give an exponential lower bound of the variance.
We then derive a general formulation of the mean square error of the estimator obtained after sampling in Lemma \ref{lemma:general_mse}. This lemma will be the starting point of all the future error analysis.
In Theorem \ref{thm:one_norm_sampling_error_unbiased}, we derive an upper bound of the mean square error of the one-norm sampling estimator without any truncation. It grows quadratically with the one norm of the Pauli sum at each layer.
In Theorem \ref{thm:one_norm_sampling_error_bias}, we show how the bound of Theorem \ref{thm:one_norm_sampling_error_unbiased} is modified when one introduces deterministic truncation before sampling. The error can be upper bounded by the same bound as in the case without truncation, but with a modified one norm, to which is added a bias term depending on the truncation. In Result~\ref{result:app_truncated_mse_model}, we propose a phenomenology model to explain the results observed in Fig. \ref{fig:resampling_increase_bias}, by making assumptions on the behavior of the one-norm and the discarded one-norm after truncation. We then explain the errors of comb sampling and comb sampling with truncation observed in Fig. \ref{fig:memory_utilization} by deriving similar error bounds in results \ref{result:comb_sampling_error_unbiased} and \ref{result:comb_sampling_error_bias}. We present these as results and not as theorems because we make technical assumptions that we cannot guarantee will universally hold.
All the bounds derived so far are controlled by the number of raw draws $m_s$, whereas the memory only holds the number $m_p$ of distinct Pauli strings. The two differ because repeated draws are merged, and how much they differ depends on how peaked the coefficient distribution is. We analyze this in Result ~\ref{result:app_effective_draws}.

\subsection{Variance of the path sampling estimator}
\label{app:variance_path_sampling}
In this section, we will illustrate in an example why the simple path sampling estimator produces an error that scales exponentially with depth by computing a lower bound.

\begin{example}[Exponential lower bound in the variance of path-sampling]
\label{ex:variance_path_sampling_lower_bound}
    Consider the task of estimating the expectation value $ f = \Tr[\ketbra{0} U(\thv) Z_1 U(\thv)^\dagger]$, where the unitary matrix corresponds to local $X$ rotations and two-qubit $ZZ$ rotations, the $ZZ$ having a locally connected topology denoted by a set of edges $E$ (like a square lattice) in which qubit $1$ has degree $d\geq1$,
    \begin{align}
        U(\thv) = \left(\prod_{i=1}^{n}R_X^{(i)}(\theta_{1})\prod_{(i,j)\in E}R_{ZZ}^{(i,j)}(\theta_{2})\right)^{L} 
        \label{eq:example_circuit_ising}
    \end{align}
    where we note that all the $X$ rotations share the same parameter $\theta_1$, as well as $\theta_2$ for the $ZZ$, with $0<\theta_1,\theta_2 < \pi/4$, and that $L\geq2$. Then if $\widehat f$ is the path-sampling estimator of $f$ built from $N$ independent paths, as described in Appendix~\ref{app:path_sampling}, and $N$ is held fixed, there exist a constant $\gamma>1$ and a depth $L_0$ such that the variance of $\widehat f$ is lower bounded as
    \begin{align}
        \Var[\widehat f]\geq \gamma^L
        \label{eq:example_variance_claim}
    \end{align}
    for every $L\geq L_0$. One may take any $\gamma$ with $1<\gamma<A_2^{2d}$, where $A_2=\abs{\cos\theta_2}+\abs{\sin\theta_2}$; the rate is therefore set by the degree $d$ of qubit $1$.
\end{example}

\begin{proof}

 We are trying to estimate $\Tr[\ketbra{0} U Z_1 U^\dagger]$ of a system of $n$ qubits with path sampling, as described in Appendix~\ref{app:path_sampling}. 

Recall that path sampling obtains $\widehat f$ the estimator of $f$, by propagating one Pauli, that gets randomly chosen at each new unitary according to the magnitude of its transition amplitudes, as described in Eq.~\eqref{eq:app_path_probability}. Recalling Eq.~\eqref{eq:app_path_full_estimator}, we find that the estimator $\widehat f$ can be written as
\begin{align}
    \widehat f = \frac{1}{N}\sum_{i=1}^N \Tr[\widehat c_{P_K}^{(i)} P_K^{(i)} \ketbra{0}]\,,\quad \widehat c_{P_K} = \prod_{k=1}^K \sign{a_{P_{k-1}\to P_k}}A_k(P_{k-1})
    \label{eq:path_estimator_runs}
\end{align}
where the sum on $i$ denotes the different number of runs.

To compute the variance of such estimator we define the notion of Pauli paths. A path can be used to model the evolution of a Pauli along the circuit, and is represented as a vector $\Phi \in \{-1,0,1\}^K$, where $K$ is the total number of gates. We write $\vartheta_k$ for the rotation angle of gate $k$, so that $\vartheta_k\in\{\theta_1,\theta_2\}$ depending on whether gate $k$ is an $R_X$ or an $R_{ZZ}$ rotation; the separate symbol avoids a clash between the gate index and the two rotation-type labels. The entries of $\Phi$ are 

\begin{itemize}
    \item $\Phi_k = 0$ if the gate $k$ commutes with $P_{k-1}$
    \item $\Phi_k = 1$ if $P_{k-1}$ is chosen with probability $\frac{|\cos\vartheta_k|}{|\cos\vartheta_k| + |\sin\vartheta_k|}$
    \item $\Phi_k = -1$ if $i[P_{k-1}, Q_k]/2$ is chosen with probability $\frac{|\sin\vartheta_k|}{|\cos\vartheta_k| + |\sin\vartheta_k|}$
\end{itemize}
where basically we see that a path ``stores'' the history of the Pauli when using the path sampling strategy. We note the final coefficient and Pauli associated to $\Phi$ as $c(\Phi)$ and $P_\Phi$. Because a path is a probabilistic outcome, it can be associated with a probability $p(\Phi)$, and its contribution to the final estimator is $w(\Phi) = c(\Phi)\Tr[P_\Phi \ketbra{0}]$, which is nonzero only if $P_\Phi \in \{ I,  Z\}^n$. We have for a path ending in $\{ I,  Z\}^n$, the $w(\Phi)$ is:
\begin{align}\label{eq:strength_path}
|w(\Phi)| = \prod_{k=1}^{K}A_k(P_{k-1})= \prod_{k=1}^{K}(|\cos\vartheta_{k}|+|\sin\vartheta_{k}|)^{|\Phi_{k}|}
\end{align}
with an associated probability
\begin{align}\label{eq:prob_path}
p(\Phi) = \prod_{k=1}^K \left( \frac{|\cos\vartheta_k|}{|\cos\vartheta_k| + |\sin\vartheta_k|} \right)^{\mathbb{I}\{\Phi_k=1\}} \left( \frac{|\sin\vartheta_k|}{|\cos\vartheta_k| + |\sin\vartheta_k|} \right)^{\mathbb{I}\{\Phi_k=-1\}}
\end{align}
We used the fact that if for an index $k$, $\Phi_k = 0$, then $P_k = P_{k-1}$ deterministically and $A_k(P_{k-1}) = 1$. If $|\Phi_k|=1$, then $A_k(P_{k-1}) = |\cos\vartheta_k| + |\sin\vartheta_k|$ and the corresponding probability depends on the sign of $\Phi_k$.

Let $Y$ denote the contribution of a single sampled path, that is, the random variable that takes the value $w(\Phi)$ when the path $\Phi$ is sampled, so that the estimator of Eq.~\eqref{eq:path_estimator_runs} reads $\widehat f = \frac{1}{N}\sum_{i=1}^N Y_i$ with $Y_1,\ldots,Y_N$ independent copies of $Y$. We use a separate letter here because $W$ denotes the one norm everywhere else in this appendix. The second moment of $Y$ is
\begin{align}
\Ebb[Y^{2}] &= \sum_{\Phi}w(\Phi)^{2}p(\Phi) \nonumber\\
&= \sum_{\Phi\,:\,P_\Phi\in\{I,Z\}^n}|w(\Phi)|\prod_{k=1}^{K}|\cos\vartheta_{k}|^{\mathbb{I}\{\Phi_{k}=1\}}|\sin\vartheta_{k}|^{\mathbb{I}\{\Phi_{k}=-1\}}\,,
\label{eq:second_moment_path_sum}
\end{align}
where we used Eqs.~\eqref{eq:strength_path} and~\eqref{eq:prob_path} to cancel one factor of $A_k(P_{k-1})$ per gate between $w(\Phi)^2$ and $p(\Phi)$. The sum may be restricted to the paths ending in $\{I,Z\}^n$ because every other path has $w(\Phi)=0$, and Eq.~\eqref{eq:strength_path} holds precisely on that set.

We lower bound the second moment of Eq.~\eqref{eq:second_moment_path_sum} by restricting the sum to a particular family of paths, which is legitimate because every term in that sum is nonnegative. Recall that the propagation starts from $P_0 = Z_1$, and that expanding the product of Eq.~\eqref{eq:example_circuit_ising} in the conjugation $U(\thv)Z_1U(\thv)^\dagger$ makes the $R_{ZZ}$ rotations of a given layer act on the running Pauli string before the $R_X$ rotations of the same layer. Since $L\geq2$, the first and the last $R_X$ rotation acting on qubit $1$ are distinct gates, and we may consider the paths satisfying the following conditions:
\begin{enumerate}
    \item At the first noncommuting $R_X$ rotation acting on qubit $1$, the path takes direction $-1$. This transforms $Z_1$ into $Y_1$, up to an irrelevant sign.
    \item At every noncommuting $R_X$ rotation other than the two singled out in conditions 1 and 4, whether it acts on qubit $1$ or on one of its neighbours and whether it precedes or follows them within its layer, the path takes direction $+1$, so that the current Pauli string is left unchanged. We refer to these as the intermediate rotations.
    \item Among the $R_{ZZ}$ rotations acting on qubit $1$, the path takes direction $-1$ an even number of times. The Pauli string at the end is equal to $Y_1 \otimes Q$ where $Q \in \{I, Z\}^{\otimes n-1}$.
    \item At the last $R_X$ rotation acting on qubit $1$, the path again takes direction $-1$. The Pauli string at the end is in $\{I, Z\}^{\otimes n}$.
\end{enumerate}

In each of the remaining $L-1$ layers, qubit $1$ carries $X$ or $Y$ while all of its neighbours carry $I$ or $Z$, so each of the $d$ incident edges hosts a rotation that anticommutes with the string, and therefore one can choose $2k$ times direction $-1$ among $n_{ZZ} = d(L-1)$ possible gates. All other $R_X$ rotations take the direction $1$, and we note their number $r(\Phi)$.

Denote by $ A_a = |\cos\theta_a| + |\sin\theta_a|,\, a \in \{1,2\}$ so that $A_1$ multiplies $|w(\Phi)|$ at every noncommuting $R_X$ rotation and $A_2$ at every noncommuting $R_{ZZ}$ rotation, in Eq.~\eqref{eq:strength_path}.  Collecting the factors of Eq.~\eqref{eq:strength_path} and Eq.~\eqref{eq:prob_path} gate by gate, its contribution to Eq.~\eqref{eq:second_moment_path_sum} is
\begin{align}
    w(\Phi)^{2}p(\Phi) = \left(A_1|\sin\theta_1|\right)^{2}\left(A_1|\cos\theta_1|\right)^{r(\Phi)}A_2^{n_{ZZ}}|\sin\theta_2|^{2k}|\cos\theta_2|^{n_{ZZ}-2k}\,,
    \label{eq:family_path_weight}
\end{align}
where the first factor collects the two direction-$(-1)$ choices of conditions 1 and 4, the second the $r(\Phi)$ direction-$(+1)$ choices of condition 2, and the last three the $n_{ZZ}$ noncommuting $R_{ZZ}$ rotations of condition 3. The one-norm factors satisfy $A_{1,2}>1$ for every $\theta_{1,2}$ that is not a multiple of $\pi/2$. The restriction $0<\theta_1<\pi/4$ is what makes the intermediate factors harmless as well, since then
\begin{align}
    A_1|\cos\theta_1| = \cos^{2}\theta_1 + |\sin\theta_1\cos\theta_1| = \frac{1 + \cos 2\theta_1 + |\sin 2\theta_1|}{2} > 1\,,
    \label{eq:intermediate_x_factor}
\end{align}

so that each of the $r(\Phi)$ intermediate factors is at least one and may be omitted when deriving a lower bound. Since $A_1>1$ as well, Eq.~\eqref{eq:family_path_weight} gives
\begin{align}
    w(\Phi)^{2}p(\Phi) \geq |\sin\theta_1|^{2}A_2^{n_{ZZ}}|\sin\theta_2|^{2k}|\cos\theta_2|^{n_{ZZ}-2k}\,.
    \label{eq:family_path_weight_bound}
\end{align}

Distinct paths of the family differ in which of the $n_{ZZ}$ noncommuting incident $R_{ZZ}$ rotations take direction $-1$, and there are $\binom{n_{ZZ}}{2k}$ ways of selecting $2k$ of them. Keeping only these paths in Eq.~\eqref{eq:second_moment_path_sum} and inserting Eq.~\eqref{eq:family_path_weight_bound} therefore yields
\begin{align}
    \Ebb[Y^{2}] \geq |\sin\theta_1|^{2}A_2^{n_{ZZ}}\sum_{k=0}^{\lfloor n_{ZZ}/2\rfloor}\binom{n_{ZZ}}{2k}|\sin\theta_2|^{2k}|\cos\theta_2|^{n_{ZZ}-2k}\,.
    \label{eq:second_moment_even_sum}
\end{align}
The even part of a binomial expansion resums to
\begin{align}
    \sum_{k=0}^{\lfloor n_{ZZ}/2\rfloor}\binom{n_{ZZ}}{2k}x^{2k}y^{n_{ZZ}-2k} = \frac{(y+x)^{n_{ZZ}} + (y-x)^{n_{ZZ}}}{2}\,,
    \label{eq:even_binomial_identity}
\end{align}
as follows by adding the expansions of $(y+x)^{n_{ZZ}}$ and $(y-x)^{n_{ZZ}}$, in which the odd powers of $x$ cancel pairwise. Setting $x = |\sin\theta_2|$ and $y = |\cos\theta_2|$, we obtain
\begin{align}
    \Ebb[Y^{2}] &\geq |\sin\theta_1|^{2} (|\cos\theta_2|+ |\sin\theta_2|)^{n_{ZZ}} \frac{ (|\cos\theta_2|+ |\sin\theta_2|)^{n_{ZZ}} +  (|\cos\theta_2|- |\sin\theta_2|)^{n_{ZZ}}}{2}
    \label{eq:second_moment_resummed}\\
    &\geq \frac{|\sin\theta_1|^{2}}{2} (|\cos\theta_2|+ |\sin\theta_2|)^{2n_{ZZ}}
    = \frac{|\sin\theta_1|^{2}}{2}\left( |\cos\theta_2|+ |\sin\theta_2|\right)^{2d(L-1)}\,,
    \label{eq:second_moment_lower_bound}
\end{align}
where in the second line we discarded $ (|\cos\theta_2|-|\sin\theta_2|)^{n_{ZZ}}$, and in the last equality we substituted $n_{ZZ} = d(L-1)$. Since $(|\cos\theta_2|+|\sin\theta_2|) > 1$ for $0 < |\theta_{1,2}| < \pi/4$ , the second moment of a single-path contribution grows exponentially with the number of layers $L$.

Finally, path sampling is unbiased by Eq.~\eqref{eq:app_path_unbiased_step}, so $\Ebb[Y] = f$, and $f$ is the expectation value of a Pauli operator, so $|f| \leq 1$. The variance of a single-path contribution therefore obeys
\begin{align}
    \Var[Y] = \Ebb[Y^{2}] - f^{2} \geq \frac{|\sin\theta_1|^{2}}{2}\left( |\cos\theta_2|+ |\sin\theta_2|\right)^{2d(L-1)} - 1\,.
    \label{eq:variance_single_path}
\end{align}
The $N$ paths entering Eq.~\eqref{eq:path_estimator_runs} are independent and identically distributed, hence
\begin{align}
    \Var[\widehat f] = \frac{\Var[Y]}{N} \geq \frac{1}{N}\left[\frac{|\sin\theta_1|^{2}}{2}\left( |\cos\theta_2|+ |\sin\theta_2|\right)^{2d(L-1)} - 1\right]\,.
    \label{eq:variance_path_estimator}
\end{align}
For a fixed number of paths $N$, the right hand side grows exponentially in the circuit depth $L$. Equivalently, for every constant $\gamma$ with $1 < \gamma < A_2^{2d}$ the ratio $(A_2^{2d}/\gamma)^{L}$ diverges, so there exists an $L_0$ such that
\begin{align}
    \Var[\widehat f] \geq \gamma^{L}
    \label{eq:variance_path_exponential}
\end{align}
for every $L \geq L_0$, which is the statement of Eq.~\eqref{eq:example_variance_claim}.
\end{proof}

\subsection{Layered propagation and the exact error recursion}
In this subsection we present a general lemma that can be used to compute the MSE of a propagated observable.

\begin{lemma}[General MSE formula for layered propagation]\label{lemma:general_mse}
    Let a unitary circuit $U$ of $L$ gates be $U=U_L\cdots U_2U_1$ and define the state after $j$ forward layers by $\rho_j=U_j\cdots U_1\rho U_1^\dagger\cdots U_j^\dagger$, likewise, define the back-propagated observable after $k$ layers as $O_k=U_{L-k+1}^\dagger\cdots U_L^\dagger O U_L\cdots U_{L-k+1}=\sum_i c_i^{(k)}P_i$, so that $O_0=O$ and $O_L=U^\dagger OU$, with $f = \Tr[\rho U^\dagger O U]$, then, if compression is applied at the layers of a subset $\RC\subseteq\{1,\dots,L\}$, the mean squared error after $L$ layers is 
    \begin{equation}
        \MSE[\widehat f]
        =\Ebb\left[
          \left(\sum_{k\in\RC}\sum_i
          \delta c_i^{(k)}\Tr[\rho_{L-k}P_i]\right)^2
          \right]\,,
    \end{equation}
    where $\delta c_j^{(k+1)}  = \widehat c_j^{(k+1)}-c_j^{(k+1)}$ is the error due to compression at step $k+1$.
\end{lemma}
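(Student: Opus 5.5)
The plan is a telescoping (error-propagation) argument in the Heisenberg picture that uses only linearity and unitarity. Write $\Phi_k[X]=U_{L-k+1}^\dagger X U_{L-k+1}$ for the single-layer Heisenberg map. Let $\widehat O_k$ be the stochastic Pauli sum actually stored after $k$ back-propagated layers, so that $\widehat O_0=O$. Its evolution is defined as follows: we set $\widehat O_k=\Phi_k[\widehat O_{k-1}]$ if $k\notin\RC$. If $k\in\RC$, we set $\widehat O_k=\Phi_k[\widehat O_{k-1}]+\delta O^{(k)}$, where $\delta O^{(k)}=\sum_i\delta c_i^{(k)}P_i$ is the compression error injected at that layer. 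Here the $c_i^{(k)}$ entering $\delta c_i^{(k)}$ are read as the coefficients of the sum just before compression at layer $k$. This is the natural reading of the lemma's notation, and I will state it explicitly in the proof. The estimator is $\widehat f=\Tr[\rho\,\widehat O_L]$.

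First, I unroll the recursion by induction on $k$. The induction gives
\begin{equation}
\widehat O_L=\Phi_L\circ\cdots\circ\Phi_1[O]+\sum_{k\in\RC}\Phi_L\circ\cdots\circ\Phi_{k+1}[\delta O^{(k)}].
\end{equation}
This step uses only the linearity of each $\Phi_j$. The first term is exactly $O_L=U^\dagger OU$.

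Second, I trace against $\rho$ and move the remaining layers onto the state using cyclicity. The remaining layers are $\Phi_{k+1},\dots,\Phi_L$, which correspond to the unitaries $U_{L-k},\dots,U_1$. Cyclicity gives
\begin{equation}
\Tr[\rho\,\Phi_L\circ\cdots\circ\Phi_{k+1}[P_i]]=\Tr[U_{L-k}\cdots U_1\rho U_1^\dagger\cdots U_{L-k}^\dagger P_i]=\Tr[\rho_{L-k}P_i].
\end{equation}
Subtracting $f=\Tr[\rho O_L]$ then yields the deterministic pathwise identity $\widehat f-f=\sum_{k\in\RC}\sum_i\delta c_i^{(k)}\Tr[\rho_{L-k}P_i]$. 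Squaring this identity and taking the expectation gives the stated MSE formula.

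The main obstacle is bookkeeping rather than analysis. The key point is that the later layers act on the already-perturbed sum, and linearity is what makes it legitimate to split off each injected error. I also need the index convention to match. Back-propagated layer $k$ applies $U_{L-k+1}$, so the forward state that pairs with an error injected there is $\rho_{L-k}$. This matching is the off-by-one point I would verify carefully. I would also remark that no independence between the different $\delta c^{(k)}$ is needed, since the formula is an exact identity before the expectation is taken. Deterministic truncation fits the same framework, as it simply contributes a nonrandom $\delta c$.
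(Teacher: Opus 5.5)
Your proposal is correct and follows essentially the same route as the paper. Both arguments use only linearity of the layer maps and cyclicity of the trace to obtain the pathwise identity $\widehat f-f=\sum_{k\in\RC}\sum_i\delta c_i^{(k)}\Tr[\rho_{L-k}P_i]$, and then square and average it, with the same reading of $c_i^{(k)}$ as the exactly propagated pre-compression coefficients; the only cosmetic difference is that you unroll the recursion at the operator level before tracing against $\rho$, whereas the paper telescopes the scalar running estimate $\sum_i\widehat c_i^{(k)}\Tr[\rho_{L-k}P_i]$ one layer at a time.
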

\begin{proof}

Write the circuit as $U=U_L\cdots U_2U_1$ and define the state after $j$ forward
layers by
\begin{equation}
    \rho_j=U_j\cdots U_1\rho U_1^\dagger\cdots U_j^\dagger\,.
    \label{eq:app_forward_state}
\end{equation}

Similarly, we can define the backward-propagated observable after $k$ layers to be
\begin{equation}
    O_k=U_{L-k+1}^\dagger\cdots U_L^\dagger O
         U_L\cdots U_{L-k+1}
       =\sum_i c_i^{(k)}P_i\,,
    \label{eq:app_exact_observable_layer}
\end{equation}
where $c_i^{(k)}$ represents the coefficient of the $P_i$ Pauli, with $k$ representing the layer. Then the expectation value of $O_k$ and a general $\rho$ can be written as
\begin{equation}
    \Tr[\rho O_k]
    =\sum_i c_i^{(k)}\Tr[\rho P_i].
    \label{eq:app_pauli_trace_expansion}
\end{equation}

Our randomized algorithm stores estimates of $c_i^{(k)}$ instead. Indeed, we have $\widehat O_k=\sum_i\widehat c_i^{(k)}P_i$. To be able to find a general expression for the error, we need to find a recursion step that allows us to go from layer $k$ to layer $k+1$, and we can do so as follows
\begin{equation}
    U_{L-k}^\dagger P_iU_{L-k}
    =\sum_j a_{ij}^{(k+1)}P_j\,.
    \label{eq:app_layer_transfer_coefficients}
\end{equation}

Then, if we propagate the current random coefficients, obtained by using the randomized compression algorithm, exactly through layer $k+1$, the resulting coefficients would be

\begin{equation}
    c_j^{(k+1)}
    =\sum_i\widehat c_i^{(k)}a_{ij}^{(k+1)}\,,
    \label{eq:app_precompression_coefficient}
\end{equation}
where we note that we do not use $\hat c_j^{(k+1)}$ as this is an exact propagation from the previous $\hat c_i^{(k)}$. It is the same notation as the true expansion defined in Eq.~\eqref{eq:app_pauli_trace_expansion} that we will keep for simplicity for the rest of the paper unless explicitly stated otherwise.

After compression, we can write the stored coefficient as the exact propagated coefficient $c_j^{(k+1)}$ and a perturbation due to the randomness or truncation $\delta c_j^{(k+1)}$
\begin{equation}
    \widehat c_j^{(k+1)}=c_j^{(k+1)}+\delta c_j^{(k+1)}\,.
    \label{eq:app_local_delta_definition}
\end{equation}

Therefore, when propagating from step $k$ to step $k+1$, the running estimate $\Tr[\rho_{L-k}\widehat O_k]$ of $f = \Tr[\rho U^\dagger O U]$ transforms as
\begin{align}
    \sum_i\widehat c_i^{(k)}
      \Tr[\rho_{L-k}P_i]
    &=\sum_jc_j^{(k+1)}
      \Tr[\rho_{L-k-1}P_j]\\
    &=\sum_j\left(
      \widehat c_j^{(k+1)}-\delta c_j^{(k+1)}
      \right)
      \Tr[\rho_{L-k-1}P_j]
\end{align}
where in the first identity we used the map in Eq.~\eqref{eq:app_layer_transfer_coefficients}, and in the second we used Eq.~\eqref{eq:app_local_delta_definition} to write the propagated coefficient as a function of the estimated coefficient and the error. By rearranging this sum, we can find 
\begin{align}
    \sum_j
      \widehat c_j^{(k+1)}
      \Tr[\rho_{L-k-1}P_j] &= \sum_i\widehat c_i^{(k)}
      \Tr[\rho_{L-k}P_i]
      +\sum_j\delta c_j^{(k+1)}
      \Tr[\rho_{L-k-1}P_j]\label{eq:app_layer_error_recursion}
\end{align}
that gives us a description of the compressed function at step $k+1$ as a function of the coefficients on the previous step, and the error. This is the crucial expression that will allow us to write the error as a recursive formula. Indeed, we can apply Eq.~\eqref{eq:app_layer_error_recursion} recursively and write
\begin{equation}
    \widehat f = \displaystyle\sum_j \widehat c_j^{(L)} \Tr[\rho P_j] = \sum_{k=1}^L \sum_j \delta c_j^{(k)}\Tr[\rho_{L-k}P_j] + \sum_j \widehat c_j^{(0)}\Tr[\rho_{L}P_j]
\end{equation}
to obtain a formula of our estimator $\widehat f$ as a function of the errors and estimators at each step. The only thing that is left now is to compute the error. Hence, with the convention $\widehat c_j^{(0)} = c_j^{(0)}$, we have that the original $f$ can be written as $ f = \displaystyle\sum_j \widehat c_j^{(0)}\Tr[\rho_{L}P_j]$. We can readily compute the error of our estimator, that reads
\begin{equation}
    \widehat f-f
    =\sum_{k=1}^{L}\sum_i
      \delta c_i^{(k)}\Tr[\rho_{L-k}P_i]\,.
    \label{eq:app_full_error_recursion}
\end{equation}
If the compression occurs only at the layers of a subset $\RC\subseteq\{1,\dots,L\}$, the outer sum is restricted to $k\in\RC$, because $\delta c_i^{(k)}=0$ at every layer where no compression takes place. Since $P_i$ is a Pauli string and $\rho_{L-k}$ a density matrix, we also have $|\Tr[\rho_{L-k}P_i]|\leq1$ at every layer, a bound used repeatedly below.

Then, we can write the mean squared error by plugging in Eq.~\eqref{eq:app_full_error_recursion}
\begin{equation}
    \MSE[\widehat f]
    =\Ebb\left[
      \left(\sum_{k\in\RC}\sum_i
      \delta c_i^{(k)}\Tr[\rho_{L-k}P_i]\right)^2
      \right]\,,
    \label{eq:app_mse_expanded}
\end{equation}
which concludes the proof.

\end{proof}

Equipped with this lemma, we can move forward to bounding the MSE of the unbiased one-norm, as well as the comb sampling strategy.

\subsection{One-norm sampling}
\label{app:one_norm_sampling_error}

\subsubsection{Bound on the variance of the one-norm sampling strategy}
\label{app:one_norm_sampling_error_unbiased}
Before turning to the mean squared error of the one-norm sampling with truncation, let us get a feeling for the computation by analysing the variance of the estimator, assuming no truncation is done. The variance tends to zero when the number of samples goes to infinity, and the bound grows quadratically with the one norm of the Pauli sum at each layer.

\begin{theorem}[Unbiased one-norm sampling]\label{thm:one_norm_sampling_error_unbiased}
Consider $f = \Tr[\rho U^\dagger O U]$, such that $U = U_L \dots U_2 U_1$. Suppose that we propagate the observable forward using the one-norm sampling strategy of Appendix~\ref{app:one_norm_resampling} to reduce the memory overhead at every layer $k$ of a subset $\RC\subseteq\{1,\dots,L\}$, using $m_s$ independent one-norm draws at each of them. Then the variance is bounded as
\begin{equation}
    \Var[\widehat f]
    \leq\frac{1}{m_s}
    \sum_{k\in \RC}\Ebb\left[W_k^2\right]\,,
    \label{eq:app_unbiased_general_bound}
\end{equation}
where $W_k=\sum_i\abs{c_i^{(k)}}$ is the one norm of the Pauli sum arriving at layer $k$. For $k\geq1$ this is itself a random variable, because the coefficients arriving at layer $k$ depend on the outcome of the resampling performed at the earlier layers, which is why the bound carries an expectation value. If $W_k\leq\overline W_k$ almost surely for some deterministic $\overline W_k$, then $\Ebb[W_k^2]$ may be replaced by $\overline W_k^2$. In particular, if resampling is applied at every layer, $\RC=\{1,\dots,L\}$, and the one norm grows at most exponentially, $W_k \leq W_0 \gamma^k$ with $\gamma>1$, then we can bound the variance as
\begin{equation}
    \Var[\widehat f]
    \leq
    \frac{W_0^2\gamma^2}{m_s}
    \frac{\gamma^{2L}-1}{\gamma^2-1}\,,
    \label{eq:app_unbiased_exponential_bound}
\end{equation}
We stress that the exponent here is the circuit depth $L$ and not the number $\abs{\RC}$ of resampling events. For a sparse $\RC$ the relevant quantity is $\sum_{k\in\RC}\gamma^{2k}$, which is controlled by the largest index in $\RC$ rather than by how many indices it contains.
\end{theorem}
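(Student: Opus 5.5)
Starting from Lemma~\ref{lemma:general_mse}, the error of the estimator decomposes as
\begin{equation*}
    \widehat f-f=\sum_{k\in\RC}X_k,\qquad X_k:=\sum_i\delta c_i^{(k)}\Tr[\rho_{L-k}P_i].
\end{equation*}
Since no truncation is applied, the compression at layer $k$ is unbiased conditionally on everything that happened before it. Let $\FC_{k}$ denote the sigma-algebra generated by the resampling outcomes at layers up to $k$. Then $\Ebb[\delta c_i^{(k)}\mid\FC_{k-1}]=0$ by the unbiasedness computation in Appendix~\ref{app:one_norm_resampling}, applied to the pre-compression coefficients $c_i^{(k)}$, which are $\FC_{k-1}$-measurable. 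So $(X_k)_{k\in\RC}$ is a martingale difference sequence, and $\Ebb[\widehat f]=f$. The cross terms vanish: for $k<k'$, $\Ebb[X_kX_{k'}]=\Ebb[X_k\,\Ebb[X_{k'}\mid\FC_{k'-1}]]=0$. This gives
\begin{equation*}
    \Var[\widehat f]=\MSE[\widehat f]=\sum_{k\in\RC}\Ebb[X_k^2]=\sum_{k\in\RC}\Ebb\bigl[\Ebb[X_k^2\mid\FC_{k-1}]\bigr].
\end{equation*}

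The key step is to bound the conditional second moment of a single one-norm resampling. Conditionally on $\FC_{k-1}$, write $t_i=\Tr[\rho_{L-k}P_i]$, which satisfies $|t_i|\leq1$. The resampled sum is an average of $m_s$ i.i.d.\ draws, so
\begin{equation*}
    X_k=\frac{1}{m_s}\sum_{s=1}^{m_s}\bigl(Z_s-\Ebb[Z_s]\bigr),
\end{equation*}
where each draw $Z_s$ picks index $i$ with probability $|c_i^{(k)}|/W_k$ and takes the value $\sign(c_i^{(k)})W_kt_i$. Its conditional variance is therefore $\Var[Z]/m_s\leq\Ebb[Z^2]/m_s\leq W_k^2/m_s$, because $|Z|\leq W_k$. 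Working with the draws avoids summing the per-coefficient variances from Appendix~\ref{app:one_norm_resampling}, whose multinomial covariances would otherwise have to be tracked. Taking the outer expectation yields $\Ebb[X_k^2]\leq\Ebb[W_k^2]/m_s$. Summing over $k\in\RC$ gives Eq.~\eqref{eq:app_unbiased_general_bound}. The almost-sure replacement $\Ebb[W_k^2]\leq\overline W_k^2$ is immediate.

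For Eq.~\eqref{eq:app_unbiased_exponential_bound}, I would substitute $W_k\leq W_0\gamma^k$ with $\RC=\{1,\dots,L\}$ and evaluate the geometric sum
\begin{equation*}
    \sum_{k=1}^{L}\gamma^{2k}=\gamma^2\,\frac{\gamma^{2L}-1}{\gamma^2-1}.
\end{equation*}
For a sparse $\RC$, the same substitution gives $\sum_{k\in\RC}\gamma^{2k}$, which is dominated by the largest index in $\RC$; this justifies the closing remark.

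I expect the main obstacle to be bookkeeping rather than estimation. First, one must check that the indexing in Lemma~\ref{lemma:general_mse} makes the pre-compression coefficients at layer $k$ measurable with respect to earlier randomness, so that the martingale structure holds. Second, $W_k$ must be interpreted as the one-norm of the sum actually fed into the resampler, which is random; this is why the bound carries $\Ebb[W_k^2]$ and not a deterministic quantity. Once conditional unbiasedness is established, both the orthogonality of the errors and the single-step bound are short.
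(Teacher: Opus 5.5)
Your argument is correct and follows the paper's proof: Lemma~\ref{lemma:general_mse}, conditional unbiasedness of each resampling step to kill the cross-layer terms, a per-layer bound of $W_k^2/m_s$ conditional on the earlier history, and then the geometric sum. The only difference is in bookkeeping. You write the layer error as a centred mean of $m_s$ i.i.d.\ single-draw contributions, and this reproduces exactly the paper's conditional second moment $\frac{1}{m_s}\bigl(W_k\sum_i|c_i^{(k)}|t_i^2-(\sum_i c_i^{(k)}t_i)^2\bigr)$ with $t_i=\Tr[\rho_{L-k}P_i]$, without having to track multinomial covariances. The paper instead works with coefficient-level variances and covariances, and it reuses that form later for comb sampling, where no i.i.d.\ decomposition exists.
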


\begin{proof}
Every one-norm resampling step is unbiased by Eq.~\eqref{eq:app_one_norm_unbiased}, so $\Ebb[\widehat f]=f$ and the variance coincides with the mean squared error, $\Var[\widehat f]=\MSE[\widehat f]$; we bound the latter. Throughout, $\FC_k$ denotes the $\sigma$-algebra generated by every resampling outcome at the layers preceding $k$, so that conditioning on $\FC_k$ fixes the coefficients $c_i^{(k)}$ arriving at layer $k$, and with them $W_k$ and $p_i^{(k)}$, leaving only the layer-$k$ draws random. Every layer-$k$ moment below is conditional in this sense.
Let us start by defining, at a layer $k$, the total one norm $W_k$ and the probability of the $i$-th outcome $p_i^{(k)}$
\begin{equation}
    W_k=\sum_i\abs{c_i^{(k)}}\,,
    \qquad
    p_i^{(k)}=\frac{\abs{c_i^{(k)}}}{W_k}\,.
    \label{eq:app_layer_probabilities}
\end{equation}

Let $m_s$ be the number of draws of the multinomial distribution created by sampling from the probabilities $p_i^{(k)}$. Recall that the estimator for this sampling strategy was defined in Eq.~\eqref{eq:app_one_norm_estimator} as
\begin{equation}
    \widehat c_i^{(k)}
    =\sign(c_i^{(k)})
      \frac{W_k}{m_s}N_i^{(k)}\,,
    \label{eq:app_layer_multinomial_estimator}
\end{equation}
where $N_i^{(k)}$ is the number of draws at layer $k$ that return the Pauli string $P_i$.

We have already shown in Eq.~\eqref{eq:app_one_norm_unbiased} that $\widehat c_i^{(k)}$ is unbiased, and that 
\begin{align}
    \Ebb[N_i^{(k)}]
    &=m_sp_i^{(k)}\,.
\end{align}

By using standard properties of the multinomial distribution~\cite{tsun2020multinomial}, we can also see that the variance of $N_i^{(k)}$
\begin{align}
   \Var[N_i^{(k)}]
    &=m_sp_i^{(k)}(1-p_i^{(k)}) \,,
    \label{eq:app_multinomial_var}
\end{align}
and similarly, the covariance between two different counts $N_i^{(k)}$ and $N_j^{(k)}$ with $i\neq j$ in the same layer is
\begin{align}
    \Covar[N_i^{(k)},N_j^{(k)}] =-m_sp_i^{(k)}p_j^{(k)}\,.
    \label{eq:app_multinomial_covar}
\end{align}

Recall that the goal is to compute the variance of the estimator. To that end, we are going to use Lemma~\ref{lemma:general_mse}. Before we do that, we can use the results we just derived to individually compute the quantities required. We start by recalling the definition of $\delta c_j^{(k)}  = \widehat c_j^{(k)}-c_j^{(k)}$, and using it to compute $\Ebb[\delta c_i^{(k)}]$
\begin{align}
    \Ebb[\delta c_i^{(k)}]&=0\,,
    \label{eq:app_zero_local_bias}
\end{align}
where we just need to use that $\Ebb[\widehat c_j^{(k)}] = c_j^{(k)}$ as already shown in Eq.~\eqref{eq:app_one_norm_unbiased}. Other quantities that we need to compute to use Lemma~\ref{lemma:general_mse} are $\Ebb[(\delta c_i^{(k)})^2], \Ebb[\delta c_i^{(k)}\delta c_j^{(k)}]$. Let us start with the second moment of $\delta c_i^{(k)}$
\begin{align}
    \Ebb[(\delta c_i^{(k)})^2]
    =& \Var[\widehat c_i^{(k)}]\\
    =& \frac{W_k^2}{m_s^2}\:\Var[N_i^{(k)}]\\
    =&\frac{W_k}{m_s}\abs{c_i^{(k)}}
      \left(1-\frac{\abs{c_i^{(k)}}}{W_k}\right)\,,
    \label{eq:app_local_coeff_variance}
\end{align}
in the first equality we just used that $\Ebb[(\delta c_i^{(k)})^2] = \Ebb[(\widehat c_i^{(k)} - \Ebb[\widehat c_i^{(k)}])^2]$, the definition of the variance. In the second equality we used the form of $\widehat c_i^{(k)}$ in Eq.~\eqref{eq:app_layer_multinomial_estimator}, and in the last equality we used Eq.~\eqref{eq:app_multinomial_var} along with $p_i^{(k)}$ in Eq.~\eqref{eq:app_layer_probabilities}. Finally, we can move to $\Ebb[\delta c_i^{(k)}\delta c_j^{(k)}]$
\begin{align}
    \Ebb[\delta c_i^{(k)}\delta c_j^{(k)}] =& \Covar[\widehat c_i^{(k)},\widehat c_j^{(k)}]\\
    =& \frac{W_k^2}{m_s^2}\sign(c_i^{(k)})\sign(c_j^{(k)})\Covar[N_i^{(k)},N_j^{(k)}]\\
    = & -\frac{c_i^{(k)}c_j^{(k)}}{m_s}\,,\qquad i\neq j\,,
    \label{eq:app_local_coeff_covariance}
\end{align}
where in the first equality we used the definition of covariance $\Covar[x,y] = \Ebb[(x-\Ebb[x])(y-\Ebb[y])]$. Similarly to the previous equation, in the second equality we use the definition of $\widehat c_i^{(k)}$ in Eq.~\eqref{eq:app_layer_multinomial_estimator}. In the final equality we used Eq.~\eqref{eq:app_multinomial_covar} along with $p_i^{(k)}$ in Eq.~\eqref{eq:app_layer_probabilities}.

Before applying the lemma we note that the sampling performed at a layer $k$ is independent of the history of the sampling before $k$. Concretely, for $k > \ell$,
\begin{equation}
    \Ebb[\delta c_i^{(k)}\:\delta c_j^{(\ell)}] = \Ebb\left[\Ebb\left[\delta c_i^{(k)}\delta c_j^{(\ell)}\,\middle|\,\FC_k\right]\right]  = \Ebb\left[\delta c_j^{(\ell)}\,\Ebb\left[\delta c_i^{(k)}\,\middle|\,\FC_k\right]\right] = 0
    \label{eq:onenorm_conditioning}
\end{equation}
because $\Ebb[\delta c_i^{(k)}\mid\FC_k] = 0$ by Eq.~\eqref{eq:app_zero_local_bias}, and because $\delta c_j^{(\ell)}$ is $\FC_k$-measurable for every $\ell<k$, hence constant under the inner expectation value.

With this, we are ready to invoke Lemma~\ref{lemma:general_mse} and massage the equation using the results we obtained above
\begin{align}\label{eq:mse_nobias_expansion}
    \MSE[\widehat f]
    =&\Ebb\left[
      \left(\sum_{k\in\RC}\sum_i
      \delta c_i^{(k)}\Tr[\rho_{L-k}P_i]\right)^2
      \right]\\
       = &  \sum_{k\in\RC} \Ebb\left[
      \left(\sum_i
      \delta c_i^{(k)}\Tr[\rho_{L-k}P_i]\right)^2
      \right] + \sum_{k\neq \ell\in\RC}\Ebb\left[
      \left(\sum_i
      \delta c_i^{(k)}\Tr[\rho_{L-k}P_i]\right)\left(\sum_i
      \delta c_i^{(\ell)}\Tr[\rho_{L-\ell}P_i]\right)
      \right]\\
      =& \sum_{k\in\RC} \Ebb\left[
      \left(\sum_i
      \delta c_i^{(k)}\Tr[\rho_{L-k}P_i]\right)^2
      \right] + \sum_{k\neq \ell\in\RC}\sum_{i, j}\Ebb\left[
      \delta c_i^{(k)}
      \delta c_j^{(\ell)}\right] \Tr[\rho_{L-k}P_i] \Tr[\rho_{L-\ell}P_j]\\
      =& \sum_{k\in\RC} \Ebb\left[
      \left(\sum_i
      \delta c_i^{(k)}\Tr[\rho_{L-k}P_i]\right)^2
      \right]\,,
\end{align}
where in the second equality we expand the sum over $k$, in the third equality we used that $\Tr[\rho_{L-k}P_i]$ and $\Tr[\rho_{L-\ell}P_j]$ are constants with respect to $\Ebb[\cdot]$. In the final equality we used the fact that for $k \neq \ell$ and all $i, j$ we have $\Ebb[\delta c_i^{(k)}\:\delta c_j^{(\ell)}] = 0$ as derived in Eq.~\eqref{eq:onenorm_conditioning}, to remove the crossed terms.

Now we can use Eqs.~\eqref{eq:app_local_coeff_variance} and
\eqref{eq:app_local_coeff_covariance}, to analyze the conditional variance at a single
event $k$, i.e.
\begin{align}\label{eq:app_exact_event_variance_init}
   \Ebb\left[\left(\sum_i\delta c_i^{(k)}\Tr[\rho_{L-k}P_i]\right)^2\right]
    =& \sum_i \Ebb[(\delta c_i^{(k)})^2] \Tr[\rho_{L-k}P_i]^2 + \sum_{i\neq j}\Ebb[\delta c_i^{(k)} \delta c_j^{(k)}]\Tr[\rho_{L-k}P_i]\Tr[\rho_{L-k}P_j]\\
    =&\frac{W_k}{m_s}
      \sum_i\abs{c_i^{(k)}}\Tr[\rho_{L-k}P_i]^2 - \frac{1}{m_s}\sum_i (c_i^{(k)}\Tr[\rho_{L-k}P_i])^{2} \\
      &- \frac{1}{m_s} \sum_{i \neq j}c_i^{(k)}c_j^{(k)}\Tr[\rho_{L-k}P_i]\Tr[\rho_{L-k}P_j]\nonumber\\
    =&\frac{W_k}{m_s}
      \sum_i\abs{c_i^{(k)}}\Tr[\rho_{L-k}P_i]^2-\frac{1}{m_s}
      \left(\sum_i c_i^{(k)}\Tr[\rho_{L-k}P_i]\right)^2\,,
    \label{eq:app_exact_event_variance}
\end{align}
where in the first equality we just expanded the sum in $i$. In the second equality we used Eq.~\eqref{eq:app_local_coeff_variance} and Eq.~\eqref{eq:app_local_coeff_covariance} as mentioned above and in the last equality we wrote both negative sums into one. 

The second term in Eq.~\eqref{eq:app_exact_event_variance} is non-positive, thus we can use this and the fact that $|\Tr[\rho_{L-k}P_i]|\leq 1$ to upper bound  
\begin{align}
    \Ebb\left[\left(\sum_i\delta c_i^{(k)}\Tr[\rho_{L-k}P_i]\right)^2\right] \leq \frac{W_k}{m_s}
      \sum_i\abs{c_i^{(k)}}\Tr[\rho_{L-k}P_i]^2 \leq \frac{W_k^2}{m_s}\,.
\end{align}
This bound holds conditionally on $\FC_k$. Taking the outer expectation value and summing over the resampling events gives
\begin{align}
    \MSE[\widehat f]\leq \frac{1}{m_s}\sum_{k\in\RC}\Ebb\left[W_k^2\right]\,,
\end{align}
which is Eq.~\eqref{eq:app_unbiased_general_bound}. If resampling occurs at every layer and we further assume that $W_k\leq \gamma^k W_0$, then
\begin{align}
    \MSE[\widehat f]\leq  \frac{W_0^2}{m_s}\sum_{k=1}^{L} \gamma^{2k} = \frac{W_0^2 \gamma^2}{m_s} \frac{\gamma^{2L}-1}{\gamma^2-1}\,,
\end{align}
by the standard geometric sum $\sum_{k=1}^K x^k = x\frac{x^K-1}{x-1}$, which concludes the proof. This last step requires $\RC=\{1,\dots,L\}$: for a sparse $\RC$ the sum $\sum_{k\in\RC}\gamma^{2k}$ is not equal to $\sum_{k=1}^{\abs{\RC}}\gamma^{2k}$, and is in general much larger.

\end{proof}

\subsubsection{Deterministic truncation followed by resampling}
\label{app:one_norm_sampling_error_bias}

We will now show how the upper bound of Theorem \ref{thm:one_norm_sampling_error_unbiased} is modified when one introduces deterministic truncation before sampling. The one-norm resampling with bias can be described as one norm resampling with modified coefficients. Therefore, the MSE can be upper bounded by the same bound as in the case without truncation, but with the modified one norm, to which is added a bias term depending on the truncation. Because of the bias term, the error does not tend to zero when the number of samples goes to infinity.

\begin{theorem}[MSE bound for the one-norm sampling with truncation]\label{thm:one_norm_sampling_error_bias}
    Consider $f = \Tr[\rho U^\dagger O U]$, such that $U = U_L \dots U_2 U_1$. Suppose that we propagate the observable forward using the one-norm sampling strategy preceded by truncation as in Appendix~\ref{app:one_norm_resampling_bias}. At every layer $k$ of a subset $\RC\subseteq\{1,\dots,L\}$ we use a bias threshold $b_k$
    and we perform $m_s$ independent one-norm draws. Then the mean squared error is upper bounded as
\begin{align}
    \MSE[\widehat f]
    \leq
      \Ebb\left[\left(\sum_{k\in\RC}\left(W_k-\widetilde W_k\right)\right)^{2}\right]
    +\frac{2}{\sqrt{m_s}}\sum_{k > \ell \in\RC}
      \sqrt{\Var[\beta_k]\,\Ebb\left[\widetilde W_\ell^2\right]}
    +\frac{1}{m_s}
      \sum_{k\in\RC}\Ebb\left[\widetilde W_k^2\right]\,,
    \label{eq:app_truncated_general_bound}
\end{align}
where $W_k = \displaystyle \sum_i |c_i^{(k)}|$ is the one norm without truncation at layer $k$, $\widetilde W_k = \displaystyle \sum_{i, |c_i^{(k)}| > b_k} |c_i^{(k)}|$ is the one norm of the coefficients above the threshold $b_k$, and $\beta_k$ is the truncation error committed at layer $k$, $\beta_k=-\sum_{i,\,\abs{c_i^{(k)}}\leq b_k}
      c_i^{(k)}\Tr[\rho_{L-k}P_i]$. As in Theorem~\ref{thm:one_norm_sampling_error_unbiased}, all three quantities are random for $k\geq1$, which is why the bound carries expectation values; if $W_k-\widetilde W_k$ and $\widetilde W_k$ admit deterministic almost-sure upper bounds, those may be substituted and the expectation values dropped, which is how the model of Appendix~\ref{app:one_norm_sampling_mse_model} uses this result.
\end{theorem}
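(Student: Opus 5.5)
The starting point is Lemma~\ref{lemma:general_mse}. At each compression layer $k\in\RC$ I would split the coefficient error into a deterministic truncation part and a zero-mean sampling part:
\[
\delta c_i^{(k)} = \bigl(\widehat c_i^{(k)}-\widetilde c_i^{(k)}\bigr) + \bigl(\widetilde c_i^{(k)}-c_i^{(k)}\bigr),
\]
with $\widetilde c_i^{(k)} = c_i^{(k)}\mathbf 1(|c_i^{(k)}|>b_k)$ as in Eq.~\eqref{eq:app_truncated_coefficient}. Contracting with $\Tr[\rho_{L-k}P_i]$ gives the layer error $\beta_k+\xi_k$. Here $\beta_k$ is the truncation error defined in the statement, and $\xi_k=\sum_i(\widehat c_i^{(k)}-\widetilde c_i^{(k)})\Tr[\rho_{L-k}P_i]$. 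By Eq.~\eqref{eq:app_truncated_mean_local}, conditionally on $\FC_k$ the resampling is unbiased with respect to the truncated sum, so $\Ebb[\xi_k\mid\FC_k]=0$. Also $\beta_k$ is $\FC_k$-measurable, since it is fixed by the coefficients arriving at layer $k$.

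Expanding the square gives $\MSE=\Ebb[(\sum_k\beta_k)^2]+2\sum_{k,\ell}\Ebb[\beta_k\xi_\ell]+\Ebb[(\sum_k\xi_k)^2]$, and I would treat the three pieces in turn.

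\textbf{Sampling term.} The argument of Theorem~\ref{thm:one_norm_sampling_error_unbiased} applies verbatim with $W\to\widetilde W$. Cross terms $\Ebb[\xi_k\xi_\ell]$ with $k\neq\ell$ vanish by the tower property conditioned on the later layer's $\sigma$-algebra. The diagonal terms are bounded via Eq.~\eqref{eq:app_exact_event_variance}, using truncated multinomial moments and $|\Tr[\rho_{L-k} P_i]|\le 1$, by $\Ebb[\widetilde W_k^2]/m_s$. This yields the last term of Eq.~\eqref{eq:app_truncated_general_bound}.

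\textbf{Bias term.} I would bound $|\beta_k|\le\sum_{|c_i^{(k)}|\le b_k}|c_i^{(k)}|=W_k-\widetilde W_k$, again using $|\Tr[\rho_{L-k}P_i]|\le1$. This gives $\Ebb[(\sum_k\beta_k)^2]\le\Ebb[(\sum_k(W_k-\widetilde W_k))^2]$, the first term.

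\textbf{Mixed terms.} These are the main obstacle. When $k\le\ell$, $\beta_k$ is $\FC_\ell$-measurable, so $\Ebb[\beta_k\xi_\ell]=\Ebb[\beta_k\Ebb[\xi_\ell\mid\FC_\ell]]=0$. When $k>\ell$, however, $\beta_k$ depends on the layer-$\ell$ draws and the term need not vanish. Since $\Ebb[\xi_\ell]=0$, I can replace $\beta_k$ by $\beta_k-\Ebb[\beta_k]$ and apply Cauchy--Schwarz:
\[
|\Ebb[\beta_k\xi_\ell]|\le\sqrt{\Var[\beta_k]\,\Ebb[\xi_\ell^2]}\le\sqrt{\Var[\beta_k]\,\Ebb[\widetilde W_\ell^2]/m_s},
\]
where the second inequality reuses the single-event bound from the sampling term. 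Summing over pairs $k>\ell$ with the factor $2$ gives the middle term.

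The delicate points are the centring step, which is what makes $\Var[\beta_k]$ rather than $\Ebb[\beta_k^2]$ appear, and the careful conditioning on $\FC_\ell$. I would write the latter out explicitly, as in Eq.~\eqref{eq:onenorm_conditioning}.
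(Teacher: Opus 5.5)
Your proposal is correct and follows the paper's proof essentially step for step. Your $\xi_k$ coincides with the paper's $S_k$, since $\widehat c_i^{(k)}=\widetilde c_i^{(k)}=0$ on discarded indices, and the ingredients match the paper's: the vanishing of $\Ebb[\beta_k\xi_\ell]$ for $k\le\ell$, the centred Cauchy--Schwarz bound for $k>\ell$, the single-event bound $\Ebb[\widetilde W_k^2]/m_s$, and the pointwise bound $|\beta_k|\le W_k-\widetilde W_k$. The only difference is cosmetic: you group the expanded square by error type (bias, mixed, noise), whereas the paper expands over pairs of layers $(k,\ell)$ first and then regroups the $\beta\beta$ terms into $\Ebb[(\sum_k\beta_k)^2]$.
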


\begin{proof}

The one-norm resampling with bias can be described as one norm resampling with modified coefficients. The modified coefficients, modified one-norm, and modified probabilities at layer $k$ with threshold $b_k$ can be respectively expressed as
\begin{align}
    \widetilde c_i^{(k)}
    &=c_i^{(k)}
      \mathbf 1\!\left(\abs{c_i^{(k)}}>b_k\right),
    \label{eq:app_truncated_coefficient_layer}\\
    \widetilde W_k
    &=\sum_i\abs{c_i^{(k)}}
      \mathbf 1\!\left(\abs{c_i^{(k)}}>b_k\right),
    \label{eq:app_truncated_norm_layer}\\
    \widetilde p_i^{(k)}
    &=\frac{\abs{c_i^{(k)}}}{\widetilde W_k}
      \mathbf 1\!\left(\abs{c_i^{(k)}}>b_k\right)\,.
    \label{eq:app_truncated_probability_layer}
\end{align}
This directly follows from Appendix~\ref{app:one_norm_resampling_bias}. The estimator of the coefficient $i$ at layer $k$ is equal to
\begin{align}
    \widehat c_i^{(k)}
    =\sign(c_i^{(k)})
      \frac{\widetilde W_k}{m_s}N_i^{(k)}\,,
    \label{eq:app_truncated_estimator}
\end{align}
where the $N_i^{(k)}$ follow a multinomial distribution of $m_s$ draws with probabilities $\widetilde p_i^{(k)}$.

For a retained index, $\abs{c_i^{(k)}}>b_k$, the sampling step is exactly the unbiased one-norm resampling of Appendix~\ref{app:one_norm_resampling} applied to the modified Pauli sum $\sum_i\widetilde c_i^{(k)}P_i$. The computations of Appendix~\ref{app:one_norm_sampling_error_unbiased} therefore hold with $W_k$ replaced by $\widetilde W_k$, so that Eqs.~(\ref{eq:app_zero_local_bias},\ref{eq:app_local_coeff_variance},\ref{eq:app_local_coeff_covariance}) become
\begin{align}
    \Ebb\left[\delta c_i^{(k)}\right] &= 0\,,
    \label{eq:app_truncated_retained_mean}\\
    \Ebb\left[(\delta c_i^{(k)})^2\right] &= \frac{\widetilde W_k}{m_s}\abs{c_i^{(k)}}
      \left(1-\frac{\abs{c_i^{(k)}}}{\widetilde W_k}\right)\,,
    \label{eq:app_truncated_retained_variance}\\
    \Ebb\left[\delta c_i^{(k)}\delta c_j^{(k)}\right] &= -\frac{c_i^{(k)}c_j^{(k)}}{m_s}\,,
    \qquad i\neq j\,,
    \label{eq:app_truncated_retained_crossed}
\end{align}
where $\delta c_i^{(k)}=\widehat c_i^{(k)}-c_i^{(k)}$ as in Lemma~\ref{lemma:general_mse}, and both $i$ and $j$ are retained indices. For a discarded index, $\abs{c_i^{(k)}}\leq b_k$, Eq.~\eqref{eq:app_truncated_probability_layer} gives $\widetilde p_i^{(k)}=0$, so that index can never be drawn. Equation~\eqref{eq:app_truncated_estimator} then yields $\widehat c_i^{(k)}=0$, and therefore $\delta c_i^{(k)} = -c_i^{(k)}$ for every discarded index $i$.

The two groups of indices therefore contribute in two very different ways to the error that layer $k$ hands to Lemma~\ref{lemma:general_mse}. Splitting that error accordingly we find
\begin{align}
    \sum_i\delta c_i^{(k)}\Tr[\rho_{L-k}P_i] = S_k+\beta_k\,,
    \label{eq:app_truncated_layer_split}
\end{align}
where we introduced the sampling noise of the retained coefficients
\begin{align}
    S_k=\sum_{i,\,\abs{c_i^{(k)}}>b_k}
      \delta c_i^{(k)}\Tr[\rho_{L-k}P_i]\,,
    \label{eq:app_truncated_layer_noise_def}
\end{align}
and the truncation error committed at layer $k$
\begin{align}
    \beta_k=-\sum_{i,\,\abs{c_i^{(k)}}\leq b_k}
      c_i^{(k)}\Tr[\rho_{L-k}P_i]\,.
    \label{eq:app_truncated_layer_bias_def}
\end{align}

Similarly to the calculations in Eqs.~(\ref{eq:app_truncated_retained_mean}-\ref{eq:app_truncated_retained_crossed}), we can compute the first two conditional moments of $S_k$. Conditioning on $\FC_k$ fixes the coefficients arriving at layer $k$, and therefore also $\widetilde W_k$ and the set of retained indices, so that
\begin{align}
    \Ebb\left[S_k\right]
    =\sum_{i,\,\abs{c_i^{(k)}}>b_k}
      \Ebb\left[\delta c_i^{(k)}\right]\Tr[\rho_{L-k}P_i]
    =0\,,
    \label{eq:app_truncated_noise_centred}
\end{align}
and, by the tower property, $\Ebb[S_k]=0$ as well. Similarly, the conditional second moment is
\begin{align}
    \Ebb\left[S_k^2\right]
    =&\:\frac{\widetilde W_k}{m_s}
      \sum_{i,\,\abs{c_i^{(k)}}>b_k}\abs{c_i^{(k)}}
      \Tr[\rho_{L-k}P_i]^2 -\frac{1}{m_s}\left(
      \sum_{i,\,\abs{c_i^{(k)}}>b_k}c_i^{(k)}
      \Tr[\rho_{L-k}P_i]\right)^2
    \;\leq\;\frac{\widetilde W_k^2}{m_s}\,,
    \label{eq:app_truncated_noise_bound}
\end{align}
where the derivation is the exact same as Eqs.~(\ref{eq:app_exact_event_variance_init}-\ref{eq:app_exact_event_variance}).

Equipped with this, we are ready to invoke Lemma~\ref{lemma:general_mse} and expand the square over the resampling steps, as we did in Eq.~\eqref{eq:mse_nobias_expansion}, allows us to find the following expression
\begin{align}
    \MSE[\widehat f]
    =&\:\Ebb\left[\left(\sum_{k\in\RC}\sum_i
      \delta c_i^{(k)}\Tr[\rho_{L-k}P_i]\right)^2\right]\\
    =&\:\Ebb\left[\left(\sum_{k\in\RC}
      \left(S_k+\beta_k\right)\right)^2\right]\\
    =&\:\sum_{k\in\RC}\Ebb\left[\left(S_k+\beta_k\right)^2\right]
    +2\sum_{k>\ell\in\RC}\Ebb\left[
      \left(S_k+\beta_k\right)\left(S_\ell+\beta_\ell\right)\right]\,,
    \label{eq:app_truncated_mse_expansion}
\end{align}
where in the second equality we used Eq.~\eqref{eq:app_truncated_layer_split} and in the third one $(\sum_kx_k)^2=\sum_kx_k^2+2\sum_{k>\ell}x_kx_\ell$.

Now we will analyze the different terms. For the diagonal ones we condition on $\FC_k$, which fixes $\beta_k$ and leaves $S_k$ random, 
\begin{align}
    \Ebb\left[\left(S_k+\beta_k\right)^2\right]
    =&\:\Ebb\left[S_k^2\right]
    +2\Ebb\left[\beta_k\,\Ebb\left[S_k\,\middle|\,\FC_k\right]\right]
    +\Ebb\left[\beta_k^2\right]\\
    \leq&\:\Ebb\left[\beta_k^2\right]+\frac{\Ebb\left[\widetilde W_k^2\right]}{m_s}\,,
    \label{eq:app_truncated_diagonal_bound}
\end{align}
where in the first equality we used the tower property $\Ebb[\cdot] = \Ebb[\Ebb[\cdot\mid\FC_k]]$, and that $\Ebb[\beta_k|\FC_k] = \beta_k$. In the inequality we used that the cross term vanishes by Eq.~\eqref{eq:app_truncated_noise_centred} together with the bound of Eq.~\eqref{eq:app_truncated_noise_bound}. The diagonal terms are therefore exactly the sum of a squared truncation error and a sampling noise, with no interference between the two within a layer.

For the off-diagonal terms with $k>\ell$ we condition on $\FC_k$ as well. The layer-$\ell$ error is then fixed, because the resampling at layer $\ell$ has already taken place, so it can be taken out of the inner expectation value as we did in Eq.~\eqref{eq:onenorm_conditioning}, and Eq.~\eqref{eq:app_truncated_noise_centred} leaves
\begin{align}
    \Ebb\left[\left(S_k+\beta_k\right)\left(S_\ell+\beta_\ell\right)\right]
    =&\:\Ebb\left[\Ebb\left[\left(S_k+\beta_k\right)
      \left(S_\ell+\beta_\ell\right)\,\middle|\,\FC_k\right]\right] \\
    =&\:\Ebb\left[\left(S_\ell+\beta_\ell\right)
      \left(\Ebb\left[S_k\,\middle|\,\FC_k\right]+\beta_k\right)\right] \\
    =&\:\Ebb\left[\beta_k\left(S_\ell+\beta_\ell\right)\right] = \Ebb[\beta_k S_\ell] + \Ebb[\beta_k\beta_\ell]\,,
    \label{eq:app_truncated_cross_layer}
\end{align}

where in the first equality we used again the tower property, in the second one that $S_\ell$, $\beta_\ell$ and $\beta_k$ are all $\FC_k$-measurable for $\ell<k$, and in the third one Eq.~\eqref{eq:app_truncated_noise_centred}. This is the only step where the truncated calculation departs from the unbiased one. 
The truncation errors of different layers are correlated with one another, as $\beta_k$ is built from the coefficients from an earlier $k$ with the earlier sampling noise, and this is what makes the truncation error accumulate coherently along the circuit.

To bound Eq.~\eqref{eq:app_truncated_cross_layer} we treat the two terms separately. Since $\Ebb[S_\ell]=0$ by Eq.~\eqref{eq:app_truncated_noise_centred}, the first term in Eq.~\eqref{eq:app_truncated_cross_layer} can be expressed as a covariance, and bounded by Cauchy--Schwarz inequality
\begin{align}
    \left|\Ebb\left[\beta_kS_\ell\right]\right|
    =\left|\Covar\left[\beta_k,S_\ell\right]+\Ebb\left[\beta_k\right]\Ebb\left[S_\ell\right]\right|
    \leq\sqrt{\Var[\beta_k]\Var[S_\ell]}
    \leq\sqrt{\Var[\beta_k]}\;\frac{\sqrt{\Ebb\left[\widetilde W_\ell^2\right]}}{\sqrt{m_s}}\,,
    \label{eq:app_truncated_cross_layer_bound}
\end{align}
where in the last inequality we used $\Var[S_\ell]\leq \Ebb[S_\ell^2]=\Ebb\left[\Ebb\left[S_\ell^2\,\middle|\,\FC_\ell\right]\right]\leq\Ebb[\widetilde W_\ell^2]/m_s$, which follows from Eq.~\eqref{eq:app_truncated_noise_bound} and the tower property.

Collecting Eqs.~(\ref{eq:app_truncated_diagonal_bound},\ref{eq:app_truncated_cross_layer},\ref{eq:app_truncated_cross_layer_bound}) in the expansion of Eq.~\eqref{eq:app_truncated_mse_expansion},
\begin{align}
    \MSE[\widehat f]
    \leq&\:\sum_{k\in\RC}\Ebb\left[\beta_k^2\right]
    +2\sum_{k>\ell\in\RC}\Ebb\left[\beta_k\beta_\ell\right] \nonumber\\
    &+\frac{2}{\sqrt{m_s}}\sum_{k>\ell\in\RC}
      \sqrt{\Var[\beta_k]\,\Ebb\left[\widetilde W_\ell^2\right]}
    +\frac{1}{m_s}\sum_{k\in\RC}\Ebb\left[\widetilde W_k^2\right]\\
    =&\:\Ebb\left[\left(\sum_{k\in\RC}\beta_k\right)^2\right]
    +\frac{2}{\sqrt{m_s}}\sum_{k>\ell\in\RC}
      \sqrt{\Var[\beta_k]\,\Ebb\left[\widetilde W_\ell^2\right]}
    +\frac{1}{m_s}\sum_{k\in\RC}\Ebb\left[\widetilde W_k^2\right]\,,
    \label{eq:app_truncated_mse_collected}
\end{align}
where in the last equality we recognised the first two sums as the expansion of the square of the total truncation error. 

Finally, applying $|\Tr[\rho_{L-k}P_i]|\leq1$ to Eq.~\eqref{eq:app_truncated_layer_bias_def} bounds the truncation error of a single layer by the mass that the threshold discards,
\begin{align}
    \abs{\beta_k}
    \leq\sum_{i,\,\abs{c_i^{(k)}}\leq b_k}\abs{c_i^{(k)}}
    =W_k-\widetilde W_k\,,
    \label{eq:app_discarded_mass_bound}
\end{align}
so that the triangle inequality, followed by the monotonicity of $x\mapsto x^2$ on $[0,\infty)$, gives
\begin{align}
    \Ebb\left[\left(\sum_{k\in\RC}\beta_k\right)^2\right]
    \leq\Ebb\left[\left(\sum_{k\in\RC}\left(W_k-\widetilde W_k\right)\right)^2\right]\,,
    \label{eq:app_total_bias_bound}
\end{align}
which together with Eq.~\eqref{eq:app_truncated_mse_collected} is the claim of Eq.~\eqref{eq:app_truncated_general_bound} and concludes the proof.

We keep the variance of $\beta_k$ explicit in Eq.~\eqref{eq:app_truncated_general_bound} because it is the quantity that controls the coupling between layers, and it can always be bounded in the worst case using Eq.~\eqref{eq:app_discarded_mass_bound},
\begin{align}
    \Var[\beta_k]\leq\Ebb\left[\beta_k^2\right]
    \leq\Ebb\left[\left(W_k-\widetilde W_k\right)^2\right]\,,
    \label{eq:app_truncated_bias_variance_bound}
\end{align}
which turns the whole bound into an expression involving only the discarded masses and the retained one-norms.
\end{proof}

\subsubsection{Mean squared error model}
\label{app:one_norm_sampling_mse_model}

In order to better parse the consequences of Theorem~\ref{thm:one_norm_sampling_error_bias}, we propose a phenomenology model to explain the results observed in Fig.~\ref{fig:resampling_increase_bias}, by making assumptions on the behaviour of the one-norm and the discarded one-norm after truncation. We can retrieve the fact that truncating the smallest terms will add a floor to the error, but will decrease the asymptotically dominant term by a constant factor, compared to the unbiased case.  The approximate inequality $\lesssim$ captures that our bound holds approximately in the limit that $L \gg 1$.

\begin{result}[Model for the MSE of the one-norm sampling with truncation]
\label{result:app_truncated_mse_model}
    Assume that relative truncation is applied at each of the $L$ layers with $b_k=\epsilon_{\mathrm{rel}}\max_j\abs{c_j^{(k)}}$, that the fraction of the one-norm discarded at layer $k$ decays geometrically as $\varphi_k=\nu\epsilon_{\mathrm{rel}}\lambda^{k-1}$ with $0<\lambda<1$ while the one-norm grows by a factor $\gamma>1$ per layer, that $\mu\epsilon_{\mathrm{rel}}\ll1$ with $\mu=\frac{\nu}{1-\lambda}$, and that the truncation error of a layer varies little from run to run. Then, for $q=\gamma\lambda\neq1$,
\begin{align}
    \MSE[\widehat f]
    \lesssim
    \left(\nu\epsilon_{\mathrm{rel}}W_0\gamma\,
      \frac{q^{L}-1}{q-1}\right)^{2}
    +\frac{\left(1-\mu\epsilon_{\mathrm{rel}}\right)^2W_0^2\gamma^2}{\gamma^2-1}
      \frac{\gamma^{2L}}{m_s}\,,
    \qquad q=\gamma\lambda\,,
    \label{eq:app_truncated_mse_model_result}
\end{align}
where $\nu>0$ is a problem-dependent constant. For $q=1$ the geometric sum degenerates and $\left(q^{L}-1\right)/\left(q-1\right)$ must be replaced by $L$, which is the marginal case discussed below and the one quoted in the main text in Eq.~\eqref{eq:phenomenological_truncated_mse}.
\end{result}

We will consider a relative threshold,
\begin{align}
    b_k=\epsilon_{\mathrm{rel}}
    \max_j\abs{c_j^{(k)}}\,.
    \label{eq:app_relative_cutoff}
\end{align}
Both quantities that Theorem~\ref{thm:one_norm_sampling_error_bias} needs, the discarded mass $W_k-\widetilde W_k$ and the retained norm $\widetilde W_k$, follow from a single modelling assumption on the fraction of the one-norm that the threshold removes at each layer. Let
\begin{align}
    \varphi_k=\frac{W_k-\widetilde W_k}{W_k}
    \label{eq:app_discarded_fraction}
\end{align}
be that fraction, and assume that it decays geometrically along the circuit at a rate $\lambda$ 
\begin{align}
    \varphi_k=\nu\epsilon_{\mathrm{rel}}\lambda^{k-1}\,,
    \qquad 0<\lambda<1\,,
    \label{eq:app_discarded_fraction_decay}
\end{align}
while the one-norm grows on average by a factor $\gamma>1$ per layer, so that the norm arriving at layer $k$ is $W_k=\gamma\widetilde W_{k-1}$. The linear dependence on $\epsilon_{\mathrm{rel}}$ is what one expects in the first layers, where the coefficient distribution is still peaked and $\max_j\abs{c_j^{(k)}}$ sits well above the typical coefficient. The threshold then cuts into the bulk of the distribution, so that the mass below it grows linearly in $\epsilon_{\mathrm{rel}}$. The decay in $k$ reflects the flattening of the distribution as the circuit scrambles: once $\max_j\abs{c_j^{(k)}}$ approaches the typical coefficient, the threshold falls below it and much less is discarded.

Applying Eq.~\eqref{eq:app_discarded_fraction_decay} to the retained norm, we get the recursion
\begin{align}
    \widetilde W_k=\left(1-\varphi_k\right)W_k
    =\gamma\left(1-\varphi_k\right)\widetilde W_{k-1}\,,
    \label{eq:app_retained_norm_recursion}
\end{align}
and iterating from $\widetilde W_0=W_0$ gives
\begin{align}
    \widetilde W_k=W_0\gamma^k\prod_{j=1}^{k}\left(1-\varphi_j\right)\,.
    \label{eq:app_retained_norm_product}
\end{align}
Because Eq.~\eqref{eq:app_discarded_fraction_decay} is summable, the product converges and for $\nu\epsilon_{\mathrm{rel}}\ll1$
\begin{align}
    \prod_{j=1}^{\infty}\left(1-\varphi_j\right)
    =\exp\left(\sum_{j=1}^{\infty}\ln\left(1-\varphi_j\right)\right)
    \simeq\exp\left(-\sum_{j=1}^{\infty}\varphi_j\right)
    \simeq1-\mu\epsilon_{\mathrm{rel}}\,,
    \qquad \mu=\frac{\nu}{1-\lambda}\,,
    \label{eq:app_cepsilon}
\end{align}
where we expanded the logarithm and the exponential to first order and summed the geometric series. For sufficiently large $k$ the retained norm is therefore
\begin{align}
    \widetilde W_k\simeq\left(1-\mu\epsilon_{\mathrm{rel}}\right)W_0\gamma^k\,.
    \label{eq:app_retained_norm_model}
\end{align}

Applying Eq.~\eqref{eq:app_discarded_fraction_decay} to the discarded mass instead, and using $W_k=\gamma\widetilde W_{k-1}$ together with Eq.~\eqref{eq:app_retained_norm_product}, we find
\begin{align}
    W_k-\widetilde W_k=\varphi_kW_k
    =\nu\epsilon_{\mathrm{rel}}W_0\gamma\,q^{k-1}
      \prod_{j=1}^{k-1}\left(1-\varphi_j\right)
    \simeq\nu\epsilon_{\mathrm{rel}}W_0\gamma\,q^{k-1}\,,
    \qquad q=\gamma\lambda\,,
    \label{eq:app_discarded_mass_model}
\end{align}
where in the last step we used that the product is $1+\order{\epsilon_{\mathrm{rel}}}$ and that the prefactor $\varphi_k$ of Eq.~\eqref{eq:app_discarded_fraction_decay} is already linear in $\epsilon_{\mathrm{rel}}$. 

Finally, assume that the truncation error of a layer varies little from run to run,
\begin{align}
    \Var[\beta_k]\ll\left(W_k-\widetilde W_k\right)^2\,,
    \label{eq:app_phenom_bias_deterministic}
\end{align}
so that the cross term of Eq.~\eqref{eq:app_truncated_general_bound} can be dropped. What makes $\beta_k$ fluctuate strongly is a coefficient sitting so close to $b_k$ that the resampling noise of the previous layers decides whether it is kept or discarded. Away from that situation the set of surviving indices is a stable function of the incoming coefficients, so $\beta_k$ varies only through the smooth run-to-run drift of those coefficients rather than through discrete changes of the discard set. It is in this sense, and not as a deterministic number, that $\beta_k$ is treated as fixed here.

Inserting Eqs.~(\ref{eq:app_retained_norm_model},\ref{eq:app_discarded_mass_model},\ref{eq:app_phenom_bias_deterministic}) into Eq.~\eqref{eq:app_truncated_general_bound} with $\RC=\{1,\ldots,L\}$ we find
\begin{align}
    \MSE[\widehat f]
    \lesssim&\:\left(\nu\epsilon_{\mathrm{rel}}W_0\gamma
      \sum_{k=1}^{L}q^{k-1}\right)^{2}
    +\frac{\left(1-\mu\epsilon_{\mathrm{rel}}\right)^2W_0^2}{m_s}
      \sum_{k=1}^{L}\gamma^{2k}\\
    =&\:\left(\nu\epsilon_{\mathrm{rel}}W_0\gamma\,
      \frac{q^{L}-1}{q-1}\right)^{2}
    +\frac{\left(1-\mu\epsilon_{\mathrm{rel}}\right)^2W_0^2\gamma^2}{m_s\left(\gamma^2-1\right)}
      \left(\gamma^{2L}-1\right)\,,
    \label{eq:app_phenomenological_mse}
\end{align}
which for large $L$ is Eq.~\eqref{eq:app_truncated_mse_model_result}. In the first line the cross term was dropped by Eq.~\eqref{eq:app_phenom_bias_deterministic}, and in the second line we summed the two geometric series.

The truncation term of Eq.~\eqref{eq:app_phenomenological_mse} behaves in three different ways depending on $q$. For $q<1$ the discarded fraction shrinks faster than the one-norm grows, the series converges and the truncation error saturates at a depth-independent floor
\begin{align}
    \left(\sum_{k=1}^{\infty}\left(W_k-\widetilde W_k\right)\right)^2
    =\left(\frac{\nu\epsilon_{\mathrm{rel}}W_0\gamma}{1-q}\right)^{2}\,,
    \label{eq:app_bias_plateau}
\end{align}
so the bias does not accumulate with depth at all. For $q=1$ the sum is exactly $L$ and we recover the linear accumulation $\left(\nu\epsilon_{\mathrm{rel}}W_0\gamma L\right)^2$ quoted in the main text in Eq.~\eqref{eq:phenomenological_truncated_mse}. For $q>1$ the truncation error grows exponentially as $\epsilon_{\mathrm{rel}}^2q^{2L}$, and since it carries no factor of $1/m_s$ it cannot be reduced by sampling more.

\subsection{Comb sampling}
\label{app:comb_sampling_error}
Throughout this subsection, $m_s$ denotes the number of comb teeth at each resampling event, taken to be the same at every event. This holds exactly for uncalibrated comb sampling, where $m_s=m$, whereas calibration often yields $m_s>m$ depending on the specific coefficient structure.

\subsubsection{Bound on the variance of the comb sampling strategy}
\label{app:comb_sampling_error_unbiased}
As we did for the one-norm sampling, let us start by analysing the variance of the estimator with the comb sampling strategy assuming that no truncation is done. We call this a \textit{result} rather than theorem because we introduce some technical assumptions that we cannot guarantee will universally hold, and we denote it by the approximate $\lesssim$ inequality. We retrieve the observed fact of Fig.~\ref{fig:memory_utilization} that when the coefficient distribution is peaked (small $L$), the error made with comb sampling has a better scaling with the number of samples than the one norm resampling (inverse quadratic vs. inverse linear), and when the coefficient distribution flattens (when $L$ increases) the two methods have similar errors.

\begin{result}[Unbiased comb sampling]\label{result:comb_sampling_error_unbiased}
Consider $f = \Tr[\rho U^\dagger O U]$, such that $U = U_L \dots U_2 U_1$. Suppose that we propagate the observable forward using the comb sampling strategy of Appendix~\ref{app:comb_sampling} to reduce the memory overhead at every layer $k$ of a subset $\RC\subseteq\{1,\dots,L\}$, with comb spacing $\Delta_k=W_k/m_s$. Assume that the observable is generic with respect to the comb, meaning that the overlaps $\Tr[\rho_{L-k}P_i]$ carry no information about the order in which the coefficients are laid out along the cumulative weights, so that the error probes the typical size of the correlations between comb counts rather than their worst combination. Then the mean squared error is bounded as
\begin{align}
    \MSE[\widehat f]
    \lesssim\frac{1}{m_s^2}
    \sum_{k\in \RC}\Ebb\left[W_k^2\sum_i \left\{\frac{\abs{c_i^{(k)}}}{\Delta_k}\right\}\right]\,,
    \label{eq:app_comb_general_bound}
\end{align}
where $\{x\}$ denotes the non-integer part of $x$, and where the expectation value accounts for the fact that the coefficients arriving at layer $k$ are random, exactly as in Theorem~\ref{thm:one_norm_sampling_error_unbiased}. If $\abs{c_i^{(k)}}<\Delta_k$ for all $i$, so that no coefficient fills a whole tooth, then $\sum_i\left\{\abs{c_i^{(k)}}/\Delta_k\right\}=m_s$ and we recover the one-norm sampling bound of Theorem~\ref{thm:one_norm_sampling_error_unbiased},
\begin{align}
    \MSE[\widehat f]
    \lesssim\frac{1}{m_s}
    \sum_{k\in \RC}\Ebb\left[W_k^2\right]\,,
    \label{eq:app_comb_bound_small_coefficients}
\end{align}
whereas if $\abs{c_i^{(k)}}\gg\Delta_k$ for all $i$, so that every coefficient spans many teeth, then only the last tooth of each coefficient is uncertain, the number $\abs{\PC_k}$ of stored Pauli strings obeys $\abs{\PC_k}\ll m_s$, and $\sum_i\left\{\abs{c_i^{(k)}}/\Delta_k\right\}\leq\abs{\PC_k}\ll m_s$, so that
\begin{align}
    \MSE[\widehat f]
    \lesssim\frac{1}{m_s^2}
    \sum_{k\in \RC}\Ebb\left[W_k^2\:\abs{\PC_k}\right]\,,
    \label{eq:app_comb_bound_large_coefficients}
\end{align}
which is smaller than the one-norm bound of Eq.~\eqref{eq:app_comb_bound_small_coefficients} by the factor $\abs{\PC_k}/m_s\ll1$. Here $\PC_k$ denotes the set of Pauli strings stored at layer $k$. In the extreme case where the one norm is carried by $\order{1}$ coefficients, the bound becomes $\MSE[\widehat f]\lesssim m_s^{-2}\sum_{k\in\RC}\Ebb[W_k^2]$.
\end{result}

\begin{proof}
Let us start by recalling the comb estimator of Eq.~\eqref{eq:app_comb_coefficient}, which for a spacing $\Delta_k=W_k/m_s$ reads
\begin{align}
    \widehat c_i^{(k)}
    =\sign(c_i^{(k)})
      \frac{W_k}{m_s}N_i^{(k)}\,,
    \label{eq:app_comb_layer_estimator}
\end{align}
where $N_i^{(k)}$ is the number of comb teeth that fall in the interval corresponding to $c_i^{(k)}$. Contrary to the one-norm sampling, $N_i^{(k)}$ does not follow a multinomial law. Instead, Eq.~\eqref{eq:N_i_asbernoully} gives
\begin{align}
    N_i^{(k)}
    =\left\lfloor\frac{\abs{c_i^{(k)}}}{\Delta_k}\right\rfloor + I_i^{(k)}\,,
    \label{eq:app_comb_layer_counts}
\end{align}
where $I_i^{(k)}$ takes the values $0$ or $1$ with $\Ebb[I_i^{(k)}]=\left\{\abs{c_i^{(k)}}/\Delta_k\right\}$ by Eq.~\eqref{eq:expval_random_I_i_N_i}, and where the different $I_i^{(k)}$ are not independent of each other because they are all driven by the single random offset of the comb.

The first two moments of the counts follow. The integer part of Eq.~\eqref{eq:app_comb_layer_counts} is deterministic, so
\begin{align}
    \Ebb[N_i^{(k)}]
    &=\left\lfloor\frac{\abs{c_i^{(k)}}}{\Delta_k}\right\rfloor+\left\{\frac{\abs{c_i^{(k)}}}{\Delta_k}\right\}
    =\frac{\abs{c_i^{(k)}}}{\Delta_k}\,,
    \label{eq:app_comb_count_mean}\\
    \Var[N_i^{(k)}]
    &=\Var[I_i^{(k)}]
    =\left\{\frac{\abs{c_i^{(k)}}}{\Delta_k}\right\}\left(1-\left\{\frac{\abs{c_i^{(k)}}}{\Delta_k}\right\}\right)\,,
    \label{eq:app_comb_count_variance}
\end{align}
where in the first line we used $x=\lfloor x\rfloor+\{x\}$ and in the second one that $I_i^{(k)}$ is a Bernoulli variable of mean $\left\{\abs{c_i^{(k)}}/\Delta_k\right\}$. Recalling $\delta c_i^{(k)}=\widehat c_i^{(k)}-c_i^{(k)}$ from Lemma~\ref{lemma:general_mse}, Eqs.~(\ref{eq:app_comb_layer_estimator}-\ref{eq:app_comb_count_variance}) give
\begin{align}
    \Ebb\left[\delta c_i^{(k)}\right] &= 0\,,
    \label{eq:app_comb_local_bias}\\
    \Ebb\left[(\delta c_i^{(k)})^2\right]
    &=\frac{W_k^2}{m_s^2}\left\{\frac{\abs{c_i^{(k)}}}{\Delta_k}\right\}\left(1-\left\{\frac{\abs{c_i^{(k)}}}{\Delta_k}\right\}\right)\,,
    \label{eq:app_comb_local_variance}\\
    \Ebb\left[\delta c_i^{(k)}\delta c_j^{(k)}\right]
    &=\frac{W_k^2}{m_s^2}\sign(c_i^{(k)})\sign(c_j^{(k)})
      \Covar[N_i^{(k)},N_j^{(k)}]\,,\qquad i\neq j\,.
    \label{eq:app_comb_local_covariance}
\end{align}
This is where comb sampling departs from the one-norm sampling. There the covariance of the counts is known in closed form, Eq.~\eqref{eq:app_multinomial_covar}, whereas here the $I_i^{(k)}$ are correlated through the single offset of the comb and Eq.~\eqref{eq:app_comb_local_covariance} cannot be evaluated term by term. We therefore keep the covariances as a matrix.

Because Eq.~\eqref{eq:app_comb_local_bias} holds also when conditioned on the history of the propagation, the cross-layer cancellation of Eq.~\eqref{eq:onenorm_conditioning} applies directly and the expansion of Eq.~\eqref{eq:mse_nobias_expansion} carries over unchanged,
\begin{align}
    \MSE[\widehat f]
    =\sum_{k\in\RC}\Ebb\left[\left(\sum_i
      \delta c_i^{(k)}\Tr[\rho_{L-k}P_i]\right)^2\right]\,.
    \label{eq:app_comb_mse_expansion}
\end{align}

Recall that we could not compute the value of the covariances, thus we expand the square of a single layer with Eqs.~(\ref{eq:app_comb_local_variance},\ref{eq:app_comb_local_covariance}) and write it as a matrix vector product
\begin{align}
    \Ebb\left[\left(\sum_i
      \delta c_i^{(k)}\Tr[\rho_{L-k}P_i]\right)^2\right]
    =\frac{W_k^2}{m_s^2}\,x^\top\Lambda\, x\,,
    \label{eq:app_comb_quadratic_form}
\end{align}
where
\begin{align}
    x_i=\sign(c_i^{(k)})\Tr[\rho_{L-k}P_i]\,,
    \qquad
    \Lambda_{ij}=\Covar[N_i^{(k)},N_j^{(k)}]\,,
    \label{eq:app_comb_quadratic_form_def}
\end{align}
so that the diagonal of $\Lambda$ is $\Lambda_{ii}=\Var[N_i^{(k)}]$.

The matrix $\Lambda$ is a covariance matrix, hence positive semidefinite, and admits an eigendecomposition
\begin{align}
    \Lambda=\sum_{a=1}^{\abs{\PC_k}}\lambda_a u_au_a^\top\,,
    \label{eq:app_comb_eigendecomposition}
\end{align}
where $\PC_k$ denotes the set of Pauli strings in the sum at layer $k$, so that $\abs{\PC_k}$ is their number, and the $u_a$ are orthonormal. By the assumption of Result~\ref{result:comb_sampling_error_unbiased}, $x$ has no particular alignment with the eigenvectors of $\Lambda$, so that it can be approximated as
\begin{align}
    x\simeq\frac{\norm{x}}{\sqrt{\abs{\PC_k}}}\sum_{a=1}^{\abs{\PC_k}}u_a\,,
    \label{eq:app_comb_no_alignment}
\end{align}
where we could have attached a different sign to each term without changing what follows. Using Eq.~\eqref{eq:app_comb_no_alignment} twice in Eq.~\eqref{eq:app_comb_quadratic_form} we find
\begin{align}
    x^\top\Lambda\, x
    &\simeq\frac{\norm{x}}{\sqrt{\abs{\PC_k}}}\sum_{b=1}^{\abs{\PC_k}}u_b^\top
      \left(\frac{\norm{x}}{\sqrt{\abs{\PC_k}}}\sum_{a=1}^{\abs{\PC_k}}\lambda_au_a\right)
    =\frac{\norm{x}^2}{\abs{\PC_k}}\sum_{a=1}^{\abs{\PC_k}}\lambda_a\,,
    \label{eq:app_comb_quadratic_form_trace}\\
    \norm{x}^2
    &=\sum_i\sign(c_i^{(k)})^2\Tr[\rho_{L-k}P_i]^2
    =\sum_i\Tr[\rho_{L-k}P_i]^2\leq\abs{\PC_k}\,,
    \label{eq:app_comb_norm_x}
\end{align}
where in the first line we used that the $u_a$ are orthonormal eigenvectors of $\Lambda$, and in the second one that $|\Tr[\rho_{L-k}P_i]|\leq1$. Combining Eqs.~(\ref{eq:app_comb_quadratic_form_trace},\ref{eq:app_comb_norm_x}) with $\sum_a\lambda_a=\Tr[\Lambda]$ and Eq.~\eqref{eq:app_comb_count_variance},
\begin{align}
    \Ebb\left[\left(\sum_i
      \delta c_i^{(k)}\Tr[\rho_{L-k}P_i]\right)^2\right]
    \lesssim&\:\frac{W_k^2}{m_s^2}\Tr[\Lambda]
    =\frac{W_k^2}{m_s^2}\sum_i\left\{\frac{\abs{c_i^{(k)}}}{\Delta_k}\right\}\left(1-\left\{\frac{\abs{c_i^{(k)}}}{\Delta_k}\right\}\right) \nonumber\\
    \leq&\:\frac{W_k^2}{m_s^2}\sum_i\left\{\frac{\abs{c_i^{(k)}}}{\Delta_k}\right\}\,,
    \label{eq:app_comb_layer_bound}
\end{align}
where in the last inequality we used $\left\{\abs{c_i^{(k)}}/\Delta_k\right\}\leq1$. Inserting Eq.~\eqref{eq:app_comb_layer_bound} into Eq.~\eqref{eq:app_comb_mse_expansion}, and taking the outer expectation value over the coefficients arriving at layer $k$, gives Eq.~\eqref{eq:app_comb_general_bound}.

The two limits follow from the definition of the non-integer part. If $\abs{c_i^{(k)}}<\Delta_k$ for every $i$, then the integer part vanishes, so $\left\{\abs{c_i^{(k)}}/\Delta_k\right\}=\abs{c_i^{(k)}}/\Delta_k$ and
\begin{align}
    \sum_i\left\{\frac{\abs{c_i^{(k)}}}{\Delta_k}\right\}
    =\frac{1}{\Delta_k}\sum_i\abs{c_i^{(k)}}
    =\frac{W_k}{\Delta_k}=m_s\,,
    \label{eq:app_comb_remainder_small}
\end{align}
which turns Eq.~\eqref{eq:app_comb_general_bound} into Eq.~\eqref{eq:app_comb_bound_small_coefficients}. Every coefficient is then resolved by at most one tooth and the comb has no advantage over independent draws. If instead $\abs{c_i^{(k)}}\gg\Delta_k$ for every $i$, each coefficient is resolved by many teeth and only the last of them is uncertain, so each non-integer part is a subleading remainder and there are only $\abs{\PC_k}$ of them. Since $\sum_i\abs{c_i^{(k)}}=W_k=m_s\Delta_k$ while every $\abs{c_i^{(k)}}\gg\Delta_k$, we must have $\abs{\PC_k}\ll m_s$, and $\sum_i\left\{\abs{c_i^{(k)}}/\Delta_k\right\}\leq\abs{\PC_k}\ll m_s$, which gives Eq.~\eqref{eq:app_comb_bound_large_coefficients}. This is the origin of the better scaling of comb resampling at early depth.

\end{proof}

\subsubsection{Deterministic truncation followed by comb resampling}

We now analyze the error made with comb sampling when deterministic truncation is added. We obtain a similar bound as in Theorem \ref{thm:one_norm_sampling_error_bias}, but the terms corresponding to the unbiased sampling on modified coefficients are replaced by the error made with comb sampling as described in Result \ref{result:comb_sampling_error_unbiased}.

\label{app:comb_sampling_error_bias}
\begin{result}[MSE bound for the comb sampling with truncation]
\label{result:comb_sampling_error_bias}
    Consider $f = \Tr[\rho U^\dagger O U]$, such that $U = U_L \dots U_2 U_1$. Suppose that we propagate the observable forward using the comb sampling strategy preceded by truncation as in Appendix~\ref{app:one_norm_resampling_bias}. At every layer $k$ of a subset $\RC\subseteq\{1,\dots,L\}$ we use a bias threshold $b_k$ and a comb of spacing $\widetilde\Delta_k=\widetilde W_k/m_s$ over the retained coefficients. Then, under the same assumption as Result~\ref{result:comb_sampling_error_unbiased}, the mean squared error is bounded as
\begin{align}
    \MSE[\widehat f]
    \lesssim\Ebb\left[\left(\sum_{k\in\RC}\left(W_k-\widetilde W_k\right)\right)^{2}\right]
    +\frac{2}{m_s}\sum_{k > \ell \in\RC}\sqrt{\Var[\beta_k]\;\Ebb\left[\widetilde W_\ell^2\sum_i\left\{\frac{\abs{\widetilde c_i^{(\ell)}}}{\widetilde\Delta_\ell}\right\}\right]}
    +\frac{1}{m_s^2}\sum_{k\in\RC}\Ebb\left[\widetilde W_k^2\sum_i\left\{\frac{\abs{\widetilde c_i^{(k)}}}{\widetilde\Delta_k}\right\}\right]\,,
\label{eq:app_comb_truncated_general_bound}
\end{align}
where $W_k$ and $\widetilde W_k$ are the one norms without and with truncation as in Theorem~\ref{thm:one_norm_sampling_error_bias}, $\beta_k$ is the truncation error committed at layer $k$, $\beta_k=-\sum_{i,\,\abs{c_i^{(k)}}\leq b_k}
      c_i^{(k)}\Tr[\rho_{L-k}P_i]$, and $\left\{\abs{\widetilde c_i^{(k)}}/\widetilde\Delta_k\right\}$ denotes the same non-integer part built with the truncated spacing $\widetilde\Delta_k$, which follows the same convention as $\widetilde W_k$ and $\widetilde c_i^{(k)}$.
\end{result}

\begin{proof}

    We recall the definitions of $S_k$ and $\beta_k$ introduced in Eqs.~(\ref{eq:app_truncated_layer_noise_def},\ref{eq:app_truncated_layer_bias_def})

    \begin{align*}
    S_k &=\sum_{i,\,\abs{c_i^{(k)}}>b_k}
      \delta c_i^{(k)}\Tr[\rho_{L-k}P_i]\,,\\
    \beta_k &=-\sum_{i,\,\abs{c_i^{(k)}}\leq b_k}
      c_i^{(k)}\Tr[\rho_{L-k}P_i]\,.
    \end{align*}

    and the expression of the mean square error of $\hat f$ as in Eq.~\eqref{eq:app_truncated_mse_expansion}
    \begin{align}
    \MSE[\widehat f] =&\:\sum_{k\in\RC}\Ebb\left[\left(S_k+\beta_k\right)^2\right]
    +2\sum_{k>\ell\in\RC}\Ebb\left[      \left(S_k+\beta_k\right)\left(S_\ell+\beta_\ell\right)\right]\,,
    \label{eq:app_truncated_mse_expansion2}
\end{align}

We have, with the same justification as in Eq.~\eqref{eq:app_truncated_diagonal_bound} and ~\eqref{eq:app_truncated_cross_layer} that 
\begin{align}
    \Ebb\left[\left(S_k+\beta_k\right)^2\right] &= \Ebb[S_k^2] + \Ebb[\beta_k^2]\label{eq:app_truncated_diagonal_bound2}\\
    \Ebb\left[      \left(S_k+\beta_k\right)\left(S_\ell+\beta_\ell\right)\right]&= \Ebb[\beta_k(S_\ell + \beta_\ell)]\label{eq:app_truncated_cross_layer2}
\end{align}

We can transform the last equation in the same way as in Eq. ~\eqref{eq:app_truncated_cross_layer_bound} that gives us
\begin{align}
    \Ebb\left[      \left(S_k+\beta_k\right)\left(S_\ell+\beta_\ell\right)\right]&= \Ebb[\beta_k\beta_\ell] + \Ebb[\beta_k S_\ell]\\
    &\leq \Ebb[\beta_k\beta_\ell] + \sqrt{\Var(\beta_k)\Var(S_\ell)}
    \label{eq:app_truncated_cross_layer_bound2}
\end{align}

Then we can recover Eq.~\eqref{eq:app_truncated_mse_expansion2}, and the derived Eqs.~(\ref{eq:app_truncated_diagonal_bound2},\ref{eq:app_truncated_cross_layer_bound2}) to find the following bound
\begin{align}
    \MSE[\widehat f] &\leq \sum_{k\in\RC} \Ebb[\beta_k^2] +  \sum_{k\in\RC} \Ebb[S_k^2] + 2\sum_{k>\ell\in\RC} \Ebb[\beta_k\beta_\ell] + 2\sum_{k>\ell\in\RC} \sqrt{\Var(\beta_k)\Var(S_\ell)}\\
    &=\Ebb\left[\left(\sum_{k\in\RC}\beta_k\right)^2\right] + 2\sum_{k>\ell\in\RC} \sqrt{\Var(\beta_k)\Var(S_\ell)} + \sum_{k\in\RC} \Ebb[S_k^2]
    \label{eq:mse_reduced_comb_bias}
\end{align}
where in the equality we grouped the terms summing in $\beta_i$.

The comb resampling with bias can be described as comb resampling with modified coefficients. The modified coefficients, modified one-norm and modified probabilities at layer $k$ with threshold $b_k$ are those of Eqs.~(\ref{eq:app_truncated_coefficient_layer}-\ref{eq:app_truncated_probability_layer}), and the estimator of the coefficient $i$ at layer $k$ is now
\begin{align}
    \widehat c_i^{(k)}
    =\sign(c_i^{(k)})
      \frac{\widetilde W_k}{m_s}N_i^{(k)}\,,
    \label{eq:app_comb_truncated_estimator}
\end{align}
where $N_i^{(k)}$ counts the teeth of a comb of spacing $\widetilde\Delta_k$ laid over the cumulative weights of the retained coefficients only.
For a retained index, $\abs{c_i^{(k)}}>b_k$, the resampling step is exactly the comb sampling of Appendix~\ref{app:comb_sampling} applied to the modified Pauli sum $\sum_i\widetilde c_i^{(k)}P_i$. The computations of Result~\ref{result:comb_sampling_error_unbiased} therefore hold with $W_k$ replaced by $\widetilde W_k$ and $\Delta_k$ by $\widetilde\Delta_k$, so that Eqs.~(\ref{eq:app_comb_local_bias}-\ref{eq:app_comb_local_covariance}) become
\begin{align}
    \Ebb\left[\delta c_i^{(k)}\right] &= 0\,,
    \label{eq:app_comb_truncated_retained_mean}\\
    \Ebb\left[(\delta c_i^{(k)})^2\right]
    &=\frac{\widetilde W_k^2}{m_s^2}\left\{\frac{\abs{\widetilde c_i^{(k)}}}{\widetilde\Delta_k}\right\}
      \left(1-\left\{\frac{\abs{\widetilde c_i^{(k)}}}{\widetilde\Delta_k}\right\}\right)\,,
    \label{eq:app_comb_truncated_retained_variance}\\
    \Ebb\left[\delta c_i^{(k)}\delta c_j^{(k)}\right]
    &=\frac{\widetilde W_k^2}{m_s^2}\sign(c_i^{(k)})\sign(c_j^{(k)})
      \Covar[N_i^{(k)},N_j^{(k)}]\,,\qquad i\neq j\,.
    \label{eq:app_comb_truncated_retained_crossed}
\end{align}
For a discarded index, $\abs{c_i^{(k)}}\leq b_k$, Eq.~\eqref{eq:app_truncated_coefficient_layer} gives $\widetilde c_i^{(k)}=0$, so the interval that this index occupies in the cumulative weights has zero width and no tooth can fall in it. Equation~\eqref{eq:app_comb_truncated_estimator} then yields $\widehat c_i^{(k)}=0$, and therefore $\delta c_i^{(k)}=-c_i^{(k)}$ for every discarded index $i$, exactly as in the one-norm case.

The second moment of $S_k$ follows from Eqs.~(\ref{eq:app_comb_quadratic_form}-\ref{eq:app_comb_layer_bound}) with the sums restricted to the retained indices,
\begin{align}
    \Ebb\left[S_k^2\right]
    \lesssim\frac{\widetilde W_k^2}{m_s^2}
      \sum_i\left\{\frac{\abs{\widetilde c_i^{(k)}}}{\widetilde\Delta_k}\right\}\,.
    \label{eq:app_comb_truncated_noise_bound}
\end{align}
This is Eq.~\eqref{eq:app_comb_general_bound} with the full one-norm replaced by the retained one and the remainders recomputed after truncation, and it is the sense in which truncation reduces the sampling noise of the comb.

By plugging Eq.~\eqref{eq:app_comb_truncated_noise_bound}, and using that $\Ebb[S_k] = 0$ in Eq. ~\eqref{eq:mse_reduced_comb_bias}, we have
\begin{align}
    \MSE[\widehat f]
    \lesssim\Ebb\left[\left(\sum_{k\in\RC}\beta_k\right)^2\right]
    +\frac{2}{m_s}\sum_{k>\ell\in\RC}
      \sqrt{\Var[\beta_k]\;\Ebb\left[\widetilde W_\ell^2
      \sum_i\left\{\frac{\abs{\widetilde c_i^{(\ell)}}}{\widetilde\Delta_\ell}\right\}\right]}
    +\frac{1}{m_s^2}\sum_{k\in\RC}\Ebb\left[\widetilde W_k^2
      \sum_i\left\{\frac{\abs{\widetilde c_i^{(k)}}}{\widetilde\Delta_k}\right\}\right]\,,
    \label{eq:app_comb_truncated_mse_collected}
\end{align}
which concludes the proof.

\end{proof}

\subsubsection{Mean squared error model}
\label{app:comb_sampling_mse_model}

We apply the same phenomenological model as in Result \ref{res:phenomenological_truncated_mse} to the comb sampling estimator. We obtain the same result, and we cannot show a distinctive effect of comb sampling because it would require to model how the concentration of the coefficient distribution evolves, and there is not a clear way to do it.

\begin{result}[Model for the MSE of the comb sampling with truncation]
\label{result:app_comb_truncated_mse_model}
    Assume the same three ingredients as in Result~\ref{result:app_truncated_mse_model}, namely a relative threshold $b_k=\epsilon_{\mathrm{rel}}\max_j\abs{c_j^{(k)}}$, a discarded fraction $\varphi_k=\nu\epsilon_{\mathrm{rel}}\lambda^{k-1}$ against a one-norm growing by $\gamma>1$ per layer, a truncation error that varies little from run to run, and $\nu\epsilon_{\mathrm{rel}}\ll1$. Then, for $q=\gamma\lambda\neq1$,
\begin{align}
    \MSE[\widehat f]
    \lesssim
    \left(\nu\epsilon_{\mathrm{rel}}W_0\gamma\,
      \frac{q^{L}-1}{q-1}\right)^{2}
    +\frac{\left(1-\mu\epsilon_{\mathrm{rel}}\right)^2W_0^2\gamma^2}{\gamma^2-1}
      \frac{\gamma^{2L}}{m_s}\,,
    \label{eq:app_comb_truncated_mse_model_result}
\end{align}
where $q=\gamma\lambda$ and $\mu=\nu/(1-\lambda)$, with $\left(q^{L}-1\right)/\left(q-1\right)$ replaced by $L$ when $q=1$. This is exactly the model of Result~\ref{result:app_truncated_mse_model}: once the remainder factor is bounded by $m_s$, the comb bound collapses onto the one-norm bound, so this model cannot resolve the difference between the two schemes. The advantage of the comb is the one identified in the second regime of Result~\ref{result:comb_sampling_error_unbiased}, where the remainder factor is $\abs{\PC_k}\ll m_s$ rather than $m_s$.
\end{result}

\begin{proof}
    
The truncation part of Eq.~\eqref{eq:app_comb_truncated_general_bound} is identical to that of Theorem~\ref{thm:one_norm_sampling_error_bias}, so the assumptions of Appendix~\ref{app:one_norm_sampling_mse_model} apply to it word for word. The relative threshold of Eq.~\eqref{eq:app_relative_cutoff} and the geometric decay of Eq.~\eqref{eq:app_discarded_fraction_decay} give the retained norm of Eq.~\eqref{eq:app_retained_norm_model} and the discarded mass of Eq.~\eqref{eq:app_discarded_mass_model}, and Eq.~\eqref{eq:app_phenom_bias_deterministic} lets us drop the cross term. What is new is the extra factor $\sum_i\left\{\abs{\widetilde c_i^{(k)}}/\widetilde\Delta_k\right\}$ carried by the sampling noise, and for it no further assumption is needed. Since the non-integer part obeys $\{x\}\leq x$ for $x\geq0$,
\begin{align}
    \sum_i\left\{\frac{\abs{\widetilde c_i^{(k)}}}{\widetilde\Delta_k}\right\}
    \leq\frac{1}{\widetilde\Delta_k}\sum_i\abs{\widetilde c_i^{(k)}}
    =\frac{\widetilde W_k}{\widetilde\Delta_k}
    =m_s\,,
    \label{eq:app_comb_remainder_bound}
\end{align}
where in the last two steps we used the definition of the retained one-norm and $\widetilde\Delta_k=\widetilde W_k/m_s$.

Inserting Eqs.~(\ref{eq:app_retained_norm_model},\ref{eq:app_discarded_mass_model},\ref{eq:app_phenom_bias_deterministic},\ref{eq:app_comb_remainder_bound}) into Eq.~\eqref{eq:app_comb_truncated_general_bound} with $\RC=\{1,\ldots,L\}$, the sampling noise becomes
\begin{align}
    \frac{1}{m_s^2}\sum_{k=1}^{L}\widetilde W_k^2
      \sum_i\left\{\frac{\abs{\widetilde c_i^{(k)}}}{\widetilde\Delta_k}\right\}
    \leq\frac{1}{m_s}\sum_{k=1}^{L}\widetilde W_k^2
    \simeq\frac{\left(1-\mu\epsilon_{\mathrm{rel}}\right)^2W_0^2\gamma^2}{m_s\left(\gamma^2-1\right)}
      \left(\gamma^{2L}-1\right)\,,
    \label{eq:app_comb_noise_model}
\end{align}
which for large $L$ gives Eq.~\eqref{eq:app_comb_truncated_mse_model_result}. Equation~\eqref{eq:app_comb_remainder_bound} is saturated when every retained coefficient is smaller than one comb tooth, which is the first limit of Result~\ref{result:comb_sampling_error_unbiased}, and which concludes the proof.
\end{proof}

\subsection{Merging and the number of effective draws}
\label{app:error_merging_one_norm}
All the bounds derived so far are controlled by the number of raw draws $m_s$, whereas the memory only holds the number $m_p$ of distinct Pauli strings. The two differ because repeated draws are merged, and how much they differ depends on how peaked the coefficient distribution is. The following occupancy model relates them.

\begin{result}[Effective number of draws]
\label{result:app_effective_draws}
   Suppose that $m_s$ independent one-norm draws are taken from the coefficient distribution induced by the one-norm resampling, with $p_i = \abs{c_i}/W$ as in Eq.~\eqref{eq:app_one_norm_probability} and that the $m_p$ distinct outcomes are merged and stored. Then, up to a fixed relative deviation $\delta>0$ on the number of stored strings, the number of raw draws that $m_p$ represents obeys
\begin{align}
    m_s\propto m_p^{2}
    \qquad\text{with probability}\quad
    1-\order{\frac{1}{\delta^2m_p}}
    \label{eq:app_effective_draws_powerlaw}
\end{align}
for the power law $p_i=Ci^{-2}$, and
\begin{align}
    m_s\propto m_p
    \qquad\text{with probability}\quad
    1-\order{\frac{1}{\delta^2m_p}}
    \label{eq:app_effective_draws_uniform}
\end{align}
for the uniform distribution over $N$ Pauli strings with $m_p=cN$ and $0<c<1$ held fixed. A variance proportional to $1/m_s$ decays as $m_p^{-2}$ for the power law but only as $m_p^{-1}$ for the uniform distribution.
\end{result}
\begin{proof}
To relate $m_s$ and $m_p$, let $I_i(m_s)$ indicate whether outcome $i$ has appeared at least once in $m_s$ independent draws. Since the draws are independent, outcome $i$ is missed with probability $(1-p_i)^{m_s}$ and
\begin{align}
    \Ebb[I_i(m_s)]=1-(1-p_i)^{m_s}\,,
    \label{eq:app_occupancy_indicator}
\end{align}
so that the number of distinct Pauli strings, $U(m_s)=\sum_iI_i(m_s)$, has mean
\begin{align}
    \Ebb[U(m_s)]
    =\sum_i\left[1-(1-p_i)^{m_s}\right]\,.
    \label{eq:app_expected_unique}
\end{align}
Identifying $m_p$ with $\Ebb[U(m_s)]$ and inverting Eq.~\eqref{eq:app_expected_unique} estimates the number of raw draws represented by $m_p$ stored Pauli strings. This is an occupancy model for the scaling seen in the data, and it does not assert that every detail of merged propagation is equivalent to independent trajectories.

That identification is only useful if $U(m_s)$ does not fluctuate much from run to run, so let us bound its variance. Because $I_i(m_s)\in\{0,1\}$ we have $I_i(m_s)^2=I_i(m_s)$, and therefore
\begin{align}
    \Var[I_i]
    &=\Ebb[I_i(m_s)^2]-\Ebb[I_i(m_s)]^2
    =\Ebb[I_i(m_s)]\left(1-\Ebb[I_i(m_s)]\right)\nonumber\\
    &=(1-p_i)^{m_s}\left[1-(1-p_i)^{m_s}\right]\,.
    \label{eq:app_occupancy_indicator_variance}
\end{align}
The indicators are not independent, because a fixed number of draws makes the outcomes compete for them. For $i\neq j$,
\begin{align}
    \Covar[I_i,I_j]
    &=\Pr(I_i=1,I_j=1)-\Pr(I_i=1)\Pr(I_j=1)\nonumber\\
    &=\left[1-(1-p_i)^{m_s}-(1-p_j)^{m_s}+(1-p_i-p_j)^{m_s}\right]\nonumber\\
    &\quad-\left[1-(1-p_i)^{m_s}\right]\left[1-(1-p_j)^{m_s}\right]\nonumber\\
    &=(1-p_i-p_j)^{m_s}-(1-p_i)^{m_s}(1-p_j)^{m_s} \leq 0\,,
    \label{eq:app_occupancy_covariance}
\end{align}
where the second equality is inclusion--exclusion for $\Pr(I_i=1,I_j=1)$ and $\Pr(I_i=0,I_j=0)=(1-p_i-p_j)^{m_s}$. The result is non-positive because $(1-x-y)^k\leq [(1-x)(1-y)]^k$ for $x,y\geq0$, so that summing Eqs.~(\ref{eq:app_occupancy_indicator_variance},\ref{eq:app_occupancy_covariance}) gives
\begin{align}
    \Var[U(m_s)] = \sum_i \Var[I_i] + \sum_{i\neq j}\Covar[I_i,I_j]
    \leq\sum_i(1-p_i)^{m_s}\left[1-(1-p_i)^{m_s}\right]
    \leq\Ebb[U(m_s)]\,,
    \label{eq:app_occupancy_variance_bound}
\end{align}
where in the last inequality we used $(1-p_i)^{m_s}\leq1$ together with Eqs.~(\ref{eq:app_occupancy_indicator},\ref{eq:app_expected_unique}). Equation~\eqref{eq:app_occupancy_variance_bound} holds for any coefficient distribution and is all we need, since Chebyshev's inequality then bounds the probability of departing from Eq.~\eqref{eq:app_expected_unique} by a fraction $\delta$ as
\begin{align}
    \Pr\left(\abs{U(m_s)-\Ebb[U(m_s)]}>\delta\,\Ebb[U(m_s)]\right)
    \leq\frac{\Var[U(m_s)]}{\delta^2\,\Ebb[U(m_s)]^2}
    \leq\frac{1}{\delta^2\,m_p}\,.
    \label{eq:app_occupancy_concentration}
\end{align}
The number of stored Pauli strings therefore concentrates around Eq.~\eqref{eq:app_expected_unique}. Equation~\eqref{eq:app_occupancy_concentration} is expressed in terms of $m_p$, and substituting the relation between $m_p$ and $m_s$ turns it into a bound in terms of the number of raw draws, which we do for each of the two distributions of interest below.

\paragraph*{Peaked distributions.}
Take $p_i=Ci^{-\alpha}$ with $\alpha>1$. For small $p_i$ and large $m_s$ we have $(1-p_i)^{m_s}\simeq e^{-m_sp_i}$, so that Eq.~\eqref{eq:app_expected_unique} becomes
\begin{align}
    m_p\simeq\sum_{i=1}^{\infty}
    \left(1-e^{-Cm_s/i^\alpha}\right)\,.
    \label{eq:app_powerlaw_sum}
\end{align}
Approximating the sum by an integral and substituting $u=Cm_s/x^\alpha$ gives
\begin{align}
    m_p
    &\simeq\int_0^\infty
      \left(1-e^{-Cm_s/x^\alpha}\right)dx\nonumber\\
    &=\frac{(Cm_s)^{1/\alpha}}{\alpha}
      \int_0^\infty\left(1-e^{-u}\right)u^{-(1+1/\alpha)}du\nonumber\\
    &=(Cm_s)^{1/\alpha}
      \Gamma\!\left(1-\frac{1}{\alpha}\right)\,,
    \label{eq:app_powerlaw_integral}
\end{align}
so that inverting Eq.~\eqref{eq:app_powerlaw_integral} gives
\begin{align}
    m_s\simeq\frac{1}{C}
    \left[\frac{m_p}{\Gamma\!\left(1-1/\alpha\right)}\right]^{\alpha}
    =\Theta\!\left(m_p^{\alpha}\right)\,.
    \label{eq:app_powerlaw_inversion}
\end{align}
For $\alpha=2$ we have $\Gamma(1/2)=\sqrt{\pi}$, so that $m_p\simeq\sqrt{\pi Cm_s}$ and Eq.~\eqref{eq:app_powerlaw_inversion} reads $m_s\simeq m_p^2/(\pi C)$, which together with Eq.~\eqref{eq:app_occupancy_concentration} is Eq.~\eqref{eq:app_effective_draws_powerlaw}. A variance proportional to $1/m_s$ becomes proportional to $1/m_p^2$ and the estimator standard deviation falls as $1/m_p$. This is the mechanism consistent with the intermediate-depth curve in Fig.~\ref{fig:memory_utilization}. Nothing in this derivation is special to $\alpha=2$, and Eq.~\eqref{eq:app_powerlaw_inversion} together with Eq.~\eqref{eq:app_occupancy_concentration} gives a deviation probability $\order{1/(\delta^2m_s^{1/\alpha})}$ for any $\alpha>1$.

For this distribution the general bound of Eq.~\eqref{eq:app_occupancy_variance_bound} can be sharpened into an estimate. Using $(1-p_i)^{m_s}\simeq e^{-m_sp_i}$ in Eq.~\eqref{eq:app_occupancy_indicator_variance} and approximating the sum by an integral as above,
\begin{align}
    \Var[U(m_s)]
    &\simeq\sum_i\left(e^{-m_sp_i}-e^{-2m_sp_i}\right)
    \simeq\int_0^\infty\left(e^{-Cm_s/x^\alpha}-e^{-2Cm_s/x^\alpha}\right)dx\nonumber\\
    &=\Gamma\!\left(1-\frac{1}{\alpha}\right)
      \left(2^{1/\alpha}-1\right)(Cm_s)^{1/\alpha}
    =\left(2^{1/\alpha}-1\right)\Ebb[U(m_s)]\,,
    \label{eq:app_occupancy_variance_powerlaw}
\end{align}
where in the last equality we used Eq.~\eqref{eq:app_powerlaw_integral}. This is exact to leading order in the Poissonized occupancy model. For a fixed number of draws the covariance contribution of Eq.~\eqref{eq:app_occupancy_covariance} is subleading, scaling as $\order{m_s^{2/\alpha-1}}=o(m_s^{1/\alpha})$, and therefore does not alter the leading result.

\paragraph*{Uniform distributions.}
If the distribution is uniform over $N$ Pauli strings, then Eq.~\eqref{eq:app_expected_unique} gives
\begin{align}
    m_p=N\left[1-\left(1-\frac{1}{N}\right)^{m_s}\right]\,.
    \label{eq:app_uniform_occupancy}
\end{align}
Writing $m_p=cN$ with $0<c<1$ held fixed and using $\log(1-1/N)\simeq-1/N$ gives
\begin{align}
    m_s\simeq-N\log\left(1-c\right)
    =\frac{-\log\left(1-c\right)}{c}\,m_p\,,
    \label{eq:app_uniform_inversion}
\end{align}
which together with Eq.~\eqref{eq:app_occupancy_concentration} is Eq.~\eqref{eq:app_effective_draws_uniform}. The ordinary $1/\sqrt{m_p}$ standard-error scaling therefore returns once the coefficient distribution becomes flat. Note that Eq.~\eqref{eq:app_occupancy_concentration} applies here unchanged, since Eq.~\eqref{eq:app_occupancy_variance_bound} was derived without any assumption on the distribution, so the two regimes concentrate equally well and differ only in the exponent relating $m_s$ to $m_p$.

The same memory therefore encodes many more effective draws in the peaked regime than in the uniform one, which is why merging helps strongly at intermediate depth and much less at late depth.
\end{proof}

\section{Additional Monte Carlo tools}
\label{app:additional_tools}

\subsection{Probabilistic truncation}

In an ordinary breadth-first calculation, a deterministic coefficient threshold
discards every term below a chosen magnitude. A local unbiased alternative is
often called probabilistic truncation or Russian roulette. Let
\begin{equation}
    O=\sum_Pc_PP,
    \qquad c_P\in\mathbb C,
\end{equation}
and choose a stochastic threshold $b>0$. Coefficients above $b$ are kept
exactly. A smaller coefficient is either removed or promoted to magnitude $b$,
\begin{equation}
    \widehat c_P=
    \begin{cases}
        c_P,
        &\abs{c_P}\geq b,\\
        \sign(c_P)b,
        &\text{with probability }\abs{c_P}/b,\\
        0,
        &\text{with probability }1-\abs{c_P}/b,
    \end{cases}
    \label{eq:app_probabilistic_truncation}
\end{equation}
where the last two lines apply when $0<\abs{c_P}<b$ and
$\sign(c_P)=c_P/\abs{c_P}$. Zero coefficients remain zero.

The reweighting is essential. For a sub-threshold term,
\begin{equation}
    \Ebb[\widehat c_P]
    =\frac{\abs{c_P}}{b}
      \sign(c_P)b
    =c_P.
    \label{eq:app_probabilistic_truncation_mean}
\end{equation}
The coefficient-level variance is
\begin{equation}
    \Ebb\!\left[\abs{\widehat c_P-c_P}^2\right]
    =
    \begin{cases}
        0,&\abs{c_P}\geq b,\\
        \abs{c_P}(b-\abs{c_P}),&0<\abs{c_P}<b.
    \end{cases}
    \label{eq:app_probabilistic_truncation_variance}
\end{equation}
The one-norm is conserved in expectation, but the surviving small coefficients
are larger than before. The one-norm is consequently spread over fewer terms,
and the two-norm of a typical realization increases. Repeating the rule through
many layers therefore accumulates variance even though it does not accumulate
bias.

One can reduce this variance by adding a lower deterministic threshold or by
retaining the original coefficient rather than promoting it to $b$. Either
choice is biased. It can still be sensible if the resulting bias is below the
target precision, but the error will no longer disappear with more
repetitions.

The connection to semi-deterministic comb is close. Both rules leave a
head untouched and randomize a tail. Probabilistic truncation acts on each tail
coefficient independently and does not guarantee the number of stored Pauli
strings. Semi-deterministic comb sampling correlates the decisions globally so
that the tail stays close to a prescribed number of stored Pauli strings. When
every tail coefficient can
receive only zero or one comb tooth, the coefficient update is essentially the
same as Eq.~\eqref{eq:app_probabilistic_truncation}. The difference is the
control over the number of stored Pauli strings supplied by the comb.

The local rule remains attractive when a global cumulative sum is expensive or
when an occasional fluctuation in the number of stored Pauli strings is
harmless. When exceeding memory
would terminate the job, global resampling is the safer outer mechanism.

\subsection{Importance sampling}

One-norm sampling is agnostic about the final observable. It samples large transition
amplitudes because they are locally important, not because their descendants necessarily
have a large final overlap. If prior information identifies a subspace that is likely to
dominate the answer, an importance function can steer samples toward it without
introducing bias~\cite{robert2004monte}. A prime example is that of Pauli weight, where Ref.~\cite{angrisani2024classically}
has shown that expectation values in random quantum circuits are dominated by low Pauli
weight. Importance sampling remains unbiased while leading to strong variance reduction
if the prior is indeed correct. It can, however, also strongly increase variance if the
prior is misaligned.

A note on terminology: we will use wording such as ``biased sampling''. This does not mean
that the estimator becomes biased, but that we bias the transition probabilities and
appropriately re-weight the running coefficient to keep the estimator unbiased. This is an
unfortunate collision of terminology throughout the Monte Carlo literature.

Consider again an incoming Pauli $P_i$ with coefficient $c_i$ and a gate with transfer rule
\begin{equation}
    P_i \longmapsto \sum_j a_{ij} P_j .
    \label{eq:transfer-rule}
\end{equation}
Let $g(i,j) > 0$ be a biasing function. Define modified transition amplitudes and their
one-norm by
\begin{equation}
    \tilde{a}_{ij} = a_{ij} g(i,j), \qquad
    \tilde{A}_i = \sum_j |\tilde{a}_{ij}| .
    \label{eq:biased-amplitudes}
\end{equation}
The outgoing branch is drawn from
\begin{equation}
    q(j \mid i) = \frac{|\tilde{a}_{ij}|}{\tilde{A}_i} .
    \label{eq:biased-distribution}
\end{equation}
To compensate for the distorted probability, an incoming coefficient $c_i$ is updated to
\begin{equation}
    c'_j = c_i \sign(a_{ij}) \frac{\tilde{A}_i}{g(i,j)} .
    \label{eq:biased-update}
\end{equation}
The cancellation between $g$ and $1/g$ verifies unbiasedness directly,
\begin{align}
    \Ebb_q\!\left[c'_j P_j\right]
    &= \sum_j \frac{|a_{ij}| g(i,j)}{\tilde{A}_i} \,
       c_i \sign(a_{ij}) \frac{\tilde{A}_i}{g(i,j)} \, P_j \nonumber \\
    &= c_i \sum_j a_{ij} P_j ,
    \label{eq:biased-unbiased}
\end{align}
which holds for \textit{all} strictly positive choices of $g(i,j)$. Because the
re-weighting is local to each gate, unbiasedness persists throughout the circuit and
yields an unbiased estimator of the final Pauli sum.

\paragraph*{One-norm and uniform sampling.}
The trivial choice $g(i,j) = 1$ gives $\tilde{a}_{ij} = a_{ij}$ and
$\tilde{A}_i = A_i = \sum_j |a_{ij}|$, so that Eq.~\eqref{eq:biased-distribution} reduces
to plain one-norm sampling and Eq.~\eqref{eq:biased-update} to
$c'_j = c_i \sign(a_{ij}) A_i$. Conversely, the choice
$g(i,j) = 1/|a_{ij}|$ gives $\tilde{a}_{ij} = \sign(a_{ij})$ and
$\tilde{A}_i = n_i$, where $n_i$ is the number of branches, i.e., the number of $P_j$ with
$a_{ij} \neq 0$. The sampler then draws uniformly, $q(j \mid i) = 1/n_i$, and the update
becomes $c'_j = n_i \, a_{ij} \, c_i$. One-norm sampling is the variance-optimal choice of
biasing function~\cite{robert2004monte}. Importantly, this is a \textit{local} optimality
statement for one gate. It neither implies that this is the variance-optimal estimator for
the coefficient of a Pauli string after two gates or the entire circuit, nor does it imply
that it is the variance-optimal estimator for expectation values of the form
$\Tr[\rho P_j]$ even after a single gate.

\paragraph*{Biasing by Pauli weight.}
A relevant example is importance sampling by the \textit{Pauli weight} $|P|$, i.e., the
number of non-identity Paulis in a Pauli string. Low-weight Pauli strings typically exhibit
higher overlaps with generic states, so it may reduce variance to guide the sampler away
from commonly populating high-weight Pauli strings. Let
$\delta_{ij} = |P_j| - |P_i|$ be the change in Pauli weight incurred by the gate; we avoid $\Delta$ here because it denotes the comb spacing throughout Appendix~\ref{app:resampling_comparison}. We can
then penalize transitions that raise the weight,
\begin{equation}
    g(i,j) = \exp\!\left(-\gamma \max\!\left(\delta_{ij},\, 0\right)\right) ,
    \label{eq:weight-biasing}
\end{equation}
where $\gamma \geq 0$ is a tunable hyperparameter controlling the strength of the sampling
bias, and $\gamma = 0$ recovers one-norm sampling. We decorate the symbol because $\gamma$ denotes the growth rate of the one norm throughout Appendix~\ref{app:error_accumulation}. Because the biasing function enters
Eq.~\eqref{eq:biased-amplitudes} multiplicatively, priors can be combined at will. For
instance, biasing a uniform sampler by Pauli weight corresponds to
\begin{equation}
    g(i,j) = \frac{1}{|a_{ij}|}
        \exp\!\left(-\gamma\: \max\!\left(\delta_{ij},\, 0\right)\right) .
\end{equation}

\begin{figure}
    \centering
    \includegraphics[width=0.75\linewidth]{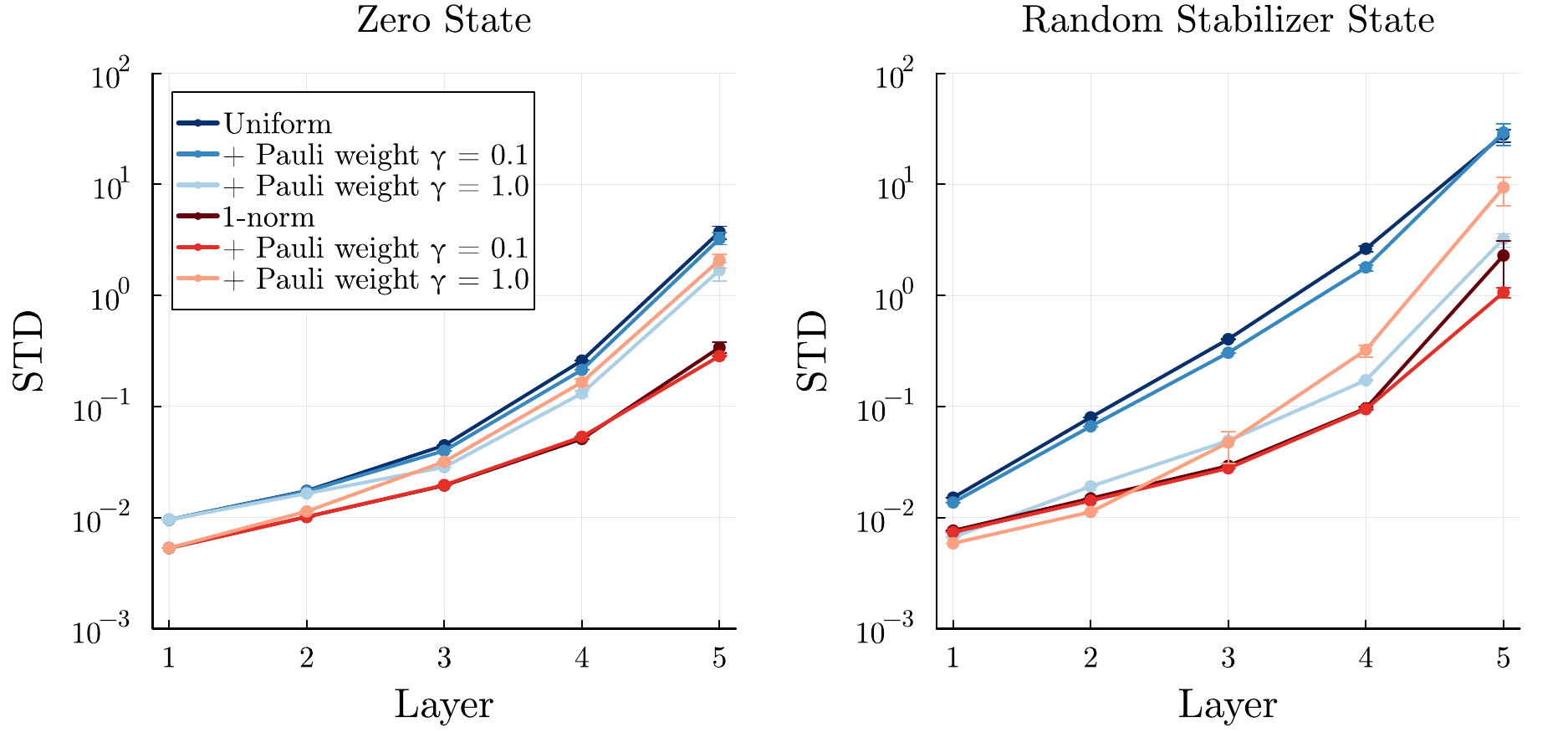}
    \caption{\textbf{Importance sampling the path sampling approach by Pauli weight.} \textbf{Left:} Importance sampling by Pauli weight for a $6\times5$ TFIM Trotter circuit with the all-zero initial state. \textbf{Right:} The same circuit and observable but with a random stabilizer initial state. We see that importance sampling can be beneficial, but it can quickly become detrimental.
    }
    \label{fig:importance_sampling}
\end{figure}

We numerically demonstrate the effect of importance sampling for the $6\times5$ TFIM Trotter circuit and the single Pauli-Z observable used throughout this work. Fig.~\ref{fig:importance_sampling} compares uniform sampling and 1-norm sampling, and each of those additionally biased by Pauli weight for various $\gamma$, where every estimator averages over $R = 10^4$ paths. We use two different initial states, the all-zero state as before, and a random stabilizer state. The latter is chosen to be closer to the assumptions that yield theoretical guarantees for Pauli weight truncation~\cite{angrisani2024classically}. 

It becomes evident that biasing the sampling against high Pauli weight can significantly reduce the STD of the estimator compared to uniform sampling. This is a particularly stark effect for the random stabilizer initial state. At the same time, sampling by 1-norm of the transition amplitudes is again significantly better, and Pauli weight biasing on top of 1-norm sampling yields no significant improvements, even for the random initial state.

\paragraph*{Weight degeneracy.}
Large $\gamma$ makes paths that visit high Pauli weight very rare. If one of those
paths later returns to low weight or otherwise contributes to the observable,
its $1/g$ compensation can be enormous. A finite run may see none of these
events and report a deceptively small empirical variance. The estimator can
look biased until enough rare, high-weight samples appear, even though its exact
expectation is unbiased. This is the familiar weight-degeneracy problem in
sequential Monte Carlo~\cite{robert2004monte}. Importance strength should
therefore be increased gradually and checked against $\gamma=0$ and against
multiple independent batches. The same idea can be applied while resampling
the Pauli sum, but the danger of rare compensating weights remains.

\subsection{Nonlinear observables and independent replica}
\label{app:nonlinear_observables}

The one-norm estimators are constructed for quantities that depend linearly on
the propagated coefficients. Many physical quantities are nonlinear. Examples
include purities, fluctuations, squared expectation values, and
out-of-time-ordered correlators.
This section details the strategies mentioned in the main text.

\subsubsection{Squaring an estimator}

Let $X$ be a single-pass estimator of a scalar $\mu$ with
\begin{equation}
    \Ebb[X]=\mu,
    \qquad
    \Var[X]=\sigma^2.
    \label{eq:app_single_estimator_moments}
\end{equation}
Squaring the estimator does not give an unbiased estimate of $\mu^2$,
\begin{equation}
    \Ebb[X^2]=\mu^2+\sigma^2.
    \label{eq:app_squaring_bias}
\end{equation}
The extra term is positive and is exactly the estimator variance. Averaging
first reduces, but does not remove, the bias. If
\begin{equation}
    \overline X=\frac{1}{R}\sum_{r=1}^{R}X_r,
\end{equation}
then
\begin{align}
    \Ebb\left[\overline X^2\right]
    &=\mu^2+\frac{\sigma^2}{R}
    \label{eq:app_average_then_square}\\
    \Var\left[\overline X\right] &= \Ebb\left[\overline X^2\right] - \Ebb\left[\overline X\right]^2= \frac{\sigma^2}{R}
\end{align}
This squaring approach can have relatively low variance, but its systematic
upward shift is especially problematic when many squared observables are added.
Every term contributes a non-negative bias.

\subsubsection{Independent replica}

The dual-replica construction, familiar from projector Monte Carlo estimates of
reduced density matrices~\cite{overy2014unbiased}, replaces the square by a
product of independent estimates. Run two statistically independent groups of
propagations and form
\begin{equation}
    Q=\overline X^{(1)}\overline X^{(2)}.
    \label{eq:app_dual_replica_product}
\end{equation}
Independence gives
\begin{equation}
    \Ebb[Q]
    =\Ebb[\overline X^{(1)}]
     \Ebb[\overline X^{(2)}]
    =\mu^2.
    \label{eq:app_dual_replica_unbiased}
\end{equation}
If each replica mean contains $R$ passes and each single pass has variance
$\sigma^2$, then
\begin{align}
    \Var[Q]&= \Ebb\left[\left(\overline X^{(1)}\right)^2\left(\overline X^{(2)}\right)^2\right] - \Ebb[Q]^2 = \Ebb\left[\left(\overline X^{(1)}\right)^2\right]\Ebb\left[\left(\overline X^{(2)}\right)^2\right]- \Ebb[Q]^2\\
    &= \left(\Var\left[\overline X^{(1)}\right] +\Ebb\left[\overline X^{(1)}\right]^2\right)\left(\Var\left[\overline X^{(2)}\right] +\Ebb\left[\overline X^{(2)}\right]^2\right)-\Ebb[Q]^2\\
    &=\left(\frac{\sigma^2}{R} + \mu^2\right)^2 - \mu^4\\
    &=\frac{\sigma^4}{R^2}
     +\frac{2\mu^2\sigma^2}{R}.
    \label{eq:app_dual_replica_variance}
\end{align}
In the noise-dominated regime, where
$\mu^2\ll\sigma^2/R$, the standard deviation of $Q$ is approximately
$\sigma^2/R$. This has the same $1/R$ scaling as the bias in
Eq.~\eqref{eq:app_average_then_square}. In the signal-dominated regime, the
second term controls the error and the usual $1/\sqrt R$ behaviour returns,
\begin{equation}
    \STD[Q]
    \simeq\frac{\sqrt{2}\abs{\mu}\sigma}{\sqrt R}.
    \label{eq:app_replica_signal_regime}
\end{equation}

The replica estimator is unbiased but generally noisier than squaring one sample
mean. Its advantage is clearest for a sum of many squares, of which the $511$ Pauli expectation values entering the collision entropy are an example. The individual squaring
biases add with the same sign, whereas centred replica errors can partially
cancel. Higher powers are obtained in the same way by multiplying the
corresponding number of mutually independent replica. Independence must include
the random seed, the resampling offsets, and every randomized Pauli sum used in
the propagation.

Replica independence removes only the nonlinear sampling bias. If deterministic
truncation gives $\Ebb[X]=\mu+\beta$, then the product estimates
$(\mu+\beta)^2$, not $\mu^2$, where $\beta$ is the truncation bias. Convergence in the truncation threshold is therefore
still required.

\end{document}